\newcommand{\bcheckpsi}{\blue{\check \psi}}
\newcommand{\fancyD}{\red{D}}
\newcommand{\sqrtbg}{{d\mu_{\backg}}}%
\newcommand{\rsqrtbg}{}%
\newcommand{\gauge}[1]{ {#1}}
\newcommand{\hdherr}{O\left( |h|^2_{\backg}\, |\bcov h|_{\backg}\right)}
\newcommand{\herr}{O\left( |h|^3_{\backg}\right)}
\definecolor{orange}{RGB}{255,127,0}
\newcommand{\back}[1]{\red{\overline{#1}}} %
\newcommand{\backg}{{\overline{g}}} %
\newcommand{\bcov}{\back{D}} %
\newcommand{\btheta}{\back{\theta}} %
\newcommand{\scri}{\blue{\mycal I}}
\newcommand{\hhat}{\blue{\hat h}}
\newcommand{\cut}{\blue{\mathcal S}}
\newcommand{\bmu}{\back{\mu}}
\newcommand{\hbound}{\red{h}}
\newcommand{\fone}{\red{\phi}}
\newcommand{\mcP}{{\bluec{\mycal{P}}}}
\newcommand{\Hh}{\red{H^0}}
\newcommand{\htheta}{\red{\hat\theta}}
\newcommand{\hmu}{\red{\hat \mu}}
\newcommand{\wasxone}{\red{x^n}}
\newcommand{\uHamM}{\underline{\HamM}}
\newcommand{\mcL}{\blue{\mycal L}}
\newcommand{\bfour}{\red{\mathring{\fourg}}} 
\newcommand{\gfour}{\red{\mathbf{g}}} 
\newcommand{\maxgraph}{\red{\red{f}}} 
\newcommand{\bmaxgraph}{\red{\mybar{\maxgraph}}} 
\newcommand{\mybar}{\breve}
\newcommand{\mybarmcL}{\blue{\mybar{\mathcal{L}}}}
\newcommand{\mycdot}{\red{\cdot}}
\newcommand{\myy}{\red{y}}
\newcommand{\mcN}{\blue{\mycal{N}}}
\newcommand{\HamM}{\red{H}}
\newcommand{\zP}{{\mathring P}}
\newcommand{\myf}{\red{v}} 
\newcommand{\hSchw}{\red{\hat S}}
\newcommand{\myj}{\red{j}}
\newcommand{\ourU}{\mathbb U}
\newcommand{\zD}{\mathring{D}}%
\newcommand{\trg}{{\mathrm{tr}_g}}
\newcommand{\rwn}[1]{\mnote{ {\bf Raphaela:} #1}}
\newcommand{\rwc}[1]{\mnote{ {\bf RW check:} #1}}
\newcommand{\bv}{\blue{ v }}
\newcommand{\mcE}{\mycal E}
\newcommand{\mcH}{\mycal H}
\newcommand{\mTB}{\blue{m_{\text{TB}}}}
\newcommand{\mU}{\blue{\breve{U}}}
\newcommand{\mmU}{\blue{\mathring{U}}}
\newcommand{\hodge}{\mathop{}\!\ast}
\newcommand{\myomega}{\red{\Omega}}
\newcommand{\myr}{\red{\bar r}}
\newcommand{\tila}{\red{\tilde a}}
\newcommand{\tilb}{\red{\tilde b}}
\renewcommand{\ptcheck}[1]{}
\renewcommand{\ptcn}[1]{}
\renewcommand{\tqn}[1]{}
\renewcommand{\rwn}[1]{}
\renewcommand{\rwc}[1]{}
\renewcommand{\red}[1]{#1}
\renewcommand{\bluec}{}
\begin{document}

\title{Cauchy problems for Einstein equations in three-dimensional spacetimes}
\author{Piotr T. Chru\'{s}ciel\thanks{University of Vienna, Faculty of Physics} \thanks{
{\sc Email} \protect\url{piotr.chrusciel@univie.ac.at}, {\sc URL} \protect\url{homepage.univie.ac.at/piotr.chrusciel}}
\\Wan Cong$^*$\thanks{
{\sc Email} \protect\url{wan.cong@univie.ac.at}
}
\\
{Th\'eophile Qu\'eau}\thanks{Université Paris-Saclay, ENS Paris-Saclay, DER de Physique, Gif-sur-Yvette, France}
\thanks{
{\sc Email} \protect\url{theophile.queau@ens-paris-saclay.fr}}
\\
{Raphaela Wutte}\thanks{Department of Physics and Beyond: Center for Fundamental Concepts in Science, Arizona State University}
\thanks{
{\sc Email} \protect\url{rwutte@hep.itp.tuwien.ac.at} {} 
}
}

	\maketitle

\begin{abstract} 
We analyze existence  and   properties  of solutions of two-dimensional general relativistic initial data sets with a negative cosmological constant, both on spacelike and characteristic surfaces. A new family of such vacuum spacelike data parameterised by    poles at the conformal boundary at infinity is constructed.
 We review the notions of global Hamiltonian charges, emphasizing the difficulties arising in this dimension, both in a spacelike and characteristic setting.
One or two, depending upon the topology, lower bounds for energy in terms of  angular momentum, linear momentum, and center of mass are established.
\end{abstract}

	\tableofcontents

\section{Introduction}

Most of the literature on the general relativistic constraint equations assumes from the outset that the space-dimension $n$ is larger than or equal to three (see, e.g., \cite{BartnikIsenberg,CarlottoLRR} and references therein). This is presumably based on the preconception, that all three-dimensional vacuum spacetimes are locally isometric, so that no new insights into the theory can be gained when $n=2$. Another reason might be,  that the conformal transformation properties in space-dimension two do not follow the same pattern as these with $n\ge 3$. Since the conformal method remains the best tool to construct general classes of spacetimes, $n=2$ does not fit into the flow of the arguments.

Now, while three-dimensional vacuum spacetimes are all indeed locally isometric, spacetimes with matter are not. Moreover, the global structure matters even in vacuum, and deserves to be understood. Finally, the total mass in two space-dimensions has properties completely different from these of the higher-dimensional models~\cite{ChWutte}, with a behavior which is perplexing enough to deserve further investigation.

For all these reasons we have undertaken to analyze the usual  constructions of  general relativistic initial data sets with a negative cosmological constant in space-dimension two. The results are presented here.
It must be admitted that the bottom line is somewhat disappointing: those known methods which we  looked at work without further due when $n=2$.  
But, while this work has mainly a review character, dotting  previously-undotted i's at some places, we also present here a seemingly new result, namely the existence of a family of solutions of two-dimensional vacuum constraint equations,   and hence of three-dimensional spacetimes, with non-conformally-smooth behavior at infinity, with controlled asymptotics, and with finite total energy. One hence obtains an interesting extension of the usual phase space of $(2+1)$-dimensional gravitational initial data sets.  
Another new result in this work is the proof that a unique definition of mass can be obtained by minimization for asymptotically locally hyperbolic initial data on complete manifolds with matter fields satisfying the dominant energy condition. This is done by proving a lower bound for the mass using the Witten positivity argument. While this result is essentially already contained in \cite{SkenderisCheng,ChHerzlich,CMT}, the ambiguities related both to the definition of mass and to the existence of inequivalent spin structures do not seem to have  been   
neither addressed (ambiguities) nor exploited (inequivalent spin structures) 
in previous works.  
 
This paper is organized as follows: In Section \ref{s11VII24.1} we review the   general form of metrics  with a smooth conformal compactification at infinity, either $(2+1)$-dimensional Lorentzian and vacuum near the conformal boundary,  or $2$-dimensional Riemannian with constant scalar curvature near the conformal boundary.
In Section~\ref{s24VII24.1} we pass to an analysis of the constraint equations.  
 After some preliminary results in Sections~\ref{s22VIII24.1} and~\ref{ss11VIII24.1}, in Section~\ref{ss24VII24.1} we point out that Airy functions can be used to obtain symmetric two-tensors with vanishing divergence. In Section~\ref{s29VII24.1} we recall that constant mean curvature (CMC) solutions of the vacuum vector constraint equation can be parameterized by holomorphic functions. In Section~\ref{ss11VIII24.8} we point out that meromorphic functions with  poles on the conformal boundary lead to a  class of  solutions, with finite energy, which does not seem to have been noticed so far.
  In Section~\ref{ss11VIII24.2} we analyze the regularity at the conformal boundary of general relativistic initial data constructed by the conformal method. 
 In Section~\ref{s2X24.1} we point out a formula, essentially due to Uhlenbeck~\cite{UhlenbeckHyperbolic},
  for the explicit 
  form of solutions of the vacuum Cauchy problem in a neighborhood of the initial data surface. 
  We expect that many of these metrics provide explicit examples of dynamical black holes. 
In Section~\ref{s14VIII24.1} we review the notion of angular momentum and mass
for asymptotically locally hyperbolic initial data sets. 
We point out ambiguities in the definition of total mass  resulting from asymptotic symmetries, and show that these can be resolved for complete initial data sets by minimization. 
This is done by proving  lower bounds for the total mass for  a class of small perturbations of Anti-de Sitter spacetime in Section~\ref{30VII24},  
and for general asymptotically locally hyperbolic  data on complete manifolds with matter fields satisfying the dominant energy condition in Section~\ref{ss19IX24}.   
 In Section~\ref{s11VIII24.11} we discuss some specific matter models: in Section~\ref{ss11VIII24.12} Einstein-Maxwell initial data are considered, while  scalar fields are discussed in Section~\ref{11VIII24.21}. Existence of associated initial data is addressed in Section~\ref{11VII24.4}.
 In Section~\ref{s11VIII24.14} we point out that the known gluing results for spacelike initial data sets also apply to space-dimension $n=2$.
 In Section~\ref{s18VIII24.1} we review the characteristic Cauchy problem in Bondi coordinates, 
 point-out that the vacuum characteristic-gluing is all-radial-conservation laws,
 and note that the  difficulties with the definition of total mass, identical to those already encountered in the spacelike case, arise both for the Trautman-Bondi mass and for angular momentum at null infinity. 
  \ptcn{ add characteristic gluing for a scalar field; and Maxwell?}
  In Appendix~\ref{App11VIII24.2} a formula for transformation of poles between the Poincar\'e-disc model of hyperbolic space and the half-space model is worked-out. In Appendix~\ref{s26VII24.2} the problem of regularity at the conformal boundary at infinity of solutions of the Lichnerowicz equation is reviewed. In Appendix~\ref{s11VII24.1b} we establish the leading-order behavior of the solutions with poles at the conformal boundary and we provide a full formal asymptotic expansion of these solutions.  
  In Appendix~\ref{app10IX24.1} existence results for, and asymptotic behavior of,  maximal surfaces in  {3-dimensional} asymptotically locally AdS spacetimes are reviewed. 
  In Appendix~\ref{App2VIII24.1} an ODE result relevant for the asymptotics of the initial data sets with  poles-at-the-boundary  is established. 
  Explicit formulae for imaginary Killing spinors in various coordinate systems on ALH manifolds are given in Appendix~\ref{s24X24.1}.
    
We will mostly
 assume that $\Lambda =-1$  without explicitly saying so. 
While we are mainly interested in a negative cosmological constant here, in some rare cases,
 namely in  Sections~\ref{s2X24.1}  and \ref{s18VIII24.1}, a positive cosmological constant is also  considered.

\bigskip
{\noindent
\sc
Acknowledgements:}
 PTC acknowledges the hospitality of the Beijing Institute of Mathematical Sciences and  Applications in Huairou and  the Mathematisches Forschungsinstitut in Oberwolfach during part of work on this paper. 
 His
research was further supported in part by the NSF under Grant No. DMS-1928930
while he was in residence at the Simons Laufer Mathematical Sciences
Institute (formerly MSRI) in Berkeley during the Fall 2024 semester.
We are grateful to Ilka Agrikola, Bobby Beig, Lan-Hsuan Huang, Ines Kath, Rafe Mazzeo, Andre Neves, Nikolai Saveliev, Georg Stettinger, Erik Verlinde and Rudolph Zeidler for useful discussions or bibliographical advice. 
TQ and RW are grateful to the University of Vienna for hospitality.
 RW acknowledges support by the Heising-Simons Foundation under the “Observational Signatures of Quantum Gravity” collaboration grant 2021-2818 and the U.S. Department of Energy,
Office of High Energy Physics, under Award No.\ DE-SC0019470. 
 
\section{All conformally smooth spacetimes which are vacuum near the conformal boundary}
 \label{s11VII24.1}

A pseudo-Riemannian manifold $(\mcM,\fourg)$ will be said to be \emph{conformally smooth} if there exists a manifold with boundary $\tmcM$, an embedding $i:\mcM\to\tmcM$ of $\mcM$ into the interior of $\tmcM$, a defining function $\Omega$ for the boundary of $\tmcM$ (by definition:  $\Omega$ vanishes precisely on the boundary and has non-vanishing gradient there) and a metric field $\tfourg$ which extends smoothly, as a metric, across the boundary,  such that 
\begin{equation}\label{24VII24.1}
  \fourg = i^*(\Omega^{-2}\tfourg)
   \,.
\end{equation}
Equivalently, if one thinks of $\mcM$ as a subset of $\tmcM$, then the metric $\Omega^2\fourg$ extends smoothly, as a metric, to the boundary of $\tmcM$.  

It turns out that one can write explicitly all conformally smooth vacuum three-dimensional metrics~\cite{Banados:1998gg,Barnich:2010eb}. 
For this, consider three-dimensional metrics of the form
\begin{equation}
  \label{Feff1}
{\fourg } = \frac{d\rr^2}{\rr^2} +
  \underbrace{ g_{A B}(\rr, x^C) dx^A dx^B}_{=:g}
\end{equation}
%
with
\begin{equation}
  \label{Feff2}
  g(\rr, x^C) = \rr^2 \eta + O(1)\,,
\end{equation}
where $\eta$ is the Minkowski metric  $\eta = - dt^2 +d \varphi^2$ 
and $(x^A)=(t,\varphi)$.
The conformal completion is obtained by changing $r$ to a new coordinate $x=1/r$ and setting $\Omega=x$, with the conformal boundary at $x=0$.

 Barnich and Trossaert have shown~\cite{Barnich:2010eb} that three-dimensional Lorentzian
 metrics which are vacuum and have a smooth conformal completion as $r\to\infty$ can be written in the Ba\~nados form~\cite{Banados:1998gg}, for small $r\ne 0$,%
\footnote{The proof of this result makes use of the freedom of conformal transformations at the boundary at infinity. Similar expansions which do not exploit this freedom have been derived in~\cite{Skenderis:1999nb}.}
\begin{equation}
  \label{metrics}
  {\fourg }  = \frac{d \rr^2}{\rr^2}  - \left( \rr d x^+ - \frac{\mathcal{L}_-(x^-)}{\rr} d x^- \right)
  \left( \rr d x^- - \frac{\mathcal{L}_+(x^+)}{\rr} d x^+ \right)
  \,,
\end{equation}
where $x^\pm = t \pm \varphi$, 
with arbitrary functions $\mathcal{L}_\pm$. Here one needs
\begin{equation}\label{24IX23.1}
  \mathcal{L}_+
   \mathcal{L}_- \ne r^4
   \,,
\end{equation}
to avoid a vanishing determinant of the tensor field \eqref{metrics}.

If  the functions $\mathcal{L}_\pm$ are constants (or can be transformed to constants by a change of coordinates preserving the above form of the metric) one obtains the  Ba\~nados-Teitelboim-Zanelli black holes~\cite{Banados:1992wn}:
\begin{equation}
  \label{BTZmetric}
  {\fourg }  = \frac{d \rr^2}{\rr^2}  - \left( \rr d x^+ - \frac{M+J}{4\rr} d x^- \right)
  \left( \rr d x^- - \frac{M-J}{4\rr} d x^+ \right)
  \,,
\end{equation}
thus
\begin{equation}
  \label{changeparameters}
  \mathcal{L}_+ = \frac{M-J}{4}
  \,, \qquad
  \mathcal{L}_- = \frac{M+J}{4}
  \,.
\end{equation}

 Otherwise, in vacuum the metrics are expected to be singular (see~\cite{ChWutte} for some results on the time-symmetric case). Note that the character of the singularities
  cannot be probed with the curvature tensor, as all  these metrics are  locally isometric to the anti-de Sitter metric.

\subsection{Vacuum time-symmetric geometrically finite initial data sets}
 \label{s1XI23.1} 
 
Spacelike general relativistic initial data consist of a set $(M,g,K)$, where $(M,g)$ is an $n$-dimensional Riemannian manifold and $K$ is a symmetric two-tensor on $M$.   In vacuum $(\matter =0)$ and under time-symmetry ($K_{ij}=0$) the scalar constraint equation, namely
 $$
 R_g= 2 \matter + 2\Lambda + |K|^2 _g- (\tr_gK)^2
 \,,
 $$
 where $R_g$ is the   scalar curvature of the metric $g$, becomes the requirement that $(M,g)$ has Constant Scalar Curvature (CSC). 
 
In this work we will mostly be interested in the case $n=2$. Recall that a two-dimensional manifold is called \emph{geometrically finite} if it has finite Euler characteristic. Equivalently, for non-compact manifolds, $M$ is diffeomorphic to a compact manifold from which a finite number of points, say  $\{p_i\}_{i=1}^N$,  $N\ge 1$, has been removed. 
 
 \begin{Remark}
  \label{R25VII24.1}
 On such a manifold,  complete metrics with constant negative scalar curvature 
 can be constructed by solving the two-dimensional Yamabe equation,  i.e.
\begin{equation}
	 \label{26VII24.1}
	 2\Delta_{\mzg} \zu = R_g  - e^{2\zu}R_{\mzg}
 \,,
\end{equation}
with $\pg = e^{-2\zu}\mzg$,
as follows: Let $\mzg$ be any metric on $M$ which, in local 
 polar coordinates near each of the $p_i$'s equals
  \begin{equation}\label{25VII24.1}
    \frac{dr^2}{r^2} + r^2 d\varphi^2
    \,.
  \end{equation}
The analysis in~\cite{AndChDiss}, 
reviewed in Appendix~\ref{s26VII24.2},
applies to provide existence, uniqueness, and polyhomogeneity  at the conformal boundary  
of a conformal factor $\zu$ such that the metric $g= e^{\zu}\mzg$ has constant scalar curvature equal to $-2$ (cf.\ also~\cite{TroyanovHulin} for existence). Since $R_{\mzg}=-2$ near the conformal boundary, defined as 
$$\{0=x:=1/r\}
\,,
$$
 expanding both sides of \eqref{26VII24.1} 
 in terms of the functions $x^i\ln^jx$ and comparing terms one finds an expansion with  no log terms and only even powers of $x$: 
\begin{equation}\label{26VII24.1rfk}
  \zu = u_2(\varphi) x^2 +  \frac{1}{10} \left(2u_2(\varphi)^2-u_2''(\varphi)\right)x^4  +
   \ldots
   \,,
\end{equation}
where $u_2$ is an arbitrary function on $S^1$. 
One can now transform to the coordinates \eqref{26VII24.1ty} below to obtain an asymptotic expansion with a finite 
number of terms. 

Assuming geometric finiteness and completeness, it follows from Theorem~\ref{t26XII23.1} below and the results in~\cite{ChWutte} that there exists a coordinate transformation which transforms the function $u_2(\varphi)$ to a constant $m\ge 0$. 
\qed
\end{Remark}
 
 The classification of two-dimensional, non-compact, geometrically finite,  negatively curved CSC metrics is well understood; a pedagogical presentation can be found in~\cite{Borthwick}. We review these models, following~\cite{ChWutte}.

\emph{Elementary} hyperbolic manifolds are the hyperbolic space 
and the  (complete) manifold $\R\times S^1$ with the metric
\begin{equation}\label{26XII23.1}
 \frac{dr^2}{r^2}+ r^2 d\varphi ^2\,,
 \quad
  e^{i\lambda \varphi}\in S^1\,,
  \quad
  \lambda \in (0,\infty)
  \,.
\end{equation}
We will refer to this surface as the  \emph{hyperbolic trumpet} (compare~\cite{Hannam:2009ib}).
Rescaling $\varphi$ and $r$, without loss of generality one can assume $\lambda = 1$, but then the angle $\varphi$ will range over  $[0,2\pi/\lambda]$.
 
Given $r_0\in \R$, the region $r\le r_0$ with a metric   \eqref{26XII23.1} will be referred to as a \emph{hyperbolic cusp}, and the region $r\ge r_0$ will be called a \emph{hyperbolic end}. Further \emph{hyperbolic ends} are defined as 
the manifolds $[r_0,\infty)\times S^1$ with metrics of the form 
\begin{equation}\label{26XII23.2}
 \frac{dr^2}{r^2- \mc }+ r^2 d\varphi ^2\,,
 \quad
  e^{i\lambda \varphi}\in S^1\,,
  \quad
  \lambda \in (0,\infty)
  \,,
\end{equation}
where $\mc \in \R$, with $r_0 >0$, with $r \ge  \sqrt{\mc}$ when $\mc>0$. Note that a rescaling of $r$, $\mc$ and $\varphi$ leads again to $\lambda =1$.
When $\mc>0$ one can indeed allow $r_0= \sqrt{\mc}$ because we have
\begin{equation}
     \label{27VII23.9110}
      g =
       \frac{dr^2}{ r^2  - \mc }  + r^2 d\varphi^2
       =
       d u^2 + \mc \cosh^2(u) \, d\varphi^2
       \,,
        \quad
        \mbox{where}\  e^{i\varphi} \in S^1
        \,.
     \end{equation}
   This further  shows that in this case the submanifold $r=r_0$ minimizes length, hence forms a closed geodesic.

A \emph{funnel} is defined as $[r_0,\infty)\times S^1$ with a metric \eqref{26XII23.2} with $\mc >0$, $r_0 = \sqrt{\mc}$, and $\lambda \ge 1$.
Note that a rescaling of $r$, $\mc$ and $\varphi$ leads to $\lambda =1$.
The boundary $\{r=\sqrt{\mc}\}$ is the shortest closed geodesic within the funnel.  In the physics literature funnels are  known as \emph{non-rotating Ba\~nados-Teitelboim-Zanelli (BTZ) black holes};
more precisely, time-symmetric slices of non-rotating BTZ black holes.
The geodesic $\{r= \sqrt{\mc}\}$ is   referred to as \emph{event horizon}, or  \emph{apparent horizon}, or \emph{outermost apparent horizon}.  
 In the associated vacuum spacetime, the surface $r=\sqrt{\mc}$ becomes the bifurcation surface of a bifurcate Killing horizon.
 
A \emph{hyperbolic bridge} is defined as the doubling of a funnel across its minimal boundary; equivalently, this is the rightmost metric \eqref{27VII23.9110} defined on $\{u\in \R \,,\, e^{i\varphi}\in S^2\}$.

The fundamental result is (cf., e.g., \cite[Theorem~2.23]{Borthwick}):

\begin{theorem}
  \label{t26XII23.1}
Consider a complete non-compact two-dimensional hyperbolic manifold $(M,g)$ with finite Euler characteristic. Then either $(M,g)$ is hyperbolic space, or a hyperbolic trumpet, or it is the union of a compact set with a finite number of cusps and a finite number of funnels.
\end{theorem}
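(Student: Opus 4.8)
The natural route is via the uniformization of complete constant-curvature surfaces, so the plan is to first realize $(M,g)$ as a quotient $\mathbb{H}^2/\Gamma$, where $\Gamma$ is a discrete, torsion-free group of isometries of the hyperbolic plane (passing, if necessary, to the orientation double cover and keeping track of the deck involution). Since the normalization here is Gauss curvature $K\equiv -1$, completeness guarantees that the developing map is a global isometry onto $\mathbb{H}^2$, so that $M$ is genuinely such a quotient, and the hypothesis of finite Euler characteristic translates into $\Gamma$ being finitely generated. I would then invoke the classical fact that finitely generated Fuchsian groups are geometrically finite: $\Gamma$ admits a fundamental domain bounded by finitely many geodesics, equivalently its convex core $\mathcal{C}\subset M$ is a compact surface with geodesic boundary, to which finitely many cusps may be attached.

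Next I would analyze $M$ through its thick--thin decomposition, which in two dimensions is governed by the collar lemma and supplies an explicit Margulis constant $\epsilon$. The set where the injectivity radius drops below $\epsilon$ breaks into finitely many disjoint pieces, each of which is either an embedded collar around a simple closed geodesic or a horocyclic neighborhood associated to a conjugacy class of parabolic elements of $\Gamma$; finiteness of the number of pieces follows from finite generation. The complement, namely the $\epsilon$-thick part together with the compact part of the convex core, is then compact, again by geometric finiteness.

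To identify each end precisely with the model metrics \eqref{26XII23.1}--\eqref{27VII23.9110}, I would set up Fermi coordinates $(u,\varphi)$ on each annular end, using the geodesic flow normal to the bounding closed geodesic (for a funnel) or the Busemann horocycle foliation (for a cusp). Writing the metric as $du^2 + f(u,\varphi)^2\,d\varphi^2$, the curvature condition $K\equiv -1$ reduces to the Jacobi equation $\partial_u^2 f = f$, whose solutions are $f = a(\varphi)\cosh u + b(\varphi)\sinh u$. Because $K<0$ forbids focal points, this chart covers the entire end; the condition that $\{u=0\}$ be geodesic forces $b\equiv 0$, and a reparametrization of $\varphi$ makes $a$ constant, yielding exactly the funnel metric $du^2 + \mc\cosh^2(u)\,d\varphi^2$ of \eqref{27VII23.9110}, whereas a decaying solution $f\sim e^{-u}$ gives the finite-area cusp of \eqref{26XII23.1}. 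Completeness then dictates, end by end, whether one obtains a cusp (finite area) or a funnel (flaring, infinite area). Assembling the compact core with these finitely many standardized ends gives the stated decomposition, while the degenerate low-complexity cases are enumerated separately: $\Gamma$ trivial gives $\mathbb{H}^2$, and $\Gamma$ generated by a single parabolic gives the trumpet of \eqref{26XII23.1}, which has no closed-geodesic core and so is singled out.

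The principal obstacle is the geometric-finiteness step: deducing from finite Euler characteristic \emph{alone} that $\Gamma$ has a finite-sided fundamental domain, hence that the convex core is compact and only finitely many ends occur. This is where the collar/ODE analysis must be coupled with the global finite-generation input, since the collar lemma controls the short-geodesic and cusp regions but does not by itself preclude pathological accumulation of geometry. A secondary, more technical point is verifying that the Fermi/horocyclic foliation is globally well defined and that $f$ can genuinely be rendered $\varphi$-independent, i.e.\ that each end is rotationally symmetric up to isometry; this again rests on $K<0$ forcing the normal exponential map to be a diffeomorphism on the end.
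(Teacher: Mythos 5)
Your proposal is correct, but note that the paper offers no proof of this statement at all: it is recalled as a classical classification result with a citation to Borthwick's book (Theorem~2.23 there), whose proof is essentially the argument you reconstruct — uniformization to $\mathbb{H}^2/\Gamma$, finite Euler characteristic $\Rightarrow$ finitely generated $\Rightarrow$ geometrically finite Fuchsian group, truncated convex core plus standardized ends via the Jacobi-field normal form, with the elementary quotients (trivial and cyclic parabolic groups, giving hyperbolic space and the trumpet) listed separately. The only imprecision worth flagging is that the convex core itself is compact only in the absence of cusps; what is compact is the core truncated along horocycles, with the cusp neighborhoods attached as ends, which is how your own decomposition in fact proceeds.
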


\section{Spacelike general relativistic initial data}
 \label{s24VII24.1}

Some definitions are in order. 
	As already mentioned, a spacelike general relativistic initial data set   is a triple $(M,\pg,\pK)$, where $M$ a smooth $n$-dimensional manifold, $\pg$ a Riemannian metric
	on $M$, and $\pK$ a symmetric tensor field on $M$. These fields are moreover required to satisfy a set of constraint equations: 
	\bea\label{CstrVect}
	& D_j 
 \big(
  \iju \pK - \tr_\pg(\pK) \iju \pg
  \big) = J^i  \,,
	&
	\\
	\label{CstrScl}
	& R_{\pg} - \normsq \pK \pg + \big(
 \tr_{\pg}(\pK)
  \big)^{2} = 2\matter + 2\Lambda \,,
	&
	\eea
	with $(\matter,J^i)$ the mass density and current,
which vanish in the vacuum case. $D$ is the Levi-Civita covariant derivative of the metric $\pg$, and $R_\pg$ the Ricci scalar of $\pg$. Equation \eqref{CstrVect} is known as the vector constraint equation, while  \eqref{CstrScl} is the scalar constraint equation. The coefficient $2$ in front of $\matter$ is sometimes replaced by $1$ (or by $16 \pi$), in which case a coefficient $1/2$ (or $8\pi$) has to be inserted in front of $J^i$. A negative sign, which can often be found in front of $J^i$, can be achieved by changing $K$ to its negative. See~\cite{BartnikIsenberg,CarlottoLRR} for an exhaustive discussion, where however space-dimension two is mostly ignored.

It may be convenient, e.g.\ for gluing purposes, to have compactly supported tensors $K^{ij}$ at our disposal, or metrics which are compactly supported deformations of fiducial ones. We will therefore make remarks on this as we go along.

	A data set $(M,g,K)$ is said to be \emph{asymptotically locally hyperbolic} (short ALH), or is said  to have a \emph{smooth conformal completion at infinity},  if there exists a smooth defining function $\defining \in\Cinf {\ov M}$ such as $\defining ^2 \pg$ extends to a smooth  metric on $\ov M$ 
with
	\begin{equation*}
		\begin{cases}
			\defining  > 0 \textrm{ on } M \,,\\
			\defining |_{\partial M}=0 \,,\\
			|d\defining |_{\textit{\pg}}=1 \textrm{ on } \partial M\,.
		\end{cases}
	\end{equation*}
	Note that $\partial M$ represents the conformal boundary at infinity, and the sectional curvatures of $\pg$ approach $-1$ when $\partial M$ is approached.  
We will further require  that the tensor field $  K_{ij} dx^i dx^j$ extends smoothly across $\partial M$; this is the case for metrics of the Ba\~nados form \eqref{metrics}. 

\subsection{Vacuum time symmetric data}
 \label{s22VIII24.1}

The induced metric, together with the extrinsic curvature of a hypersurface in a vacuum spacetime, provide general relativistic
vacuum  initial data. 
Hence,  large families of conformally smooth examples can be obtained from data induced on hypersurfaces by the metrics \eqref{metrics}. In particular
one thus obtains the following family of constant-scalar-curvature metrics 
\begin{equation}
  g = r^{-2}  dr^2 + \big(
   r^2 + f_2(\varphi)   +  \frac{f_2(\varphi)^2}{4 r^2}
   \big)
    d\varphi^2 
     \label{26VII24.1ty}
  \,,
 \end{equation}
 where $f_2$ is an arbitrary function of $\varphi$. This, together with  $K_{ij}\equiv 0$, provides 
 a large family of conformally smooth initial data which are vacuum and time symmetric near the conformal boundary at infinity. 
  
Given a point $\varphi_0$ on $S^1$, the   function $f_2$ can always  be transformed, 
 by a coordinate transformation preserving this form of the metric,
 to \emph{any} desired function
 \emph{near  $\varphi_0$}, e.g. $f_2 =0$. The question, if and when this can be done globally,  is non-trivial; see~\cite{ChWutte} and references therein. More on this in Section~\ref{subsec:asym} below.
 
 One can now use compactly supported  deformations of  the hypersurface $\{t=0\}$ in the associated spacetime to obtain vacuum initial data with, if desired, a compactly supported extrinsic curvature tensor.

\subsection{The vector constraint equation and conformal transformations}
 \label{ss11VIII24.1}

We will use the conformal method to study properties of
asymptotically hyperbolic (AH)
solutions 
 of the constraint equations \eqref{CstrVect}-\eqref{CstrScl}. This is particularly natural in space-dimension $n=2$ in view of the uniformization theorem (cf., e.g., \cite{TroyanovHulin,MazzeoTaylor,Borthwick}). 
Hence, we will look for $g$ of the form
	\bel{Scalingu}
		\ijd\pg = e^{-2u} \ijd\tg \,.
	\ee
	It will sometimes be convenient to work with the Euclidean
metric
$$
 \qg \equiv \delta
 \,,
$$ 
therefore we also  introduce the (possibly local) rescaling
	\bel{Scalingphi}
		\ijd\pg = e^{-2\qp} \ijd\qg 
\,,
 \qquad
 \qg_{ij} = \delta _{ij}\,.
	\ee
Unless explicitly indicated otherwise, we will assume that both $u$ and  $\qp$ are smooth.

In the conformal method one typically assumes
\be
D_i (\tr_{\pg} \pK) = 0 \,,
 \label{13VII24.1}
\ee
which will be sometimes done in what follows. 
Under \eqref{13VII24.1}
the analysis of the vacuum vector constraint equation, in space-dimension $n$,
 reduces to the  study of the trace-free part of the extrinsic curvature tensor $\pK^{ij}$,
\begin{equation}\label{TraceFreeL}
	\iju \pL = \iju \pK - \frac{\tr\pK}{n} \iju \pg \,,
\end{equation}
so that
\begin{equation*}
	D_i \iju \pK = 0
\quad
 \Longleftrightarrow \quad D_i \iju \pL = 0 \,.
\end{equation*}
Let us recall the formula for  the behavior of the divergence of a tensor under conformal rescalings.
Regardless of the metric $\tg$, under the rescaling $\ijd\pg = e^{-2u} \ijd\tg $
the Christoffel symbols transform as 
\begin{equation}\label{ChristoffelTransfo}
	{\tilde{\Gamma}}^i_{jk} = \Gamma^i_{jk} + \delta^i_j u_{,k} + \delta^i_k u_{,j} - g^{il}g_{jk}u_{,l} 
 \,.
\end{equation}
Hence, in dimension $n$, for a $\qg$-traceless symmetric tensor $\tL^{ij}$,
\begin{equation*}
	\begin{split}
		\tD_i \tL^{ij} & = D_i \tL^{ij} + n u_{,k} \tL^{kj} + ( \delta^j_i u_{,k} + \delta^j_k u_{,i} - \pg^{jl}\pg_{ik}u_{,l} )\tL^{ik} \\
		& = D_i \tL^{ij} + (n+2) u_{,k} \tL^{kj} \\
		& = D_i (e^{(n+2)u} \tL^{ij}) e^{-(n+2)u} \,. 
	\end{split}
\end{equation*}
For $n=2$ one obtains 
%
\begin{equation}\label{24VII24.31nb}
D_i L^{ij} = 0
 \quad
  \Longleftrightarrow 
   \quad
   \tD_i \tL^{ij} = 0
   \,,
\end{equation}
provided that 
\begin{equation}\label{LTransfo}
	\pL^{ij} = \tL^{ij} e^{4u} \,.
\end{equation}

There is an obvious corresponding formula with matter fields. Writing again
	%
	\bel{Scalingu5}
		\ijd\pg = e^{-2u} \ijd\tg \,.
	\ee
and setting  
\begin{equation}\label{LTransfo1}
	\pL^{ij} = \tL^{ij} e^{4 u} \,,
    \quad
 	J^{i} = \tilde J^{i} e^{-4 u} \,,
\end{equation}
we find
\begin{equation}\label{24VII24.31}
D_i L^{ij} = J^j 
 \quad
  \Longleftrightarrow 
   \quad
   \tD_i \tL^{ij} = \tilde J^j\,.
\end{equation}
\subsection{The vacuum vector constraint equation and Airy functions}
 \label{ss24VII24.1}
 
In two dimensions,  
symmetric tensors  which are divergence-free with respect to the flat metric 
can be obtained from an \emph{Airy function} $\alpha$. 
 Indeed, and in any dimension, given any function $\alpha$ the tensor field 
\begin{equation}\label{24VII24.11a}
k_{ij} = \delta_{ij} \Delta \alpha - \partial_i \partial_j \alpha 
\end{equation}
obviously satisfies
\begin{equation}\label{24VII24.12}
  \partial_ik^{ij} = 0
  \,,
\end{equation}
where the indices have been raised with the flat metric. Conversely, in two space dimensions 
any symmetric tensor field satisfying \eqref{24VII24.12} can be written in the form \eqref{24VII24.11a}.%
\footnote{We are grateful to Bobby Beig for pointing this out. 
Equation~\eqref{24VII24.11a} can be viewed as the two-dimensional version of the constructions in~\cite{ChBeigTT,ChBeig2020}.}

We have
\begin{equation}\label{24VII24.13}
  k^{ii}=0
  \qquad
  \Longleftrightarrow
  \qquad
  \Delta_\delta \alpha =0
  \,.
\end{equation}
Hence any harmonic function $\alpha$ provides a trace-free solution $\qL^{ij}$ of the vacuum vector constraint equation in the flat metric. 
Using \eqref{LTransfo} one can thus obtain all  trace-free and divergence-free tensors for any metric conformal to the flat metric,
e.g.\ the hyperbolic metric in the half-space model or in the Poincar\'e-disc model.

Note that nontrivial compactly supported divergence-free symmetric tensors can be obtained in this way, by taking $\alpha$ to be compactly supported. But no such tensors will 
have constant trace.
Indeed, it follows from \eqref{24VII24.11a}  that $\alpha - k^{ii}(x^2+y^2)/2$ would then be a compactly supported harmonic function, hence zero by  the maximum principle, or by Liouville's theorem, or by unique continuation in more general settings. This means that one cannot use the conformal covariance \eqref{24VII24.31} to obtain compactly supported \red{$TT$} tensors.

	\subsection{The vacuum vector constraint equation and holomorphic functions}
 \label{s29VII24.1}

We first consider open subsets of the Euclidean plane, writing as before
$$\qg_{ij}=\delta_{ij}
 \,.
 $$
Any $\qg$-traceless symmetric tensor takes the form
$$
\qL = \Lxx d\bx^2 + 2 \Lxy d\bx d\by - \Lxx d\by^2
\,.
$$
%
Using the Euclidean metric and Euclidean coordinates to raise and lower indices on $\qL$, we have
$\iju \qL = \ijd \qL$. The condition that $\qL$ be divergence-free reads
 \ptcheck{17VI}
\begin{equation*}
	\begin{cases}
		\partial_{\bx}\Lxx + \partial_{\by}\Lxy = 0 \\
		\partial_{\bx}\Lxy - \partial_{\by}\Lxx = 0
	\end{cases}
	\quad\Longleftrightarrow\quad (\partial_{\bx} + {\rm i}\partial_{\by})(\Lxy + {\rm i} \Lxx) = 0 \,.
\end{equation*}
Thus $\qL$ satisfies the vector constraint equation if and only if the function 
\begin{equation}\label{24VI24.11}
 \qf(\bx,\by) := \Lxy + {\rm i} \Lxx
\end{equation}
is holomorphic on its domain of definition. So holomorphic functions immediately provide a wealth of solutions of the vacuum
vector constraint equation.
We will say that $\qL$ derives from a holomorphic function $\qf$. 

The   calculation
 \ptcheck{17VI}
\begin{equation}\label{19IV24.p1}
	\qL = \Lxx(d\bx^2-d\by^2) + 2 \Lxy d\bx \, d\by = \frac 1{2{\rm i}} ( \qf dz^2 - \bar  \qf d\bar z ^2) =
 \Im (\qf dz^2)
	\,,
\end{equation}
relates $\qL$ to  so-called ``quadratic differentials''.

 In view of what has been said so far, we find: a traceless symmetric tensor $L^{ij}$ satisfies $D_i \pL^{ij} = 0$ in the metric $	\ijd\pg = e^{-2\qp}   \delta _{ij}  \quad\Longleftrightarrow\quad e^{-4\qp} \pL^{ij} $ derives from a holomorphic function $\qf$.

A tensor field $t^{ij}$  is called transverse if $D_i t^{ij}=0$; transverse-traceless ($TT$) if moreover its trace vanishes.

We have:

\begin{prop}
 \label{p24VII24.1}
 On non-compact two-dimensional manifolds 
there are no non-trivial compactly supported symmetric \red{$TT$ tensor}s.
\end{prop}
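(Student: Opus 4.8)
The plan is to reduce the statement to the vanishing of a holomorphic object, using the conformal correspondence established above, and then to close the argument with the identity theorem together with the connectedness and non-compactness of $M$. Throughout I assume $M$ connected and work with the conformal structure induced by $g$.

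First I would fix isothermal coordinates: every point of $M$ has a neighborhood on which $g_{ij}=e^{-2\phi}\delta_{ij}$ for some smooth $\phi$, so $(M,g)$ carries a natural Riemann-surface structure (passing to the orientation double cover if $M$ is non-orientable; a $TT$ tensor pulls back to a $TT$ tensor, and compact support is preserved). On each such chart the conformal covariance \eqref{24VII24.31nb}, together with the weight \eqref{LTransfo}, shows that $L^{ij}$ is transverse-traceless for $g$ if and only if $\tilde L^{ij}:=e^{-4\phi}L^{ij}$ is transverse-traceless for the flat metric $\delta$; indeed tracelessness is preserved since $g_{ij}L^{ij}=e^{2\phi}\delta_{ij}\tilde L^{ij}$. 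By \eqref{24VI24.11}--\eqref{19IV24.p1} this is in turn equivalent to holomorphy of the local function $f:=\tilde L^{xy}+\mathrm{i}\,\tilde L^{xx}$, and $\tilde L=\Im(f\,dz^2)$.

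Next I would verify that these local data assemble into a single holomorphic quadratic differential $Q$ on $M$ with $L=\Im Q$ in every chart. Under a holomorphic change of coordinate $z\mapsto w$ one has $dz^2=(dz/dw)^2\,dw^2$, and since two holomorphic quadratic differentials with equal imaginary part must coincide (their difference $g\,dw^2$ would be holomorphic with vanishing imaginary part, forcing $g\equiv 0$), the local coefficients necessarily transform by the factor $(dz/dw)^2$; hence they glue consistently. Because the conformal factors $e^{-4\phi}$ are smooth and nowhere vanishing, the support of $Q$ coincides with the support of $L$, and is therefore compact.

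Finally, since $M$ is connected and non-compact, the complement of a compact set contains a nonempty open subset of $M$ on which $Q$ vanishes identically. The identity theorem for holomorphic functions, propagated along chains of overlapping charts, then forces $Q\equiv 0$ on all of $M$, whence $L=\Im Q\equiv 0$. I expect the main obstacle to be precisely this globalization step: checking that the locally defined holomorphic functions patch into one global quadratic differential (equivalently, that vanishing on a single open set spreads across the whole surface by analytic continuation), together with the minor technical points of orientability and of the arbitrariness of the background metric $g$, for which the existence of isothermal coordinates is the essential input.
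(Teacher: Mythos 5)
Your proof is correct and takes essentially the same route as the paper's: pass to isothermal coordinates, identify the $TT$ tensor locally with a holomorphic function via $L=\Im(f\,dz^2)$, and propagate the vanishing outside the compact support over the whole connected manifold by unique continuation. The only difference is presentational: where the paper appeals to ``unique continuation for holomorphic functions in local coordinates and a simple covering argument,'' you make that step explicit by gluing the local data into a global holomorphic quadratic differential and by treating non-orientability through the orientation double cover.
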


\proof
In local coordinates in which the metric is conformally flat,  the tensor $\qL^{ij}$ vanishes if and only if $\qf$ vanishes.
On hyperbolic space our claim is simply the  statement that the only  holomorphic functions on the unit disc which are constant near the unit circle are constants. More generally, the result follows from unique continuation for holomorphic functions in local coordinates and a simple covering argument.
\qedskip

For further use we note
 \ptcheck{17VI}
\begin{equation}\label{19IV24.p4}
	\normsq{\qL}{\delta} = \delta^{ik}\delta^{jl}\qL_{ij}\qL_{kl} = 2\Lxx^2 + 2\Lxy^2 = 2|\qf|^2
	\,.
\end{equation}
%
%
From  \eqref{Scalingphi}, \eqref{LTransfo}, and \eqref{19IV24.p4} we conclude that
\begin{equation}\label{19IV24.p5}
	\normsq{\pL}{\pg} = 2 e^{4\qp} |\qf|^2
	\,.
\end{equation}

\subsection{Poles at the conformal boundary}
 \label{ss11VIII24.8}

\ptcn{symplectic form should be checked; what would be the poisson bracket in a space with such solutions?}

An interesting class of solutions arises from functions $\qf$ which are meromorphic near the unit disc with poles exactly on the unit circle.
Thus, given a finite number of points $e^{i\theta_j}$, $j=1,\ldots,N$, lying on the unit circle, together with
nonzero numbers $a_j\in \mathbb{C}$, and  a
function $f$ which is holomorphic in a neighborhood of the unit disc $D(0,1)$, we set
\begin{equation}\label{19IV24.p1c}
	\qL =  
 \Im (\qf dz^2)
\end{equation}
with 
\begin{equation}\label{11VII24.31}
  \qf(z) = f(z) 
   + \sum_{j=1}^{N}\frac{a_j}{z-e^{i\theta_j}} 
  \,.
\end{equation}
We show in Appendix~\ref{s11VII24.1b}
 that such tensors $\qL$ lead, through the conformal method, to initial data sets with a conformal completion where the conformal factor extends in $C^1$ across the conformal boundary (cf. Proposition~\ref{p2VIII24.1}). 
 
We check in  Section~\ref{ss2VII24.1b} below that  the finite-mass condition \eqref{11VIII24.41} is satisfied by the functions \eqref{11VII24.31}. 
 
 We provide a formal asymptotic expansion of the associated solutions of the Lichnerowicz equation near each pole in Appendix~\ref{s11VII24.1b} as well. For this, it is convenient to work in the half-plane model of hyperbolic space; a formula how poles transform when passing to the half-plane model can be found in Appendix~\ref{App11VIII24.2}.

\subsection{Matter currents and the conformal vector Laplacian}
 \label{ss24VII24.11b}
 
 Consider the vector constraint equation with sources, 
\begin{equation}\label{24VII24.31b}
D_i L^{ij} = J^j  
 \,.
\end{equation}
The standard way to solve this equation, usually in dimensions $n\ge 3$, appeals to the conformal vector Laplacian: One writes
\begin{equation}\label{24VII24.31c}
L^{ij} = D^i X^j + D^j X^i - \frac{2}{n} D^k X_k g^{ij}
 \,.
\end{equation}
The method can also be used in dimension $n=2$, leading to an elliptic equation for the vector field $X$: 
\begin{equation}\label{24VII24.31bb}
D_i
  \big(
   D^i X^j + D^j X^i -  D^k X_k g^{ij}
    \big) 
     = J^j  
 \,.
\end{equation}
Commuting derivatives and using $R_{ij} = R g_{ij}/2$, this can be rewritten as a Laplace equation
\begin{equation}\label{24VII24.31bbb}
D_i D^i X^j + \frac R 2 X^j = J^j  
 \,.
\end{equation}
Integration by parts shows  that when $R\le 0$,  the operator at the left-hand side has no kernel on fields which decay sufficiently fast as a conformal boundary at infinity is approached, or when solving the equation on a bounded set with zero Dirichlet data on the boundary. This further guarantees that there are no conformal Killing vector fields on negatively curved manifolds with appropriate asymptotic or boundary conditions, and finally implies unique solvability for compactly supported or rapidly decaying sources. 

	%
		\subsection{A ``finite-mass'' condition?}
 \label{ss2VII24.1b}

Typical physically-relevant solutions have finite mass. In space-dimensions $n~\ge~3$, the usual Hamiltonian analysis of the notion of mass (cf., e.g., \cite{CJL}) leads to the condition that the trace-free part $\pL$ of the extrinsic curvature satisfies  
\begin{equation}
	\int_{M }  \normsq\pL\pg \,d\mu_{\pg} < \infty
  \,.
   \label{11VIII24.41}
\end{equation}
Now, when $n=2$ the question of a well-defined Hamiltonian mass is less clear-cut~\cite{ChWutte};
we review this in Section~\ref{s14VIII24.1} below.
It seems nevertheless natural to impose this condition also in the two-dimensional case; we will do it in what follows.

Given the above let us assume, in vacuum,  the existence of a coordinate neighborhood $[r_1,1]\times S^1$   of  the conformal boundary, which we assume to be located at 
$\{1\}\times S^1$, on which the metric is conformally flat as in \eqref{Scalingphi}. Then, for \red{$TT$ tensor}s arising from a holomorphic function $\qf$ one finds
\begin{equation*}
	\int_{M }  \normsq\pL\pg \,d\mu_{\pg}
 = 
	\int_{M }   e^{2\qp} \normsq\qL\delta \,d\mu_{\delta}
 \ge  \int_{\red{[r_1,1]}\times S^1} 2 e^{2\qp} |\qf|^2 d\mu_{\delta}  
 \,.
\end{equation*}
In the unit-disc Poincar\'e model the conformal factor $e^{-\qp}$ behaves as $1/(1-r)$ for $r$ near to $1$, 
which leads to the condition  
\begin{equation*}
	 \int_{r_1}^{1} \oint (1-\br)^{2} \normsq\qL\delta  d\br d \varphi = 2 \int_{r_1}^{1} \oint (1-\br)^{2} |\qf|^2 d\br d \varphi < \infty \,.
\end{equation*}
\tqn{$(r,\varphi)$ coordinates as in \ref{s11VII24.1}}
This will be satisfied if
\begin{equation} \label{FiniteMassCond}
	|\qf| = \gO{{(1-\br)^{-\alpha}}}  
\textrm{ with } \alpha < 3/2 \,.
\end{equation}
In particular, for functions $\qf$ which are meromorphic in a $\mathbb{C}$-neighborhood of $S^1$, 
the condition allows $\qf$ to have poles of order one on the unit circle, but not higher. 
	%

%
\subsection{The Lichnerowicz equation}
 \label{ss2II25.1}

  \ptcn{there are imaginary    Killing spinors on cusps, to make it nontrivial needs matter models on cusps, check with Troyanov and Mazzeo Taylor?}
With the scaling \eqref{Scalingu}, the scalar curvature becomes  
 \ptcheck{28VII24, with Ergebnisse}
\begin{equation}
	R_{\tg} = e^{-2u} (R_{\pg} - 2\Delta_{\pg} u) \,,
\end{equation}
where
$$
\Delta_{g} v =\frac{1}{\sqrt{\det g}} \partial_k \left(\sqrt{\det g}\, g^{kl} \partial_l v \right)
$$
is the Beltrami-Laplace operator, which under conformal rescalings 
		$\ijd\pg = e^{-2u} \ijd\tg$ transforms as
\begin{equation}
	\Delta_{\pg} = e^{2u} \Delta_{\tg}\,.
\end{equation}
We set
\begin{equation}\label{TraceFreeLasdf}
	\iju \pL = \iju \pK - \frac{\tr\pK}{2} \iju \pg \,,
\end{equation}
without necessarily assuming that $\tr\pK$ is constant.
Then
\begin{equation}
	\normsq\pK\pg = \normsq\pL\pg + \frac{1}{2}(\tr_{\pg} \pK )^{2}  \,,
\end{equation}
which leads to the following form of the scalar constraint equation:
 \ptcheck{28VII24}
\begin{equation}\label{29VII24.3}
2\Delta_{\tg} u = -R_{\tg} + e^{2u} \normsq\tL\tg +  e^{-2u}
    \big(2 \Lambda_\tau + \matter)\,,
\end{equation}
with 
$$
 \Lambda_\tau = \Lambda - \frac{1}{4}(\tr_gK)^{2}  
  \,.
$$
To avoid ambiguities: $\tg$ is a seed metric, which can be prescribed arbitrarily, and the conformally rescaled metric $\pg$ is the initial data metric
which is required to satisfy the scalar constraint equation.

In what follows, unless explicitly specified otherwise we will assume that 
\begin{equation}\label{28VII.51}
 \Lambda_\tau   <0 
  \,.
\end{equation}

\subsection{Boundary regularity}
 \label{ss11VIII24.2}

We first consider the  vacuum case, $\matter=0=J^i  $. We assume that $\tr\pK$ approaches a constant
 as the conformal boundary at infinity, say $\{x=0\}$, is approached; 
 otherwise, the metric is unlikely to be locally asymptotically hyperbolic.  
 Redefining  $\Lambda$ as 
\begin{equation}\label{29VII24.1}
  \Lambda \mapsto \Lambda - \lim_{x\to0}\frac{(\tr\pK)^2}{4}
   < 0 
  \,,
\end{equation}
shifting $K_{ij}$ by its asymptotic trace, and rescaling the metric by a constant, we can without loss of generality assume that
\begin{equation}\label{28VII24.52}
  \Lambda=-2
  \,,
   \quad
   \tr\pK\to_{x\to0} 0
   \,. 
\end{equation}
Let us assume further that $R_{\tg}=-2$ near the conformal boundary (compare 
Remark~\ref{R25VII24.1}). Near every point at the conformal boundary, we can then find \emph{local} coordinates so that $\tg$ takes the
form \eqref{26VII24.1ty} with $f_2=0$  
\begin{equation}
 \tg = x^{-2} (dx^2 + dy^2)
 \,.
  \label{2VIII24.2}
 \end{equation}
Since $\tr K = O(x)$, we have  the following asymptotics
\begin{eqnarray}\label{29VII24.2} 
  \frac 14 ( \tr K)^2  &= &
    k_2(y) x^2 + k_3(y) x^3 + \ldots  
   \,.
\end{eqnarray}

 Let us further assume that  $\qL^{ij}$ extends smoothly to $\{x=0\}$ in these coordinates; recall that this will be the case for \red{$TT$ tensor}s $\qL$ arising from a function $\qf$ which is holomorphic near $\{x=0\}$, as in Section~\ref{s29VII24.1}. Then
\begin{eqnarray}\label{28VII24.53}
 \qL^{ij}  & = & O(1)
  \,,
 \\
 \tL^{ij}  & = & x^4 \qL^{ij} = O(x^4)
  \,,
 \\
 |\tL|_{\tg}^2  & = & x^{-4} \tL^{ij}\tL^{ij}
 \nonumber
 \\
&= &
  \tltwo_4(y) x^4
   +  \tltwo_5(y) x^5
   + \ldots 
   \,.
\end{eqnarray}
Appendix~\ref{s26VII24.2} guarantees existence, uniqueness, and polyhomogeneity of solutions of the associated solutions of constraint equations when $\tr K\equiv 0$. A similar result can be obtained by perturbation methods for 
small non-constant $\tr K$  which extend smoothly across the conformal boundary.

 Inserting the asymptotic expansions above in the Lichnerowicz equation \eqref{29VII24.3} one finds the following polyhomogeneous
 expansion of $u$, for small $x$:
 \begin{equation}\label{31VII24.1}
   u(x,y) = -\frac{k_2(y)}{3} x^2\log(x) + u_2(y) x^2
    -\frac{k_3(y)}{4} x^3+ 
     O\big(
      x^4\log^2(x)\big)
   \,,
 \end{equation}
with a function $u_2$ which is uniquely defined by $(M,\tilde g)$ up to asymptotic symmetries~\cite{ChWutte}.

If $k_2\equiv 0$, 
an identical calculation gives instead 
\begin{eqnarray}\label{29VII24.4} 
  \frac 14 ( \tr K)^2  &= &
    k_4(y) x^4 + k_5(y) x^5 + \ldots  
   \,.
\end{eqnarray}
It follows from \cite{AndChDiss} (compare Appendix~\ref{s26VII24.2} below) that  the solutions are then conformally smooth, 
with 
 \begin{equation}\label{31VII24.2}
   u(x,y) =  u_2(y) x^2
   +
   \frac{1}{20} \left(-2 k_4(y)+\ell_4(y)-2 u_2''(y)-4 u_2(y)^2\right)
   x^4
   + 
     O(
      x^5)
   \,.
 \end{equation}
In particular, in the CMC case the solutions are conformally smooth. 


\ptcn{SclCstr commented out, Function Spaces and following moved to an Appendix}

\subsection{Explicit solutions of the vacuum Cauchy problem}
 \label{s2X24.1}  

Let $(\Sigma,\gamma)$ be a two-dimensional Riemannian manifold and suppose that the Ricci scalar $R(\gamma)$ of $\gamma$ 
satisfies
\begin{equation}\label{23IX24.1a}
  R(\gamma) = -2 \sigma^2 +  |K|_\gamma^2 - (\tr_\gamma\! K)^2
  \,,
\end{equation} 
where  $\sigma>0$ is a constant, and where the symmetric tensor field $K_{ab}$ satisfies the two-dimensional vacuum vector constraint equation,
$$
D_a(K^{ab} - \tr_{\gamma}\! K \gamma^{ab}
)
 = 0
  \,.
$$
Then the metrics  (cf.\ e.g.~\cite{KrasnovSchlenker})
\begin{align}\label{23IX24.1xdb}
  g  =
     -dt^2
  + \gamma^{ab} 
  & \
  \big(
    \cos ( \sigma t) \gamma_{ac} +  \sigma^{-1} \sin ( \sigma  t)  K_{ac}
    \big)  
    \nonumber
\\
 & \quad   
      \times
      \big(
    \cos ( \sigma  t) \gamma_{bd}   +  \sigma^{-1} \sin ( \sigma  t)  K_{bd}
    \big)  
    dx^c dx^d
\end{align}
satisfy the vacuum  Einstein equations with a negative cosmological constant. The extrinsic curvature tensor of $\{t=0\}$ equals $K$.
The hypersurface $\{t=0\}\approx \Sigma$ has constant mean curvature if the $\gamma$-trace of $K$ is constant, vanishing if the trace of $K$ vanishes.

The tensor fields  \eqref{23IX24.1xdb}  are globally defined on $\R\times\Sigma$ but the signature always drops down at  the non-empty set of  spacetime points at which $ \cos ( \sigma t) \gamma_{ac} +  \sigma^{-1} \sin ( \sigma  t)  K_{ac}$ degenerates. The nature of the boundaries that so arise requires careful analysis, we will return to this elsewhere. 
When $K\equiv 0$ one obtains a subset of the Anti-de Sitter spacetime; cf., e.g., \cite[Chapter 5]{Griffiths:2009dfa}.

One can take $(\gamma,K)$ to be ALH,  
in which  case \eqref{23IX24.1xdb} provides an explicit  evolution of the vacuum data discussed in the previous sections. In such a case the  initial data manifold $(\Sigma,\gamma)$ has  a smooth conformal completion at infinity. 
Whether or not the spacetime obtained by evolution of the data $(\Sigma,\gamma,K)$ has such a completion requires further analysis.  
 
Configurations where $\gamma_{ab}$ has several ALH ends,  with non-zero $K_{ab}$ 
 vanishing at large distances, 
 contain compact marginally outer trapped surfaces~\cite{GSW2d,AEM}. Such surfaces are often accompanied by event horizons, and if so the associated spacetimes would   describe dynamical black holes. But note that the usual notion of event horizon requires a conformal completion at infinity, the existence of which will be analysed elsewhere.
\ptcn{global structure to understand?}

When  $\Sigma$ is compact one obtains an explicit form of the spacetimes considered in e.g.~\cite{MoncriefTeichmueller,Moncrief:2+1,Witten2+1}, but not in a CMC time-slicing.

A complete description of $(2+1)$-dimensional maximal globally hyperbolic vacuum spacetimes with a negative or vanishing cosmological constant can be found  in \cite{Mess,AnderssonEtAlOnMess,Barbot,BenedettiBonsante}, and  in \cite{Scannell} with positive $\Lambda$.

We finally note the positive-cosmological-constant counterpart of the metrics~\eqref{23IX24.1xdb}: 
\begin{eqnarray}
  g 
    & =  &
    -dt^2   + \gamma^{ab} 
  \big(
    \cosh(  \sigma t) \gamma_{ac} + \sigma^{-1} \sinh(  \sigma t)  K_{ac}
    \big) 
     \nonumber
\\
 & &
 \qquad 
     \times  \big(
    \cosh( \sigma t) \gamma_{bd} +  \sigma^{-1} \sinh( \sigma t)  K_{bd}
    \big) dx^c dx^d
    \,.
    \label{18IX24.1bc}
\end{eqnarray}
These  will  satisfy the Einstein equations  with a positive cosmological constant if $K$ satisfies again the vacuum vector constraint equation and if 
\begin{equation}\label{23IX24.1}
  R(\gamma) = 2 \sigma^2 +  |K|_\gamma^2 - (\tr_\gamma\! K)^2
 \,.
\end{equation}
%
%

The  metrics in this section are the Lorentzian counterpart of the Riemannian metrics written-down by Uhlenbeck in~\cite{UhlenbeckHyperbolic}.%
\footnote{We are grateful to Andre Neves for bringing this paper to our attention, and to Lan-Hsuan Huang for sharing her notes on the Uhlenbeck metrics.}
 The fact that they solve the Einstein equations follow from the calculations there, together with a straightforward variation of the usual constraint-propagations argument.

 \section{Angular momentum and mass}
\label{s14VIII24.1}

\renewcommand{\hyp}{\mycal{S}}

Global invariants such as energy, momentum, and angular momentum, provide useful information about three-dimensional general relativistic initial data sets. For example, suitably regular, say vacuum, asymptotically hyperbolic initial data sets   with vanishing \textit{mass} can be embedded in Anti-de Sitter spacetime. Another example is provided by the statement that stationary, again suitably regular, asymptotically flat vacuum spacetimes with vanishing \textit{angular momentum} are static. Therefore it is of interest to reexamine these invariants in the context of two-dimensional initial data sets. 
 
Consider then an $(n+1)$-dimensional spacetime containing a spacelike hypersurface $\hyp$. On $\hyp$ consider the collection of general relativistic initial 
data sets $(g,K)$ which asymptote to background fields $(\mathring g, \mathring K)$ induced by a spacetime background metric $\mathring \fourg$. Let $\mathring X$ be a Killing vector field of $\mathring \fourg$, let $\mathring Y $ denote the part of $\mathring X$ tangent to $\hyp$, as decomposed with respect to $\mathring \fourg$, and let   $\mathring V$ denote its component along the $\mathring\fourg$-unit normal to $\hyp$. 
Suppose that the spacetime contains a region
$\mcM_{ext} \subset \mcM$ together with a diffeomorphism
\begin{equation}
  \label{Nman}\Phi^{-1}:\mcM_{ext} \to [R_0,\infty)\times N^{n-1}
   \,,
\end{equation}
for some $R_0\in\R$, 
where $N^{n-1}$ is a compact $(n-1)$-dimensional manifold.
It was shown in \cite{CJL} that
 the Hamiltonian charge associated with the flow of $\mathring X$ equals 
\begin{equation}
    H (\red{\mathring V},\red{\mathring Y}):=
  c_n
    \lim_{R\to\infty} \int_{r\circ \Phi^{-1}=R}
\left(\ourU^i(\red{\mathring V}
 )
  +  {\mathbb V}^i(
  \red{\mathring Y} 
)\right) dS_i
\,,
 \label{17VIII24.1}
\end{equation}
with $dS_i$ being the hypersurface form $\partial_i\rfloor dx^1\wedge \cdots \wedge dx^n$, and 
where 
\begin{eqnarray} \label{eq:3.3} & {}\ourU^i (\red{\mathring V}):=  2\sqrt{\det
    g}\,\left(\red{\mathring V}g^{i[k} g^{j]l} \zD_j g_{kl}
    +D^{[i}\red{\mathring V} 
    g^{j]k} e_{jk}\right) 
    \,,
    \end{eqnarray}
   \bel{matVn} {\mathbb V}^l( \red{\mathring Y}):=  2\sqrt{\det g}\left[ (P^l{_k}
   -\zP^l{_k}) \red{\mathring Y}^k -\frac12
      \red{\mathring Y}^l\zP^{mn}e_{mn} +\frac12
       \red{\mathring Y}^k \zP^l{_k}\red{\mathring g}^{mn}e_{mn}  \right] 
     \,, 
   \ee
with 
\begin{equation}
P^{kl}:=g^{kl}\trg  K- K^{kl}\,, \qquad \trg  K:=g^{kl}K_{kl}\,,
\label{defP}
\end{equation}
$$
 e_{ij}:= g_{ij}-\mathring g_{ij}
 \,,
$$
and with $\mathring D$ denoting the Levi-Civita covariant derivative of $\mathring g$. 
Similar relations for
$\mathring K$ and $\zP$. 
Indices on $K$ and $P$ are moved
with $g$ while those on $\mathring K$ and $\zP$  with $\mathring g$. Here, (anti)-symmetrization of 
two-tensors is defined with a factor $1/2$. The formulae apply in all space-dimensions $n\ge 2$, and
we choose the dimension-dependent multiplicative constant $c_n$ in \eqref{17VIII24.1} to be equal $1/(2\pi)$ when $n=2$.

It follows from the Killing equations that the function $\red{\mathring V}$ and the vector field $\red{\mathring Y}$ are solutions of the \emph{background KID equations} \cite{ChBeigKIDs,CJL}. 
%
So, an equivalent perspective is to consider the Hamiltonian charges as 
being associated with the solutions $(\mathring V, \mathring Y)$ of these equations, called Killing Initial Data (KIDs).

\subsection{Ba\~nados metrics}
\label{sec:spacetime}

Let us choose the background spacetime metric to be 
%
\begin{equation}
 \bfour = - r^2 dt^2 + \frac{dr^2}{r^2} + r^2 d \varphi^2\,.
 \label{backgroundban}
\end{equation}
The Riemannian background metric $\mathring g$ induced on the level sets of $t$, and the background extrinsic curvature $\mathring K$ of these level sets read
%
\begin{equation}
 \mathring g  = ({\theta}^2)^2 + ({\theta}^1)^2\,, \qquad {\mathring K}_{i j} = 0\,,
 \label{7IX24.15} 
\end{equation}
where we use the following $\bfour$-orthonormal frame
\begin{equation}
    \{{\theta}^2 = \frac{d \rr}{r}\,,  {\theta}^1 = \rr d \varphi\,,  {\theta}^0 = \rr d t\,\}
     \,.
     \label{7IX24.13}
 \end{equation}
The Cauchy data $(g,K)$ induced on the level sets of  $t $ by $\gfour$, given by \eqref{metrics}, 
are found to be 
\begin{align}
    g 
     &
     =
    \
       \theta^2 \theta^2
  +\left(1 + \left(\mathcal{L}_-(\red{t} - \varphi)+\mathcal{L}_+(\red{t} + \varphi)\right) \rr^{-2} + \mathcal{L}_+(\red{t} + \varphi)\mathcal{L}_-(\red{t} - \varphi)\rr^{-4}\right) \theta^1 \theta^1
   \nonumber
\\
  & =: 
  \, 
    \mathring g + \red{r^{-2}}\mu_{ij} \theta^i \theta^j + O (\red{r^{-4}}) 
      \label{bansolutionsONB}\,, \\
    K 
    &
    =
    \ \left( \frac{2 (\mathcal{L}_+(\red{t} + \varphi)-\mathcal{L}_-(\red{t} - \varphi))}{r^2} + O(r^{-4})
    \right)
    \theta^1 \theta^2 \nonumber
    \\ 
 &~-
    \left(
    \frac{\mathcal{L}_-'(\red{t}-\varphi
   )+\mathcal{L}_+'(\red{t}+\varphi )}{2
   r^3} +O(r^{-5}) \right)
    \theta^1 \theta^1 
    \,,
    \label{2VIII24.31}
\end{align}
where $f'(x) = df(x)/dx$, and where $K$ is defined using the future pointing normal.

We define the total Hamiltonian mass $\HamM $, respectively the total angular momentum $J$, as the Hamiltonian associated with  time translations, respectively with rotations. One has (see~\cite{ChWutte} 
and references therein for an analysis of $\HamM $) 
\begin{eqnarray}
  \HamM  &:=& H(\red{\mathring V=r},0)
  =
   \frac{1 }{2\pi}
    \int_{S^1} (\mu_{22}+2 \mu_{11} ) d\varphi
  \,,
    \label{7IX24.12}
\\
  J &:=&
   H(0,\red{\partial_\varphi})=  \lim_{R\to\infty}
   \frac{1 }{\pi}
    \int_{r=R} P^r{}_\varphi \sqrt{\det g} \,d\varphi
   \,.
    \label{7IX24.11}
\end{eqnarray}
The integrand of the 
$1/(2\pi)$-normalized total mass integral
 will be called the
\emph{mass aspect function}, denoted by $\mu$, and  reads
\begin{equation}
  \mu := \mu_{22}+2 \mu_{11} = 2 \left(\mathcal{L}_-(\red{t} - \varphi)+\mathcal{L}_+(\red{t} + \varphi)\right) \,.
   \label{8IX24.1}
\end{equation}
%
The integrand, say $\myj$, of the limit, as $r$ tends to infinity, of
$1/(2\pi)$-normalized total   
angular momentum integral on
circles of constant $t$ and $r$, equals
\begin{equation}
  \label{angmomentum}
\myj=
   2(\mathcal{L}_-(\red{t} - \varphi)- \mathcal{L}_+(\red{t} + \varphi))
    \,.
\end{equation}
We will refer to $\myj$ as the \emph{angular-momentum aspect function}.%

We note that for the metrics under consideration we have
\begin{equation}\label{31VII24.6}
  (\partial_{t} + \partial_\varphi) (\mu+ \myj) = 0
   \,,
   \qquad
  (\partial_{t} -\partial_\varphi) (\mu- \myj) = 0
   \,,
\end{equation}
in particular both $\mu$ and $\myj$ satisfy the two-dimensional wave equation.

Upon imposing 
\begin{equation}
  \mathcal{L}_-(\red{t} - \varphi) - \mathcal{L}_+(\red{t} + \varphi) = 0
  \ \mbox{and} \ 
 \mathcal{L}_-'(\red{t}-\varphi
  )+\mathcal{L}_+'(\red{t}+\varphi ) = 0
\end{equation}
the extrinsic curvature tensor of the level sets of $t$ vanishes.

\subsubsection{Example: Ba\~nados-Teitelboim-Zanelli black holes} 
 \label{ss21X24.1}
As an example, in the case of   BTZ black holes \eqref{BTZmetric} the spacetime metric can be written as
\begin{equation}
  \label{difffourg}
  {\fourg }  = -\left(\myr^2 - M\right) dt^2+ \frac{4 \myr^2 d\myr^2}{J^2-4 M \myr^2+4 \myr^4} - J dt d\varphi
  +\myr^2 d\varphi^2\,,
\end{equation}
for some constants $M$ and $J$,
where the radial coordinates  in \eqref{BTZmetric} and \eqref{difffourg} are
related by 
\begin{equation}
  \label{rtransform}
  r =\sqrt{ \frac{\sqrt{\frac{J^2}{4}-M
  \myr^2+ \myr^4}-\frac{M}{2}+ \myr^2}{\sqrt{2}}}\,
\end{equation}
outside the outer horizon, i.e.\ $\myr> \myr_+$, where
%
$\myr_\pm = \sqrt{\frac{1}{2} (M \pm \sqrt{M^2-J^2})}$.  

This gives 
\begin{align}
  g 
   &
   =
  \
  {\bar \theta}^1 {\bar \theta}^1
 +\left(1 + \frac{M} {\myr^{2}}  + \frac{M^2-J^2/4}{\myr^{4}}
 + O(\myr^{-6})
  \right) {\bar \theta}^2 {\bar \theta}^2
  \,, 
\\
  K 
  &
  =
  \ - \left( \frac{J}{\myr^2} + O(\myr^{-4})
  \right)
  {\bar \theta}^1 {\bar \theta}^2 
  \,,
  \label{1811024}
\end{align} 
where
\begin{equation}
  {\bar \theta}^2 = \frac{d \myr}{\myr}\,,  
  \quad
  {\bar \theta}^1 = \myr d \varphi\,, 
  \quad
  {\bar \theta}^0 = \myr d t
   \,.
   \label{181024}
\end{equation} 
Evaluating the integral  \eqref{7IX24.12} one finds $\HamM=M$, 
with the angular-momentum integral \eqref{7IX24.11} 
reproducing the parameter $J$ appearing in \eqref{difffourg}.

\renewcommand{\bv}{v}

\subsection{Asymptotic Symmetries}
\label{subsec:asym}

We consider coordinate  transformations of the form 
\begin{align}
  \label{xprr}
    x^+ &= \bv_+({\bar x}^+)
    + \frac{\bv^\prime_+({\bar x}^+) \bv^{\prime \prime}_-({\bar x}^-)}{2 {\bar\rr}^2 \bv^{\prime }_-({\bar x}^-)}
    \nonumber \\
    &~~~~+ \frac{\bv_+''({\bar x}^+) \left(4
    \mathcal{L}_-(\bv_-({\bar x}^-))
    \bv_-'({\bar x}^-)^4+\bv_-''({\bar x}^-)^2
    \right)}{8 {\bar\rr}^4 \bv_-'({\bar x}^-)^2}
    + O\left(\frac{1}{{\bar\rr}^6} \right)
    \,, \nonumber \\
    x^- &= \bv_-({\bar x}^-)
    +\frac{\bv_-'({\bar x}^-) \bv_+''({\bar x}^+)}{2 {\bar\rr}^2
    \bv_+'({\bar x}^+)} \nonumber \\
    &~~~~+ \frac{\bv_-''({\bar x}^-)
    \left(4 \mathcal{L}_+(\bv_+({\bar x}^+))
    \bv_+'({\bar x}^+)^4+
    \bv_+''({\bar x}^+)^2\right)}{8 {\bar\rr}^4
    \bv_+'({\bar x}^+)^2}
    + O\left(\frac{1}{{\bar\rr}^6} \right)
    \,, \nonumber
\\
    \rr &=  \frac{{\bar\rr}}{\sqrt{\bv_+'({\bar x}^+)\bv_-'({\bar x}^-) }} -\frac{\bv_-''({\bar x}^-) \bv_+''({\bar x}^+)}{4
    {\bar\rr}  \sqrt{\bv_-'({\bar x}^-)
    \bv_+'({\bar x}^+)}^3}
    + O\left(\frac{1}{{\bar\rr}^3} \right)
\,,
\end{align}
where $\bv_\pm$ are, say smooth, functions satisfying
$$
 \bv_-({\bar x}^-+ 2\pi) =
 \bv_-({\bar x}^-)+ 2\pi
  \,,
   \qquad
 \bv_+({\bar x}^++ 2\pi) =
 \bv_+({\bar x}^+)+ 2\pi
  \,.
$$
These transformations 
 preserve
the precise form \eqref{metrics} of the metric, to the order induced by \eqref{xprr}, with $\mathcal{L}_\pm$ replaced by new functions $\bar{\mathcal{L}}_\pm$ given by
\begin{subequations}
    \label{barLs}
    \begin{align}
        \bar{\mathcal{L}}_+(\bar{x}^+) &= \mathcal{L}_+(\bv_+({\bar x}^+)) \bv_+'^2({\bar x}^+) - \frac{1}{2}
        \red{\hSchw\big[\bv_+'](\bar x^+)}
         \,,\\
        \bar{\mathcal{L}}_-(\bar{x}^-) &= \mathcal{L}_-(\bv_-({\bar x}^-)) \bv_-'^2({\bar x}^-)
         - \frac{1}{2}\red{\hSchw\big[\bv_-'\big](\bar x^-)}\,,
    \end{align}
    \end{subequations}
where   
\begin{equation}
 \hSchw[f'](x)
  \equiv  \red{S[f](x)} 
 :=
    \frac{f^{(3)}(x)}{f'(x)} - \frac{3}{2} \left( \frac{f''(x)}{f'(x)}\right)^2 
    \,,
    \label{10IX23.31}
  \end{equation}
 and where
 $S[f](x)$ denotes the Schwarzian derivative.

 Let us write the coordinate transformation \eqref{xprr} as
 \begin{align}
     t+\varphi &= \myf_+(\bar t + \bar \varphi) + O(\bar r^{-2})
     \,,
     &
     t-\varphi &= \myf_-(\bar t - \bar \varphi ) + O(\bar r^{-2})
     \,,
    \\
     r &= f (\bar t + \bar \varphi, \bar t - \bar \varphi, \bar r) \,,
 &
  \label{31VII24.12}
 \end{align}
Equivalently, and in more detail as needed later,
 \begin{align}
 &
     t = \frac{1}{2}
      \big(
       \myf_+  
       +\myf_-
       \big) 
       + \frac{(\bv^\prime_+)^2 \bv^{\prime \prime}_- + (\bv_-')^2 \bv_+''}{4 {\bar\rr}^2 \bv^{\prime }_- \bv^\prime_+}  
       + O(\bar r^{-4})
     \,,
     &
  \label{31VII24.13a}
 \\
     &
     \varphi =  \frac{1}{2}
      \big(
       \myf_+  
       -\myf_-
       \big)
       + \frac{(\bv^\prime_+)^2 \bv^{\prime \prime}_- - (\bv_-')^2 \bv_+''}{4 {\bar\rr}^2 \bv^{\prime }_- \bv^\prime_+}  
       + O(\bar r^{-4})
     \,,
     &
 \\
     &
     r = \frac{{\bar\rr}}{\sqrt{\bv_+'\bv_-' }} 
     + O(\bar r^{-1})
 \,.
  & 
  \label{31VII24.13c}
 \end{align} 
 %
 It holds that
 \begin{align}
 &
     (\partial_{\bar t} + \partial_{\bar \varphi}
     \big)t =  
       (\myf_+)^\prime    + O(\bar r^{-2})
     \,, 
     \qquad
     (\partial_{\bar t} - \partial_{\bar \varphi}
     \big)t =  
       (\myf_-)^\prime      + O(\bar r^{-2})
     \,, &
  \label{31VII24.14}
  \\    
  &
     (\partial_{\bar t} + \partial_{\bar \varphi}
     \big)\varphi= 
       (\myf_+)^\prime      + O(\bar r^{-2})
     \,, 
     \qquad
     (\partial_{\bar t} - \partial_{\bar \varphi}
     \big)\varphi  = 
     -
       (\myf_-)^\prime      + O(\bar r^{-2})
     \,. 
     &
  \label{31VII24.15}
 \end{align}
 This leads to the following transformation law of the mass aspect and of the angular momentum aspect: 
 \begin{eqnarray} 
   \mu  
   \mapsto 
    \bar \mu 
    &= & {\frac{1}{2}}
    \Big[ 
    (\mu - \myj)\big( (\partial_{\bar t} + \partial_{\bar \varphi})t\big)^2
    + 
    (\mu + \myj)\big( (\partial_{\bar t} - \partial_{\bar \varphi})t\big)^2
    \Big]
    \nonumber
\\
     && 
     \phantom{\frac12 \Big[} - \hSchw\big[(\partial_{\bar t} + \partial_{\bar \varphi})t\big]
      - \hSchw\big[(\partial_{\bar t} - \partial_{\bar \varphi})t\big]
 \label{eqmuchange1}
      \\
  &=& \mu  \left((\partial_{\bar \varphi} t)^2+(\partial_{\bar t} t)^2\right)-2 \myj\,
  \partial_{\bar \varphi} t
 \,\partial_{\bar t} t 
  - \hSchw\big[(\partial_{\bar t} + \partial_{\bar \varphi})t\big] \nonumber \\
     && \phantom{\frac12 \Big[} - \hSchw\big[(\partial_{\bar t} - \partial_{\bar \varphi})t\big]
 \label{eqmuchange2}
      \,,
      \\
  \myj
   \mapsto 
    \bar \myj
     &  = & 
     \red{\frac{1}{2}}
     \Big[ 
    (\mu+\myj)\big( (\partial_{\bar t} - \partial_{\bar \varphi})t\big)^2
    - (\mu-\myj)\big( (\partial_{\bar t} + \partial_{\bar \varphi})t\big)^2
    \Big] 
    \nonumber
 \\
     && 
      \phantom{\frac12 \Big[}
      - 
      \hSchw\big[(\partial_{\bar t} - \partial_{\bar \varphi})t\big]
     +
     \hSchw\big[(\partial_{\bar t}  +\partial_{\bar \varphi})t\big]
 \label{eqmuchange3}
     \\
     &=& 
     \myj \left( (\partial_{\bar \varphi} t)^2+(\partial_{\bar t} t)^2\right)- 2 \mu  \,
    \partial_{\bar \varphi} t \,
   \partial_{\bar t} t
      - \hSchw\big[(\partial_{\bar t} - \partial_{\bar \varphi})t\big]
       \nonumber \\ 
     && \phantom{\frac12 \Big[}+ 
     \hSchw\big[(\partial_{\bar t}  +\partial_{\bar \varphi})t\big]
     \label{eqmuchange4}
     \,.
 \end{eqnarray}    
 A comment is in order here. It follows from \eqrefl{31VII24.14}-\eqrefl{31VII24.15} that the limits, as $\bar r$ tends to infinity, 
 of $   (\partial_{\bar t} \pm \partial_{\bar \varphi}
     \big)t $ are functions of one variable only. The operator $\hat S$ appearing in the last set of equations and 
     which, by definition, acts on functions of one variable,
      has to be understood as acting on these limits. 
     To be precise: all occurrences of $\hSchw\big[(\partial_{\bar t} + \partial_{\bar \varphi})t\big]$ are functions of   $ (\bar t + \bar \varphi)$, and all occurrences of $\hSchw\big[(\partial_{\bar t} - \partial_{\bar \varphi})t\big]$ are functions of   $(\bar t -\bar \varphi)$. 
     
 We see that both aspect functions have a rather complicated transformation law, with $\mu\pm \myj = 
 4 \mathcal{L}_{\mp}$ 
 transforming independently:
 \begin{eqnarray}   
   \mu  + \myj 
   \mapsto 
    \bar \mu +\bar \myj 
    &= &  
    (\mu +\myj)\big( (\partial_{\bar t} - \partial_{\bar \varphi})t\big)^2
      - 2\hSchw\big[(\partial_{\bar t} -\partial_{\bar \varphi})t\big]  
      \,,
  \\    
   \mu  - \myj 
   \mapsto 
    \bar \mu -\bar \myj 
    &= &  
    (\mu -\myj)\big( (\partial_{\bar t} + \partial_{\bar \varphi})t\big)^2
      - 2\hSchw\big[(\partial_{\bar t} + \partial_{\bar \varphi})t\big]
     \,.
 \end{eqnarray}

It follows from the analysis in~\cite{ChWutte}, based on~\cite{Oblak,Balog},  that:
 
 \begin{enumerate}
   \item
   Both $\HamM $ and  each of the integrals
   \begin{equation}\label{15VIII24.1a}
     \HamM \pm J =\frac{1} {2\pi} \int_{S^1} (\mu\pm j) d\varphi
   \end{equation}
   can be made arbitrarily large using asymptotic symmetries.
   
   \item A sufficient, but not necessary, condition for $\HamM +J$ to be bounded from below is $\mu+j \ge -1$; similarly for $\HamM -J$.
   
   \item Each of the functions $\mu\pm j$ can be assigned a \emph{monodromy type}. The type allows one to find a canonical form of $\mu\pm j$ which is achievable by applying an asymptotic symmetry. A constant value of $\mu+ j$  can be attained if and only if $\mu+ j$ is either of hyperbolic type ${\mycal H}_0$, or of elliptic type ${\mycal E}_m$, 
 $m \in \mathbb{R}_{<0}$,
 or of  parabolic type  ${\mycal P}_0$; similarly for $\mu-j$. 
 
 \item  
 Suppose that $\mu+j$ is   of hyperbolic type $\mcH_0$, or of parabolic type $\mycal P_0$, 
 or of  elliptic type $\mcE_{m}$ with $m\ge -1$. We will refer to such functions as of \emph{good type}. 
For such functions the integral $\HamM+J$ is bounded from below, and a unique value of $\HamM+J$ can be obtained by minimizing over asymptotic symmetries. The minimum value is achieved by an asymptotic symmetry which renders $\mu+ j$ constant.

It follows from \eqref{barLs} that,  in the class of metrics \eqref{metrics} the value of  $\HamM+J$ can be changed while leaving $\HamM-J$ unchanged.

For the remaining types, namely $\mcH_n$,  $\mycal P^q_n$, with $n\in \N^*$, $q\in \{\pm 1\}$,
 and $\mcE_{m}$ with $m<-1$, which we will refer to as \emph{bad},   the integral $\HamM+J$ can achieve arbitrary values (both positive and negative) by applying an asymptotic symmetry~\cite{Balog} (compare~\cite{ChWutte}), 
 with one exception. Namely $\HamM+J$ is bounded from below by  {$-1$}  for the type $\mycal P^{-1}_1$,  can be made arbitrarily close to {$-1$} by a choice of asymptotic gauge, but {$-1$} is never attained by a choice of gauge. We will refer to the  $\mycal P^{-1}_1$ type as \emph{marginal}.

 Similarly there exist ``bad'' functions $\mu-j$ such that the integral $\HamM-J$ can achieve arbitrary values by applying an  asymptotic symmetry, while leaving $\HamM+J$ unchanged.

   \item Let $(M,g,K=0)$ be obtained by a Maskit-gluing, reviewed in Section~\ref{s11VIII24.14} below, of two ALH time symmetric initial data sets. In~\cite{ChWutte}
   explicit formulae have been given for the aspect function $\mu$ of the glued metrics,   and the type of $\mu$, 
   in terms of the original ones. 
 \end{enumerate} 
  
  Since
 \begin{equation}\label{30VII24.2}
   \HamM = \frac{1}{2}(\HamM+J) +  \frac{1}{2}(\HamM-J)
   \,,
 \end{equation}
 with each part transforming independently,
whenever either of the functions $\mu\pm j$ is in the ``bad'' class, 
 $\HamM$ can achieve arbitrary values. 
 
 So,   a meaningful value of mass 
  can only be  
   assigned to a spacetime 
   when both functions $\mu\pm j$ belong to a class where the associated integrals $\HamM\pm J$ are bounded from below. In such cases a unique number,
  larger than or equal to $-1$, can be assigned by minimization to each of $\HamM\pm J$.

Alternatively, one may want to assign  mass and angular momentum to a given Cauchy   surface. 
In such a case 
only  those  asymptotic symmetries
 which preserve the initial data hypersurface are relevant. Assuming that this hypersurface is given by $\{t=0\}$,
a necessary and sufficient condition for this is 
(cf.\ \eqref{31VII24.13a} with  the error-terms  set to zero  there)
\begin{equation}\label{3X24.1}
    \myf_+( \bar \varphi)  = 
       - \myf_-(- \bar \varphi) 
        \,.
\end{equation}
Under such transformations  the mass aspect $\mu$ transforms as
\begin{equation}\label{3X24.2}
    \mu  
    \mapsto 
     \bar \mu (\bar \varphi)
     =  (\mu\circ \myf_+)
      ( \bar \varphi) 
      \,
       \big(\myf_+^\prime( \bar \varphi) \big)^2 
     - 2 \hSchw\big[\myf_+^\prime( \bar \varphi)\big]   
  \,.
\end{equation}
We can associate a number $\uHamM\in \R\cup\{-\infty\}$ to the section $\{t=0\}$ of the conformal boundary at infinity by setting 
\begin{equation}\label{26X24.1}
  \uHamM= \inf \HamM
  \,,
\end{equation}
where the infimum is taken over asymptotic symmetries satisfying \eqref{3X24.1}.
 
For the angular momentum aspect we find
\begin{equation}\label{3X24.2j}
    j
    \mapsto 
     \bar j (\bar \varphi)
     =  
     (j\circ \myf_+)
      ( \bar \varphi) 
      \,
       \big(\myf_+^\prime( \bar \varphi) \big)^2 
  \,.
\end{equation}
Hence  
\begin{eqnarray} 
 \bar  J &:=&   \lim_{R\to\infty}
   \frac{1 }{\pi}
    \int_{\bar r=R} P^{\bar r}{}_{\bar \varphi } \sqrt{\det g} \,d\bar \varphi
     \nonumber
\\
 & = & 
  \frac{1}{2 \pi}    
  \int_{S^1} 
  j \big(
   \myf_+( \bar \varphi) \big)
    \,\big(\myf_+^\prime( \bar \varphi) \big)^2  \,d\bar \varphi   
    \nonumber
\\
 & = & 
  \frac{1}{2 \pi}    
  \int_{S^1} 
  j(  \varphi)  \,\big( \myf_+^\prime 
  \circ \myf_+^{-1}
  \big)(\varphi)   \,d  \varphi
    \nonumber 
  \\
 & = &  
  \frac{1}{2 \pi}    
  \int_{S^1} 
   \,\frac{j(  \varphi) }{\big(\myf_+^{-1}
  \big)^\prime (\varphi)}   \,d  \varphi
  \,.
    \label{7IX24.11x}
\end{eqnarray}
\ptcn{wrong stuff went to WrongJ}
We conclude that a choice of a section of the conformal boundary does not resolve the ambiguities in the definition of $J$ either.

Note that on a given Cauchy surface both $\mu+j$ and $\mu-j$ transform in the same way as $\mu$;
\begin{equation}\label{3X24.2dw}
    \mu   \pm j 
    \mapsto 
    ( \bar \mu \pm \bar j)  (\bar \varphi)
     = 
     \big ( ( \bar \mu \pm \bar j) \circ \myf_+
      \big)
      ( \bar \varphi) 
      \,
       \big(\myf_+^\prime( \bar \varphi) \big)^2 
     - 2 \hSchw\big[\myf_+^\prime( \bar \varphi)\big]   
  \,.
\end{equation}

Now, if a mass $\uHamM>-\infty $ can be defined by minimizing $\HamM$,  
and a gauge minimizing $\uHamM$ exists,  
we set
$$
 \underline J =J 
 \,,
$$ 
where the integral defining
$J$ is calculated in the gauge which minimizes $\HamM$. 

When the mass aspect function $\mu$ is marginal, in the context of  Theorem~\ref{T15VIII24.1}  we set
\begin{equation}\label{1XI24.2}
  \underline J =0
  \,.
\end{equation}
The reason for this will be made clear below. 
 
In fact, we conjecture that a marginal mass aspect cannot occur under the hypotheses 
of the theorems that follow, but we have not been able to exclude this possibility. 

For definiteness, we will say that an initial data set $(M,g,K)$ is \emph{locally asymptotically hyperbolic} (ALH) if 
the fields $(g,K)$ have asymptotic expansions in the spirit of in \eqref{bansolutionsONB}-\eqrefl{2VIII24.31}, with $M$ viewed as the hypersurface   $\{t=0\}$:
\begin{align}
    g  
  & = 
  \, 
    \mathring g + \big(
    \mu_{ij} 
     r^{-2}+ o ( {r^{-2}}) \big)
     \theta^i \theta^j 
      \label{bansolutionsONBt}\,,
       \\
    K 
    &
    =
    \, - j  r^{-2} 
    \theta^1 \theta^2  
     +o(r^{-2}) 
    \theta^i \theta^j
    \,,
    \label{2VIII24.31t}
\end{align}
and with the error terms behaving in the obvious way under differentiation.

Let $\mlambda \in \R$. We say that a boundary $\partial M$ of $M$ is \emph{weakly future $\mlambda$-trapped} if 
the null expansion scalar $\theta$  of  $\partial M$, defined with respect to a null normal which points to the future and towards $M$, satisfies 
$$
 \theta \le \mlambda 
 \,.
$$
We will say that $\partial M$ is weakly $\mlambda$-trapped if a time orientation can be chosen so that it is weakly future $\mlambda$-trapped.
  
In Section~\ref{ss19IX24} we will show: 
 
 \begin{theorem}
  \label{T15VIII24.1}
  Let  $(M,g)$ be a smooth, complete Riemannian manifold, 
  possibly with a compact weakly $1$-trapped boundary, 
  and
   suppose that $(M,g,K)$ satisfies the dominant energy condition. 
  If $(M,g,K)$ contains a locally asymptotically hyperbolic end, then   both $\mu+j$ and $\mu-j$ are of good 
  or marginal
   type 
   and 
  \begin{equation}
   \uHamM +1 
     \ge | \underline J| + \sqrt{|\vec { \underline C}|^2 
  + |\vec{  \underline H}|^2 
  + 2 |\star (\vec{  \underline H}\wedge \vec 
  {\underline C}) 
    }|
   \, ,
   \label{15VIII24.2dr}
   \end{equation}
  where $\vec{\underline C}$ and $\vec{\underline H}$ 
     are defined in Section~\ref{30VII24} below. 
  \end{theorem}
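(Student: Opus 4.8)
The plan is to run a Witten-type spinorial argument adapted to the asymptotically locally hyperbolic setting, using the imaginary Killing spinors of the AdS$_3$ background whose explicit forms are recorded in Appendix~\ref{s24X24.1}; this is in the spirit of \cite{SkenderisCheng,ChHerzlich,CMT}, the new ingredients being the careful bookkeeping of the full set of global charges, the treatment of the trapped inner boundary, and the role of inequivalent spin structures. First I would introduce the hypersurface Dirac--Witten operator on $(M,g)$: writing $\{e_i\}$ for a local $g$-orthonormal frame and $e_0$ for the timelike normal, one forms the modified connection
$$
 \hat\nabla_X \psi = \nabla_X\psi + \tfrac{1}{2}K(X,e_i)\,e_i\cdot e_0\cdot\psi + \tfrac{\mathrm{i}}{2}\,X\cdot\psi\,,
$$
the last term encoding the imaginary Killing structure appropriate to $\Lambda=-1$, and the associated Dirac operator $\hat{\mathcal D}=e_i\cdot\hat\nabla_{e_i}$. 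The model imaginary Killing spinors $\psi_0$ on the ALH end are precisely the spinors that are $\hat\nabla$-parallel for the background data, so they furnish the admissible asymptotic values.

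Next I would record the Weitzenb\"ock--Lichnerowicz identity for $\hat{\mathcal D}$. Integrating over $M$ one obtains, schematically,
$$
 \int_M\Big(|\hat\nabla\psi|^2 - |\hat{\mathcal D}\psi|^2\Big)\,d\mu_g
 + \int_M \langle \psi,\, \mathcal C\,\psi\rangle\, d\mu_g
 = \oint_{\partial_\infty M} (\cdots) - \oint_{\partial M}(\cdots)\,,
$$
where the zeroth-order curvature operator $\mathcal C$ is built from the constraint quantities $2\matter+2\Lambda$ and $J^i$; under the dominant energy condition $\mathcal C$ is pointwise nonnegative, so the bulk integral is $\ge 0$ once $\hat{\mathcal D}\psi=0$. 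I would then solve $\hat{\mathcal D}\psi=0$ with $\psi\to\psi_0$ at infinity by the standard Fredholm argument on weighted Sobolev spaces tailored to the hyperbolic fall-off, invertibility following because $\mathcal C\ge 0$ removes the cokernel, exactly as in the asymptotically hyperbolic positive-mass literature.

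The heart of the matter is the boundary analysis. Inserting the asymptotic expansions \eqref{bansolutionsONBt}--\eqref{2VIII24.31t} into the integral at infinity, I expect it to reduce to a Hermitian quadratic form $\psi_0\mapsto \mathbb Q(\psi_0,\psi_0)$ on the finite-dimensional space of model spinors, whose matrix entries are precisely the global charges $\uHamM,\underline J,\vec{\underline H},\vec{\underline C}$ associated with the $\mathrm{SO}(2,2)$ isometries of AdS$_3$ (time translation, rotation, and the two boost/translation pairs giving the momentum and center-of-mass vectors defined in Section~\ref{30VII24}); since $\HamM\pm J$ live in the two commuting $SL(2,\mathbb R)$ factors, the form block-diagonalizes into left- and right-moving pieces. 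The inner boundary contributes $-\oint_{\partial M}(\cdots)$, and here the hypothesis that $\partial M$ is weakly $1$-trapped is exactly what is needed---together with an MIT-bag/chirality condition on $\psi$ matched to the imaginary Killing equation---to give this term a favorable sign, so that it may be discarded. Nonnegativity of the total right-hand side then forces $\mathbb Q\ge 0$, and computing its smallest eigenvalue yields $\uHamM+1\ge |\underline J| + \sqrt{|\vec{\underline C}|^2+|\vec{\underline H}|^2+2|\star(\vec{\underline H}\wedge\vec{\underline C})|}$, the shift by $1$ reflecting that the AdS ground state sits at $\uHamM=-1$. Finiteness of the bound, together with the minimization defining $\uHamM$, rules out the bad monodromy types and leaves $\mu\pm j$ of good or marginal type.

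The main obstacle I anticipate is twofold. Analytically, establishing solvability of $\hat{\mathcal D}\psi=0$ with the correct polyhomogeneous asymptotics on a complete surface-with-trapped-boundary, and verifying that no boundary term is lost in the integration by parts, requires the weighted-space mapping theory to be set up carefully in this low dimension, where the conformal behaviour differs from $n\ge 3$. Geometrically, the delicate point is matching the $1$-trapped condition and the choice of spin structure to a chirality boundary condition that simultaneously makes the inner integral have the right sign and keeps the asymptotic term equal to the charges; since in dimension two there are inequivalent spin structures, one must check that the imaginary Killing spinors actually extend globally for the given structure, which is where the exceptional marginal type and the convention \eqref{1XI24.2} enter.
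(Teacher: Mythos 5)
Your overall strategy is the same as the paper's: a Witten-type argument built on the imaginary Killing spinors of Appendix~\ref{s24X24.1}, bulk positivity from the dominant energy condition, a boundary condition at the weakly $1$-trapped inner boundary, reduction of the term at infinity to a Hermitian form in the global charges, and finally minimization over asymptotic symmetries to exclude the bad monodromy types and define $\uHamM$, $\underline J$, $\vec{\underline H}$, $\vec{\underline C}$. The only structural difference is that the paper does not re-derive the analytic core (Weitzenb\"ock identity, weighted Fredholm solvability of the Dirac--Witten equation, admissible boundary conditions): it quotes \cite{ChHerzlich,CMT,GHWSpires} for the statement that the charge matrix $Q$ of \eqref{26IX24.5} is positive semi-definite, and its actual content is the $(2+1)$-dimensional charge algebra plus the gauge discussion, including the marginal-case convention \eqref{1XI24.5} needed so that \eqref{27IX24.9rd} holds when the minimizing gauge is not attained.

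There is, however, one concrete gap in your final linear-algebra step. In $2+1$ dimensions the irreducible Dirac representation is two-dimensional, and even if you assemble the quadratic form on the full space of model spinors for \emph{both} inequivalent Clifford representations ($\gamma^\mu$ and $-\gamma^\mu$, your ``left- and right-moving blocks''), positivity of that form yields only the two signed inequalities coming from the eigenvalues of $Q$ in \eqref{24X24.7} and $Q'$ in \eqref{1XI24.3}, namely
\begin{equation*}
 H^0 + J \ \ge\ \sqrt{|\vec C|^2+|\vec H|^2 - 2\star(\vec H\wedge\vec C)}\,,
 \qquad
 H^0 - J \ \ge\ \sqrt{|\vec C|^2+|\vec H|^2 + 2\star(\vec H\wedge\vec C)}\,.
\end{equation*}
These do \emph{not} imply \eqref{15VIII24.2dr}: take $J>0$ and $\star(\vec H\wedge\vec C)<0$; the claimed bound is then $H^0\ge J+\sqrt{|\vec C|^2+|\vec H|^2-2\star(\vec H\wedge\vec C)}$, which is strictly stronger than either of the two inequalities above. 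The paper closes this by running the entire argument a second time on the same initial data set endowed with the opposite space orientation (equivalently, exchanging $\gamma^1\leftrightarrow\gamma^2$), which produces the two complementary inequalities with $\vec H$ and $\vec C$ interchanged, hence with the cross term of opposite sign; only the four inequalities together give the absolute values $|J|$ and $|\star(\vec H\wedge\vec C)|$ in \eqref{27IX24.9}. So ``computing the smallest eigenvalue of $\mathbb Q$'' as you describe it is not sufficient; the orientation-reversal step is an essential additional input, and you should also verify (as the paper does implicitly via Remark~\ref{R3XI24.1}) that the canonical spin structure is the one used throughout, since the twisted structure kills all but the $a_2=0=b_2$ Killing spinors and is reserved for the sharper bound of Theorem~\ref{T30X24.1}.
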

  
Equality in \eqref{15VIII24.2dr} is achieved on hyperbolic space. 

The inequality \eqref{15VIII24.2dr} is complemented by a second inequality
 if $M$ has a weakly $1$-trapped compact boundary, or if more is known about the geometry of $M$:

\begin{theorem}
 \label{T30X24.1}
Under the remaining conditions of Theorem~\ref{T15VIII24.1}, suppose  that  $(M,g)$ has either a non-empty 
weakly $1$-trapped 
compact boundary or at least two ALH ends. Then, 
in addition to \eqref{15VIII24.2dr} we have
  \begin{equation}
   \uHamM \ge |\underline J|
   \,.
   \label{15VIII24.2drbd}
   \end{equation}
\end{theorem}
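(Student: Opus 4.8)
The plan is to run the spinorial (Witten–Dirac) argument of the proof of Theorem~\ref{T15VIII24.1} a second time, but with the choice of spin structure and asymptotic imaginary Killing spinor adapted to the new topological hypotheses, and to feed in the sign of the inner-boundary contribution produced by the weakly $1$-trapped boundary. The upshot I am aiming for is that the additive constant $+1$ in \eqref{15VIII24.2dr}, which is the energy of the antiperiodic ground state (global anti–de~Sitter), gets replaced by the energy of the periodic ground state (the massless Ba\~nados--Teitelboim--Zanelli solution), namely $0$, leaving $\uHamM \geq |\underline J|$.

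First I would recall the core identity from the proof of Theorem~\ref{T15VIII24.1}: for an imaginary Killing spinor $\psi$ (as in Appendix~\ref{s24X24.1}) solving the associated Dirac--Witten equation with the model asymptotics, integrating the Weitzenb\"ock--Lichnerowicz formula over $M$ gives
$$
\oint_{\partial_\infty M} (\text{charge form})
= \int_M \big( |\hat D \psi|^2 + (\text{DEC term}) \big)
+ \oint_{\partial_0 M} (\text{inner boundary term})\,,
$$
where the bulk integrand is non-negative by the dominant energy condition, the boundary-at-infinity term is a Hermitian quadratic form in the constant asymptotic value of $\psi$ whose coefficients are the charges $\uHamM,\underline J,\vec{\underline C},\vec{\underline H}$, and $\partial_0 M$ denotes the inner boundary (empty in the two-ends case). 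Because $\mu\pm j=4\mathcal L_\mp$ transform independently, cf.\ \eqref{31VII24.6} and \eqref{barLs}, this quadratic form block-diagonalizes into a $+$ and a $-$ sector, so that positivity of the form is equivalent to the two scalar inequalities $\uHamM+\underline J\geq c$ and $\uHamM-\underline J\geq c$ for a constant $c$ determined by the asymptotic periodicity of $\psi$.

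The whole point is the value of $c$, and here the \emph{inequivalent spin structures} flagged in the introduction enter decisively. On the simply connected disc the spinor must be single-valued, which forces the antiperiodic (Neveu--Schwarz) sector around the asymptotic circle $S^1$; the corresponding Casimir shift is exactly the $+1$ of \eqref{15VIII24.2dr}, so there $c=-1$. When $M$ has at least two ALH ends, or a compact boundary, the global spinor I construct lives instead in the periodic (Ramond) sector: with two ends I would require $\psi$ to be asymptotic to the model imaginary Killing spinor at the end where the charge is measured and to decay at the second end, while with a boundary I would impose a chiral (MOTS-type) boundary condition on $\partial_0 M$; in either case single-valuedness of $\psi$ is compatible only with the periodic structure, for which the ground-state energy, and hence $c$, is $0$. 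This yields $\uHamM\pm\underline J\geq 0$, i.e.\ $\uHamM\geq|\underline J|$.

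It remains to control the inner-boundary integral in the boundary case. The weakly $1$-trapped hypothesis $\theta\leq 1$, with $\theta$ the future null expansion normalized to the value carried by the imaginary Killing spinor at $\Lambda=-1$, is precisely the condition making the horizon integrand non-negative, so that the right-hand side of the identity stays non-negative and the charge form is positive; this is where the exact sign convention of the trapping condition is used. The main obstacle, and the step requiring the most care, is the global existence of the Witten spinor in the periodic sector together with the prescribed boundary or second-end behavior, i.e.\ solvability of the Dirac--Witten equation with these mixed asymptotic and boundary data on a complete manifold, and verifying that the passage to the periodic spin structure genuinely removes the Casimir $+1$ rather than merely relocating it; the two-ends case additionally demands that the decay imposed at the auxiliary end contribute no spurious boundary term capable of spoiling the bound.
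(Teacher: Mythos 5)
Your proposal is correct and is essentially the paper's own argument: the paper likewise reruns the Witten-type argument of Theorem~\ref{T15VIII24.1} with the \emph{twisted} (periodic) spin structure, which exists precisely because of the weakly trapped boundary or the second end, and in that structure the only surviving imaginary Killing spinors (those with $a_2=0=b_2$ in \eqref{22X24.42}--\eqref{22X24.43}) make the charge matrix $Q$ diagonal with entries $\HamM-\epsilon J$, so positivity gives $\HamM\ge |J|$ in every gauge and minimization yields \eqref{15VIII24.2drbd}. The only (minor) divergence is in the two-ends case: the paper cuts the manifold at large $r$ in the auxiliary end, where the inward null expansion tends to $-1$, thereby reducing to the weakly-trapped-boundary situation, rather than constructing a Witten spinor decaying at that end as you propose — both are standard, but the cut-off trick sidesteps the mixed asymptotic/decay solvability question you flag as your main obstacle.
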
  
 
The equality in \eqref{15VIII24.2drbd} is attained on extremal BTZ black holes. 
 
\begin{remark}
 \label{R13XI24}
From what has been said it follows that if $\underline H>-1$, then there exists an asymptotic symmetry which preserves the initial data surface and renders the mass aspect function constant, equal to $\underline H$. 

Both the angular momentum aspect and the mass aspect can  be rendered constant if the associated spacetime has a sufficiently large conformal completion at infinity, as making both these functions constant simultaneously typically requires passing to another hypersurface in spacetime.
\qed
\end{remark}  
  
In the presence of
boundaries  
with $\theta =0$  
one expects  a Penrose-type inequality
 \begin{eqnarray}
 \uHamM \ge 
    \big(\frac{\ell}{2\pi}\big)^2
   \,,
    \label{25XII23.3px}
 \end{eqnarray}
 where $\ell$ is the length of $\partial M$, 
 with the lower bound attained for all BTZ black holes. 
 Some evidence towards this has been given in~\cite{ChWutte}.

\renewcommand{\bcheckpsi}{{\check{\psi}}}
\renewcommand{\mcL}{{\mycal{L}}}
\renewcommand{\hhat}{{\hat{h}}}
\subsection{Small perturbations of hyperbolic space}
 \label{30VII24}

In this section we shall prove Theorem~\ref{T15VIII24.1} for a large class of small perturbations of hyperbolic space with scalar curvature bounded from below by $-2$. The result applies to general relativistic data sets satisfying the dominant energy condition on maximal surfaces,  $\tr_g K=0$.  
From the perspective of Theorem~\ref{T15VIII24.1} the calculations here are made obsolete by the arguments in Section~\ref{ss19IX24}, but we believe they have some interest of their own.  

A comment on the choice of normalization of the Hamiltonian mass is in order. In \eqref{15VIII24.2dr} the zero-point of the mass has been chosen so that the mass is positive for all BTZ black holes, which leads to $\HamM=-1$ for hyperbolic space. 
This normalization is not natural for perturbations of hyperbolic space, where further Hamiltonian charges are present.
Indeed, recall
that Hamiltonian charges are associated with Killing vector fields of the Lorentzian background. For  
static
ALH metrics \emph{other than Anti-de Sitter spacetime}, 
 the space of such Killing vectors splits into a \emph{one-dimensional} space of vector fields normal to the time-symmetric slices, say $\hyp_t$, and into tangential Killing vector fields, if any. On the other hand, the $(n+1)$-dimensional Anti-de Sitter spacetime itself 
 has an $(n+1)$-dimensional space of Killing vectors normal to $\hyp_t$. In the coordinate system in which the hyperbolic metric $\backg$ takes the form
  \ptcn{ $\mzg$ is used for the cusp, cf.  
  \eqref{backgroundban}, \eqref{bansolutionsONB}}
\begin{equation}\label{7IX24.5}
  \backg = \frac{dr^2}{1+r^2} + r^2 d\Omega^2
  \,,
\end{equation}
where $d\Omega^2$ is the round metric on $S^{n-1}$, the space of $\backg$-normal Killing vectors of $\backg$ is spanned by vector fields whose normal components are 
\begin{equation}\label{7IX24.6}
 V^0 = \sqrt{1+r^2}
 \,,
  \quad
  V^i = x^i
  \,.
\end{equation}
One sets
\begin{equation}\label{7IX24.7}
  H^\mu := H(V^\mu,0)
  \,,
\end{equation}  
where $H$ is given by \eqref{17VIII24.1}.  Under the usual decay conditions for a well-defined mass, under isometries of hyperbolic space the ``three-vector'' 
$$
(H^\mu) \equiv (H^0, \vec H)
$$
 transforms as a Lorentz vector. 
The remaining KIDs are linear combinations of the fields (cf.~Equation (3.6) in \cite{CMT} for the formulae
in the Poincar\'e ball model)
%
\begin{equation}
 \Omega_{k i} = x^k \partial_i - x^i \partial_k
 \,,
\end{equation}
where $1 \leq k < i \leq n$,
and 
\begin{equation}
    \mathcal{C}_k = \sqrt{1+r^2 }\partial_k\,.
    \label{16X24.1}
\end{equation}
The charge associated to $\mathcal{C}_k$ will be referred to as \emph{center of mass}
\begin{equation}
    \label{51024}
\vec C := H(0, \vec{\mathcal{C}})\,,
\end{equation}
whereas the charge associated to $\Omega_{k i}$
 will be referred to as  \emph{angular momentum}
\begin{equation}
\mathcal{J}_{k i } := H(0, \Omega_{k i}) = 
\begin{pmatrix}
    0 & J_i\\
   - J_i & 0 
    \end{pmatrix}\,.
\end{equation}
%

A natural class of metrics associated with the background \eqref{7IX24.5} is provided by
the ALH  metrics \eqref{bansolutionsONBt}, which can be rewritten in 
 the form 
\begin{equation}\label{7IX24.5we}
 g  =  \backg + \Big(
  \frac{\bmu_{ij}}{r^n} + o(r^{-n})
  \Big)  \btheta^i \btheta^j
  \,,
   \quad \btheta^n = \frac{dr}{\sqrt{1+r^2}}
   \,,
   \
   \btheta^a =r \psi^a
  \,,
\end{equation}
where $\{\psi^a\}_{a=1}^{n-1}$ is an orthonormal frame for the metric $d\Omega^2$, 
with $\partial_r\bmu_{ij}=0$.
 In space-dimension $n=2$ a calculation gives 
\begin{eqnarray} 
  \Hh&=&   
   \frac{1 }{2\pi}
    \int_{S^1} (
    \underbrace{
    \bmu_{22}+2 \bmu_{11} 
    }_{=:\bmu}) 
    d\varphi
  \,, 
    \label{9IX24.11} 
\\
  H^1 &=&
   \frac{1 }{2\pi}
  \int_{S^1} \cos(\varphi)
   \,
  \bmu \,d\varphi 
  \,, 
    \label{9IX24.12} 
\\
  H^2 &=&
    \frac{1 }{2\pi}
  \int_{S^1} \sin(\varphi)
   \,
  \bmu \,d\varphi 
  \,,
    \label{9IX24.13} 
\end{eqnarray}
and for 
ALH 
initial data as in \eqref{bansolutionsONBt}-\eqref{2VIII24.31t}
we further have 
\begin{align}
C^1 &= - \frac{1 }{2\pi}
\int_{S^1} \sin(\varphi)
 \,
j \,d\varphi 
\,,  \\
C^2 &=  \frac{1 }{2\pi} \int_{S^1} \cos(\varphi)
\,
j \,d\varphi 
\,.  
\end{align}
\begin{remark}
 \label{R8IX24.1}
Still in dimension $n=2$, consider metrics of the form
\begin{align}
    g 
    = & 
  \ 
    \hg + \big(\hx^{2} \hat \mu_{ij}  + o (\hx^{2})\big) \htheta^i \htheta^j
      \,,  
\quad \mbox{with} \  
\htheta^1:= \frac{d\hy}{\hx}\,, \ 
\htheta^2:= \frac{d\hx}{\hx}
\,, 
      \label{2IX24.5}
\end{align}
with $\partial_{\hx}\hat \mu_{ij}=0$, where
\begin{equation}\label{10V24.3notapp}
	\hg
	=
	\frac{d\hx^2 + d\hy^2}{\hx^2}
	\,.
\end{equation}
For such metrics a \emph{mass aspect function} $\hat \mu$ can be defined as in \eqref{8IX24.1} or in \eqref{9IX24.11}:
\begin{equation}
  \hmu := \hmu_{22}+2 \hmu_{11} 
   \,.
   \label{8IX24.2}
\end{equation}
There is a simple relation  between $\hmu$ and $\bar{\mu}$. Indeed, 
the metrics \eqref{7IX24.5we} can be rewritten in the form \eqrefl{2IX24.5} by setting $\hx=1/r$, $\hy =\varphi$,  which gives
\begin{eqnarray}
  \backg 
   & = &
    \frac{dr^2}{1+r^2} + r^2 d\varphi^2 +  \Big(
  \frac{\bmu_{ij}}{r^2} + o(r^{-2})
  \Big)  \btheta^i \btheta^j
   \nonumber
\\   
  & = &
   \Big(
    1- \frac{1}{r^2} + O(r^{-4}
    \big)) \frac{dr^2}{r^2} +r^{-2} d\varphi^2 +  \Big(
  \frac{\bmu_{ij}}{r^2} + o(r^{-2})
  \Big)  \btheta^i \btheta^j
   \nonumber
\\   
  & = & \hg
  + 
  \big(  \bmu_{22} -1 + o(1)
  \big)
     \htheta^2\htheta^2
 + \sum_{ij\ne 22} 
 \big( \bmu_{ij} +   o(1)
 \big)
  \htheta^i \htheta^j
  \,. 
 \label{7IX24.5a}
\end{eqnarray}
It follows that $\hmu= \bmu - 1$ and 
\begin{equation}\label{7IX24.2}
  \Hh= \HamM+1
  \,,
\end{equation}
with $\HamM$ defined as in \eqref{7IX24.15} -\eqref{7IX24.12}. 
Note that 
\begin{eqnarray}  
  H^1 &=&
   \frac{1 }{2\pi}
  \int_{S^1} \cos(\varphi)
   \,
  \bmu \,d\varphi
  =
   \frac{1 }{2\pi}
  \int_{S^1} \cos(\varphi)
   \,
  \mu \,d\varphi
  \,, 
    \label{9IX24.12we} 
\\
  H^2 &=&
    \frac{1 }{2\pi}
  \int_{S^1} \sin(\varphi)
   \,
  \bmu \,d\varphi
  =
    \frac{1 }{2\pi}
  \int_{S^1} \sin(\varphi)
   \,
 \mu \,d\varphi
  \,,
    \label{9IX24.13we} 
\end{eqnarray}
where $\mu=\mu_{22} + 2 \mu_{11}$, with $\mu_{ij}$   as in \eqref{bansolutionsONBt}.
\qed
\end{remark}

\begin{remark}
 \label{R8IX24.1er}
In Remark~\ref{R8IX24.1} we assumed implicitly  that $\varphi$ is a coordinate on $S^1$, 
hence defined mod $2\pi$, and so would therefore be the new coordinate $\hy$. But one can instead map directly the metric \eqref{7IX24.5}
into the form \eqrefl{10V24.3notapp} by the coordinate transformation
\eqref{trafo6924} of Appendix~\ref{App11VIII24.2} below, 
in which case the initial range of $\varphi \in (0,2\pi)$ corresponds to $\hy\in \R$. A calculation shows that  
 the mass aspect function transforms as 
\begin{equation}
    \bar{\mu}(\varphi) = \frac{\big(\hat \mu\circ\tan\big)(\varphi/2)}{(1+ \cos(\varphi))^2}
    = \frac{1}{4} (1+\hy^2)^2 \hat \mu(\hy)\,.
     \label{19IX24.21}
\end{equation}
For bounded functions $\bar\mu$ the function $\hat \mu$ decays as $\hy^{-4}$ when
 $\hy\to\pm\infty$, and is  \emph{not} periodic in $\hy$, whether or not it was in $\varphi$, unless equal to zero. 
 
Equation \eqref{19IX24.21} leads to
\begin{eqnarray} 
  \Hh&=&   
   \frac{1 }{2\pi}
    \int_{S^1} \bmu(\varphi) \,
    d\varphi
    \nonumber
\\
 & = &
   \frac{1 }{2\pi}
    \int_{\R} \hmu(\hy) \frac{1+\hy^2}{2}
    d \hy
  \,. 
    \label{7IX24.12gfx} 
\end{eqnarray}
We emphasize that we are \emph{not} using this correspondence in what follows, but that of Remark~\ref{R8IX24.1}. 
\qed
\end{remark}

Let us  pass now  to our main argument. In~\cite{BCN} the following identity has been derived for small perturbations of the hyperbolic space metric $\backg$, in space-dimension $n$: 
\begin{eqnarray}
 H(V,0)
 &=&
 \int_M \
  \Big[
  R - \overline R
  +
   \frac{n+2}{8n}
   |\bcov \, \phi|^2_{\backg}
   +
   \frac{n-2}{4n}
   |\bcov \hat{h}|^2_{\backg}
    +
    \frac{ n^2-4}{8n}
    \phi^2
    \nn
    \\
    &&
    +  \frac 1 {2n}
    \big(
     |\overline{{\fancyD}} \hat h|_{\backg{}}^2
    +
      |\overline{\mathrm{div}}\, \hat h - \hat h_{dV}|_{\backg{}}^2
      \big)
    \Big]
     V \sqrtbg
      \phantom{xxx}
\nn
  \\
    &&
     -
    \int_M
    \Big(
    \gauge{
    \frac{1}{2}
    \big(
       |\bcheckpsi|^2_{\backg}
       -
       \bcheckpsi^i\bcov_i \phi )
       V
       \rsqrtbg
 -
  \big(
    \tensor{h}{^k_i} \bcheckpsi^i
    +
    \frac 12
    \phi\bcheckpsi^k
   \big)
   \bcov_k V
       }
\nn
\\
&&
       +
     \herr V
    +
    O(|h|_{\backg} |\bcov h|^2_{\backg})\, V
          +
    \hdherr  |\bcov V|_{\backg}
     \Big) \sqrtbg
    \,,
     \phantom{xxxx}
     \label{26IV18.8}
\end{eqnarray}
where
\begin{eqnarray}
&
 h_{ij}:= g_{ij} - \backg_{ij}
 \,,
 &
  \label{28II18.1-}
\\
 &
 \psi^j:=\bcov{_ i}  g^{ij}
 \quad
 \Longleftrightarrow
 \quad
 g^{ij} \bcov_i h_{j\ell} =-g_{\ell j}  \psi^j
 \,,
  \label{28II18.1}
  &
\\
 &
 \phi := g^{ij} h_{ij}
 \,,
 \quad 
  \overline \phi := \backg^{ij} h_{ij} = \phi +  O\left(|h |^2_{\backg} \right)
  \,,
  &
 \\
 &
  \bcheckpsi^i:=   \psi^i + \frac{1}{2} g^{i k} \bcov_k \phi 
 \,,
  &
  \\
  & 
 \check{h}_{ij}
 :=
 h_{ij} - \frac{1}{n}  {\phi} \, g_{ij}
 \,,
 \qquad
 \hat{h}_{ij}
 :=
 h_{ij} - \frac{1}{n} \overline{\phi} \, \backg_{ij}
  \,.
\end{eqnarray}
Actually in~\cite{BCN} it was assumed that $n\ge3$ but one readily checks that the calculations also apply with $n=2$.  

When $R\ge \bar R$ the Hamiltonian $\Hh\equiv H(V^0,0)$ is manifestly positive if the \emph{harmonicity vector}
 $ \bcheckpsi$ can be made to vanish (which is a gauge choice) or made small enough, and if the tensor field $h$ is sufficiently small
  in a  weighted $C^1$ topology, with a weight chosen so that the integrals involving the error terms converge and remain small.
It has been pointed out in~\cite{BCN}  that the gauge-choice  $ \bcheckpsi$
 can always be made when  the requirements of a well-defined $H^\mu$ are met and when  $n\ge3$. So, to verify positivity of $\Hh$, for $h$ small enough, when $n=2$ it remains to check the vanishing of $ \bcheckpsi$ in this dimension.

Suppose thus that 
$$ 
  |h| + | \bcov h| 
    \le \epsilon
\,,
$$
for some small $\epsilon>0$. Then
\begin{align}\label{30VIII24.1bc}
 \bcheckpsi^i=  &
 \ 
  \psi^i + \frac{1}{2} g^{i k} \bcov_k \phi 
  \nonumber
   \\
    = & 
    \ 
    \bcov{_ \red{j}}  g^{ij} 
    +\frac{1}{2} g^{i k} \bcov_k (\bar g^{j\ell  }  {h}_{j \ell} ) + O (|h||Dh|)
    \nonumber
   \\
    = & 
    \ 
   -  \bcov{^j}  h^{\red{i}}{}_{j} + O (|h||Dh|)
    +\frac{1}{2} \bar g^{i k} \bcov_k (\bar g^{j\ell  }  {h}_{j \ell} ) + O (|h||Dh|)
    \,.
\end{align}
Consider  a  vector field $X$ satisfying
\begin{equation}\label{1IX24.1}
  |X| + |\bcov X|  + |\bcov \bcov X| \le \epsilon
  \,.
\end{equation}
Under the time-one-flow of $X$ 
 the tensor field $g_{ij}$ transforms as 
\begin{equation}\label{30VIII24.2cd}
  g_{ij}=\bar g_{ij}+h_{ij} \mapsto \bar g_{ij}+h_{ij} + \mcL_X \bar g_{ij} 
  + O (\epsilon^2)
  \,.
\end{equation}
Applying the associated diffeomorphism to the metric redefines $g_{ij}-\bar g_{ij}$, with the leading order contribution of $X$  to \eqref{30VIII24.1bc} arising from the terms 
\begin{equation}\label{30VIII24.5}
   -  \bcov{^j}  
   (h^\red{i}{}_{j} + \mcL_X \bar g^\red{i}{}_{j})
    +\frac{1}{2} \bar g^{i k} \bar g^{j\ell  } \bcov_k 
    (  {h}_{j \ell} + \mcL_X \bar g_{j\ell} ) 
    \,.
\end{equation}
Choosing  $X$ as the solution of the equation
\begin{equation}\label{30VIII24.60}
   -  \bcov{^j}  (\bcov_i X_j + \bcov_j X_i)  
    +  \bcov_i \bcov^j X_j   = -\bcheckpsi_i
\end{equation}
will lead to a gauge-transformed field $\bcheckpsi$ satisfying  
\begin{equation}\label{30VIII24.6}
   \bcheckpsi^i
    = O(\epsilon^2)
    \,,
\end{equation}
which is good enough for positivity of $\Hh$ when  $\epsilon$ is small enough, provided that the relevant terms in \eqref{26IV18.8} decay fast enough so that their order, namely $O(\epsilon^4)$,  is not affected by integration.

Commuting derivatives, \eqref{30VIII24.60} can be rewritten as
\begin{equation}\label{30VIII24.7}
LX^i:=   \Delta_{\backg} X^i  +\underbrace{ 
   \bar R^i{}_j X^j  }_{=-(n-1)X^i} =  \bcheckpsi^i
    \,.
\end{equation}
Since the Laplacian is negative, one readily checks that this equation has unique solutions decaying to infinity for all sources decaying sufficiently fast.  
In fact, from what has been said together with the implicit function theorem,   we find that the gauge $
  \bcheckpsi^i=0$ can be achieved whatever $n\ge 2$ by a suitable coordinate transformation when $\epsilon$ is small enough.

  In order to analyze the relevant decay thresholds, and the 
asymptotic behavior of the solutions,  
let us denote by $\wasxone$ the defining coordinate for the conformal boundary at infinity.
Near $\{\wasxone=0\}$  
we use Fefferman-Graham coordinates, so that $\backg$ takes there the form 
\begin{equation}\label{2IX24.2}
    \backg= \frac{1}{{(\wasxone )}^2} \big( \underbrace{(d{\wasxone })^2 + \exp(2 \fone (\wasxone )) \hbound 
    }_{=:\hhat}
      \big) {=:\exp(2 \myomega) \hhat}
    \,,
\end{equation}
where  
for definiteness we assume that $\phi$ is a smooth function of its
arguments near $\{\wasxone=0\}$, and where 
$\hbound $ is a metric on the conformal boundary that we 
momentarily leave unspecified; if $\hbound$ is the round metric on $S^{n-1}$ then, for
$0\le \wasxone  <1$,
\begin{equation}\label{7IX24.1}
  \fone (\wasxone ) = 
    \mathrm{log}(\frac{1- ({\wasxone })^2}{2})
     \,.
\end{equation}
Hence
%
\begin{align}\label{31VIII24.5} 
\Delta_\backg X^{i} &= e^{-2 \myomega}\Big(\Delta_\hhat X^i + X^i \Delta_\hhat \myomega - (\nabla^i_\hhat \myomega)(2 \nabla_j^\hhat X^j
+(n-2) X^j \partial_j \myomega) \nonumber \\  
&~~~+(\partial_j \myomega) (2 \nabla^i_\hhat X^j + n \nabla^j_\hhat X^i + (n-2) X^i \partial^j \myomega) \Big)\\ 
&= (\wasxone )^2 \Delta_\hhat X^i 
+ X^i \big(1 + \frac{2 (n-1) (\wasxone )^2}{1- (\wasxone )^2}\big)
+{\hat h^{i 1}}(\wasxone  2 \nabla_j^\hhat X^j
-(n-2) X^1 ) \nonumber \\
&~~~ - \wasxone  (2 \nabla^i_\hhat X^1 + n \nabla^1_\hhat X^i) \red{+} (n-2) X^i 
\,,
\end{align}
where the second equality assumes \eqref{7IX24.1}.
The indicial exponents of $L$  can be found by setting $X^i =(\wasxone )^\sigma A^i $, for a vector $A^i$ with constant entries and a number $\sigma\in \R$, 
and finding the zeros of the determinant of the linear map $A^i\mapsto W^i[A](\sigma)\equiv W^i{}_j(\sigma) A^j$, determined from the equation 
\begin{equation}\label{1IX24.2}
L((\wasxone )^\sigma A^i )\equiv   \Delta_{\backg} ((\wasxone )^\sigma A^i )  
 -(n-1)(\wasxone )^\sigma A^i  = : (\wasxone )^\sigma W^i[A](\sigma)  
+  O((\wasxone )^{\sigma+1})
\,.
\end{equation} 
One finds 
that 
$W$ is diagonal, with $W^1{}_1 (\sigma)= \sigma^2 - (n+1)\sigma + 2-n$, and with the remaining eigenvalues
$\sigma^2 -(n+1)\sigma$. 
This leads to the characteristic exponents 
%
$$
 \Big\{ 0, n+1, \frac{1}{2} 
 \big(n+1 \pm \sqrt{(n+1)^2 + 4 (n-2)}
 \big)
 \Big\}
=|_{n=2}
 \{0,3\}
 \,.
$$
For an $n$-dimensional metric of the form 
\begin{align}
    g 
    = & 
  \ 
    \backg + \big((\wasxone )^{n} \bmu_{ij}  + O ((\wasxone )^{n+1})\big) \btheta^i \btheta^j
      \label{2IX24.5rd}\,,  
\quad \mbox{with} \  \btheta^i:= \frac{dx^i}{\wasxone }
\,,
\end{align}
where the $\bmu_{ij}$ 's are $\wasxone $-independent, 
the harmonicity vector   $ \bcheckpsi$ is $O((\wasxone )^n)$ in norm. 
More precisely, in dimensions $n\ge 2$, 
and continuing to use
the coordinate system of \eqref{2IX24.2}, we have
%
\begin{align}\label{2IX24.1}
 & 
  \bcov_i g^{i1} = - \bmu^j{}_{j} (\wasxone )^{n+1} + O((\wasxone )^{n+2})
\,,
\quad
  \bcov_i g^{ip} = O((\wasxone )^{n+2})
\,,
 &
 \nonumber
\\
& 
 \backg^{ik} \bcov_k (\backg^{j\ell} h_{j\ell})
=  n\bmu^j{}_{j} \delta^i_1 (\wasxone )^{n+1} + O((\wasxone )^{n+2})
\,,
 & 
\end{align}
where $\bmu^j{}_{j} = \delta^{k j} \bmu_{k j}$,
so that
\begin{equation}\label{2IX24.3}
  \bcheckpsi
 = 
 \Big( \frac{n -2}{2} (\wasxone )^{n+1}\bmu^j{}_{j} + O ((\wasxone )^{n+2})
 \Big) \red{\partial_n} + 
O ((\wasxone )^{n+2})
 \partial_{\red a}
 \,.
\end{equation}
For   $n>2$ we are in a resonant case, 
where the smallest strictly positive indicial exponent coincides with the decay rate of the source term. 
This leads to logarithmic terms in the transformed metric, which do not change the mass when $n\ge 3$, so that the positivity calculation applies. 
However, such terms would create annoying (though perhaps not essential) problems when $n=2$, as the  total mass is already ill-defined without logarithmic terms in the metric. 

As such, given that the offending, dominant term in $\wasxone $, vanishes when $n=2$, in this dimension it follows 
from~\cite{AndChDiss,Lee:fredholm} 
that there exists a  vector field   $\mathring X$, globally defined by the data at hand, satisfying  
$$
 \partial_n \mathring X=0 
$$
such that
the (unique) solution of \eqref{30VIII24.7} takes the form 
\begin{equation}\label{2IX24.6}
  X^i   =  (\wasxone )^{3} \mathring X^i+ O\big((\wasxone )^{4}  \big)
 \,.
\end{equation}
This leads to a new transformed metric
\begin{equation}\label{4IX24.1}
  g = \backg +
    \big(
     \underbrace{
    \bmu_{ij} -2 \mathring X^1 \delta_{ij} 
     + 3 \mathring X_i  \delta_j + 3 \mathring X_j  \delta_i)
    }_{=: \overline{\bmu}_{ij}}
      + O(\wasxone ) 
       \big)
        \btheta^i \btheta^j
  \,,
\end{equation}
with new mass aspect tensor $\overline{\bmu}_{ij}$. Interestingly enough, this does not change the 
mass aspect function $\bmu$, 
$$
 \bmu = \bmu_{11} + 2 \bmu_{22} \mapsto \overline{\bmu} =
  \underbrace{
   \overline{\bmu}_{11} 
   }_{\bmu_{11} + 4 \mathring X^1} 
    +  2   \underbrace{
   \overline{\bmu}_{22} 
   }_{\bmu_{22} -2 \mathring X^1}  = \bmu
 \,,
 $$
 and hence (cf.\ \eqref{9IX24.11}) the total energy-momentum vector $H^\mu$.  
 
 \begin{remark}
 \label{R2XII24.1}
 One could wonder, how is it possible that the left-hand side 
 of \eqref{26IV18.8}
depends upon the asymptotic gauge, since the right-hand side is uniquely defined by the background and by the metric. The answer is that every choice of asymptotic gauge defines a different background. 
 \qedskip
 \end{remark}
  
The above considerations lead to:

  \begin{theorem}
    \label{t8IX24.1}
    Let $(M,\backg)$ denote the hyperbolic space. 
Consider a metric $g$ of the form
$$
g=\backg+h
\,,
\ 
\mbox{with}
\ 
  |h| + | \bcov h|
    \le \epsilon
\,,
$$
with well-defined total energy-momentum, and  with
$$
 R(g) \ge -2
 \,.
$$
There exists $\epsilon_0>0$ such that for all  $0< \epsilon\le \epsilon_0 $
the energy-momentum vector is timelike future pointing, 
  \begin{equation}\label{8IX24.61}
    H^0 > |\vec H|  
    \,,
  \end{equation}
except if $g=\backg$.  
  \end{theorem}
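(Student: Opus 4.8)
The plan is to combine the mass formula \eqref{26IV18.8}, evaluated in the harmonicity gauge $\bcheckpsi=0$ (shown above to be attainable, for $\epsilon$ small, without changing $H^\mu$), with the Lorentz covariance of $(H^\mu)=(\Hh,\vec H)$ under the isometries of $\backg$. In this gauge, and with $\bR=-2\le R$, the bracket multiplying $V$ in \eqref{26IV18.8} becomes, in $n=2$ (where the coefficients $\tfrac{n-2}{4n}$ and $\tfrac{n^2-4}{8n}$ vanish),
\begin{equation}\label{p8IXQ}
 Q(h):=R-\bR+\tfrac14|\bcov\phi|^2+\tfrac14\big(|\overline{D}\hat h|^2+|\overline{\mathrm{div}}\,\hat h-\hat h_{dV}|^2\big)\ \ge\ 0 .
\end{equation}
Writing $\mathcal E(h,V)$ for the cubic-and-higher remainder in \eqref{26IV18.8}, which obeys $|\mathcal E(h,V)|\le C\epsilon\int_M(|\bcov h|^2+|h|^2)(V+|\bcov V|)\,\sqrtbg$, the positivity argument recalled in the text shows that for $\epsilon$ small enough
\begin{equation}\label{p8IXpos}
 H(V,0)=\int_M Q(h)\,V\,\sqrtbg-\mathcal E(h,V)\ \ge\ \tfrac12\int_M Q(h)\,V\,\sqrtbg\ \ge\ 0
\end{equation}
for every positive KID $V$; in particular $\Hh=H(V^0,0)\ge0$ with $V^0=\sqrt{1+r^2}$ as in \eqref{7IX24.6}.

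First I would establish causality. Composing $V^0$ with an isometry $u$ of $(M,\backg)$ yields again a positive KID, the boosted time-lapse, so \eqref{p8IXpos} applies in every frame and gives $H^0_{(u)}\ge0$ for all future unit timelike $u$. Since $(H^\mu)$ transforms as a Lorentz vector, $H^0_{(u)}=-\eta_{\mu\nu}H^\mu u^\nu$, and non-negativity of this quantity for all future timelike $u$ is exactly the statement that $(H^\mu)$ is future-directed causal: $\Hh\ge|\vec H|$ and $\Hh\ge0$.

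Next I would upgrade this to the strict inequality together with the rigidity statement. If $g\ne\backg$ then $Q(h)\not\equiv0$: the vanishing of $Q(h)$ forces $R=\bR$, $\bcov\phi=0$ and $\overline{D}\hat h=0=\overline{\mathrm{div}}\,\hat h-\hat h_{dV}$, so $\phi$ is constant and $\hat h$ is $\backg$-parallel, whence $\phi\equiv0$ and $\hat h\equiv0$ by the decay at the conformal boundary, i.e.\ $h\equiv0$. Assume therefore $g\ne\backg$, so that $Q(h)>0$ on a set of positive measure; by \eqref{p8IXpos} with $V=V^0$ this already gives $\Hh>0$, and the case $H^\mu=0$ is excluded. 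Suppose for contradiction that $(H^\mu)$ is null, and boost along $\hat m:=\vec H/|\vec H|$: the resulting frames $u_k$ have boost factors $\beta_k\to1$, and Lorentz covariance gives $H^0_{(u_k)}=\Hh\,\sqrt{(1-\beta_k)/(1+\beta_k)}\to0$. Their lapses, however, are $V^0_{(u_k)}=\gamma_k\big(\sqrt{1+r^2}-\beta_k\,\hat m\!\cdot\!x\big)$ with $\gamma_k\to\infty$, and since $\sqrt{1+r^2}>\hat m\!\cdot\!x$ pointwise they diverge to $+\infty$ everywhere on $M$; by Fatou's lemma $\int_M Q(h)\,V^0_{(u_k)}\,\sqrtbg\to+\infty$, so \eqref{p8IXpos} forces $H^0_{(u_k)}\to+\infty$, contradicting $H^0_{(u_k)}\to0$. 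Hence $(H^\mu)$ is strictly timelike and, being future causal with $\Hh>0$, future-pointing: $\Hh>|\vec H|$.

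The main obstacle is the uniform validity of \eqref{p8IXpos} across all boosted frames at once: one must dominate $\mathcal E(h,V^0_{(u)})$ by half of $\int_M Q(h)\,V^0_{(u)}\,\sqrtbg$ for every $u$, even though $V^0_{(u)}$ and $|\bcov V^0_{(u)}|$ both grow like the boost factor $\gamma$. Because the isometries preserve $\backg$, and hence the smallness $|h|+|\bcov h|\le\epsilon$ together with all background-weighted norms, this reduces to the single-frame estimate already invoked in the text; the genuinely delicate input is the gauge/elliptic control of $|\bcov h|^2+|h|^2$ by $Q(h)$ up to boundary terms in the gauge $\bcheckpsi=0$, which is precisely where the vanishing of the $\tfrac{n-2}{4n}$ and $\tfrac{n^2-4}{8n}$ coefficients in $n=2$ and the ALH decay hypotheses must be used with care.
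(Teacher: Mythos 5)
Your proposal is correct, and its core coincides with the paper's (two-line) proof: pass to the gauge $\bcheckpsi^i=0$ --- which, as shown in the section preceding the theorem, can be achieved for small $\epsilon$ without changing the mass aspect, hence without changing $H^\mu$ --- and then read off positivity from the identity \eqref{26IV18.8}, whose $n=2$ specialization you record correctly (the $\tfrac{n-2}{4n}$ and $\tfrac{n^2-4}{8n}$ terms do drop out). Where you genuinely differ is in how the full Lorentz inequality $H^0>|\vec H|$ is extracted. The route implicit in the paper's citation of \cite{BCN} is to insert into \eqref{26IV18.8} every non-negative lapse $a_\mu V^\mu$ with $(a_\mu)$ future causal, including the \emph{null} lapses $\sqrt{1+r^2}-\hat m\cdot \vec x$, whose vanishing at a point of the conformal boundary makes the error-domination step a degenerate-weight estimate. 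You instead use only timelike (boosted) lapses, reduce uniformity over boosts to the single-frame estimate via equivariance of $H(\cdot,0)$ under background isometries (which preserve the hypothesis $|h|+|\bcov h|\le\epsilon$ pointwise), and recover the null borderline by Fatou as the boost velocity tends to $1$: if $(H^\mu)$ were null, the boosted energies tend to zero while their lower bounds $\tfrac12\int_M Q(h)\,V^0_{(u_k)}\,\sqrtbg$ diverge. This is a clean variant that never estimates against a degenerate weight, and it buys an explicit strictness/rigidity mechanism where the paper simply defers. Both arguments rest on the same unproved input, namely the single-frame inequality $H(V^0,0)\ge\tfrac12\int_M Q(h)\,V^0\,\sqrtbg$ (domination of the cubic error terms by the surviving quadratic terms in the gauge $\bcheckpsi^i=0$), which the paper takes from \cite{BCN} with the assertion that the calculations persist for $n=2$, and which you assert at the same level while correctly identifying it as the delicate point. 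One small remark: your kernel analysis of $Q(h)$ reads $\overline{\fancyD}$ as the background covariant derivative; should it be a different first-order operator, rigidity still follows from the error-domination inequality itself, since that inequality bounds $\int_M\big(|h|^2_{\backg}+|\bcov h|^2_{\backg}\big)V^0\,\sqrtbg$ by a multiple of $\int_M Q(h)\,V^0\,\sqrtbg$, so that $Q(h)\equiv 0$ forces $h\equiv 0$ in any case.
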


 {\noindent\sc Proof:}   
As already emphasized, we can  transform the metric to the gauge $  \bcheckpsi^i=0$ without changing the mass aspect function. 
 The result follows then from \eqref{26IV18.8} as in~\cite{BCN}.
 \hfill
 $\Box$
 \bigskip

 \renewcommand{\scri}{{\mycal I}}

We can use the above to prove a version of Theorem~\ref{T15VIII24.1}. 
The argument that follows requires that the spacetime be sufficiently large to contain enough families, as described in the proof, of maximal surfaces spanned on cuts of a conformal completion at infinity, which we denote by $\scri$. 

For this, 
consider a  metric $\fourg$ defined on 
a neighborhood of  a static slice $\{t=0\}$ of  $(2+1)$-dimensional Anti-de Sitter space $(\mcM,\bfour)$, with
\begin{equation}
 \label{16IX24.5}
 \|\fourg-\bfour\| \le \epsilon
  \,,
\end{equation}
for some small number $\epsilon>0$.
The norm  $\| \cdot\|$ here, which we leave unspecified,
 \ptcn{reconsider}
   involves a finite number of weighted derivatives of the metric,
and can be made precise by the requirement of existence of maximal surfaces as guaranteed by~\cite{ShiMax} or~\cite{ChGallowayMax}, with moreover enough regularity of the metric induced on these  to define the Hamiltonian charges; the chasing through these references of the weights and of the number of derivatives is left to the reader.   

We further assume that matter fields, if any,  satisfy the positivity condition
\begin{equation}\label{16IX24.1}
  T_{\mu\nu}X^\mu X^\nu \ge 0 \ \mbox{for timelike vectors $X^\mu$.}
\end{equation}
We require the matter fields to decay sufficiently fast,
 so that there exists a smooth conformal completion at infinity with leading-order asymptotics as in Section~\ref{s11VII24.1}.
More precisely we assume that $(\mcM,\fourg)$ contains a region which can be covered by coordinates as 
in Section~\ref{s11VII24.1} for an interval of  time 
$t\in[-C\epsilon,C\epsilon]$, where $C$ is a large constant which can in principle be determined by the constructions 
in the proof of Theorem~\ref{t8IX24.1}. Note that this would be  easy to satisfy for weak fields which have a well posed initial value problem with boundary at infinity  with the  asymptotic behavior just described; whether or not this is the case for  Einstein equations in $2+1$ dimensions, be it in vacuum or with sources, is however not clear.
 \ptcn{deserves a thought}
 
 \medskip
 
 Assuming the above, we have:
 
  \begin{theorem}
    \label{t8IX24.1a} 
There exists $\epsilon_0>0$ such that for metrics satisfying \eqref{16IX24.5}
with $0\le \epsilon\le \epsilon_0 $, on each slice of constant time the following holds:

\begin{enumerate}
\item  Both $\mu\pm j$ are of monodromy type  $\mcH_0$, or  $\mycal{P}_0$,   
 or $\mcP^{-1}_1$, 
or $\mcE_m$ with $m\ge -1$,  and 
in each asymptotic gauge
we have
\begin{equation}\label{8IX24.6}
 \HamM^0 - |J|   \ge 0
  \,.
\end{equation}

  \item  
 The mass aspect function $\mu $ is of hyperbolic type $\mcH_0$,  
or elliptic type $\mcE_m$ with $m\ge -1$.

  \item  In each asymptotic gauge the energy-momentum vector is timelike future pointing, 
  \begin{equation}\label{8IX24.6b}
    H^0 > |\vec H|  
    \,,
  \end{equation}
except if $\fourg=\bfour$. 
\end{enumerate}  
  \end{theorem}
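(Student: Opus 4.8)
The plan is to reduce all three assertions to the perturbative positivity of Theorem~\ref{t8IX24.1}, the bridge being the replacement of each constant-time slice by a maximal surface spanning the same cut of $\scri$. I would fix a slice $\{t=t_0\}$ and let $\cut\subset\scri$ be its cut at the conformal boundary. Since $\|\fourg-\bfour\|\le\epsilon$ as in \eqref{16IX24.5}, the existence theory of \cite{ShiMax,ChGallowayMax} provides a maximal surface $\Sigma$ (so that $\tr_g K=0$) asymptotic to $\cut$, lying in the slab $t\in[-C\epsilon,C\epsilon]$ and $O(\epsilon)$-close to the totally geodesic hyperbolic slice of $\bfour$ spanning $\cut$, in a weighted topology strong enough for Theorem~\ref{t8IX24.1}. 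On $\Sigma$ the scalar constraint \eqref{CstrScl} with $\tr_g K=0$ and $\Lambda=-1$, together with the matter positivity \eqref{16IX24.1} giving $\matter\ge0$, forces $R_g=|K|^2_g+2\matter-2\ge-2$, so the induced metric is of exactly the type handled by Theorem~\ref{t8IX24.1}.

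Applying that theorem to $(\Sigma,g)$ shows the energy-momentum vector is timelike future pointing, $\Hh>|\vec H|$, unless $g=\backg$; and since the Hamiltonian charges \eqref{17VIII24.1} depend only on the cut $\cut$ and not on the surface spanning it, the same holds for the charges read off $\{t=t_0\}$. This is assertion (iii) in the gauge adapted to $\{t=t_0\}$. To obtain it in every asymptotic gauge, and to reach the information about $J$ needed for (i), I would repeat the construction on the cuts obtained from $\cut$ under the isometries of $\bfour$, all of which still lie in the slab and are spanned by near-hyperbolic maximal surfaces --- this is where one needs the spacetime to contain enough families of maximal surfaces. Under the spatial isometries $(\Hh,\vec H)$ is a Lorentz vector, while the remaining isometries mix in $J$ and the center of mass; positivity of the energy in every resulting frame then shows the full collection of Hamiltonian charges to be causal future pointing, which is precisely the content of \eqref{8IX24.6b} and of $\Hh-|J|\ge0$ in \eqref{8IX24.6}, i.e. the perturbative analogue of \eqref{15VIII24.2dr}. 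The rigidity clause propagates from Theorem~\ref{t8IX24.1}: $g=\backg$ on $\Sigma$ forces $K=0$ and $\matter=0$ through the constraints, so $\Sigma$ is a totally geodesic hyperbolic slice and the spacetime coincides with $\bfour$.

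For the monodromy statements I would feed this boundedness below of the charges into the coadjoint-orbit classification of \cite{Balog,Oblak} as organised in \cite{ChWutte}. The chiral functions $\mu\pm j=4\mathcal{L}_\mp$ carry monodromy types invariant under the Schwarzian action \eqref{barLs}; the bad types $\mcH_n$, $\mcP^q_n$ with $n\ge1$ (save $\mcP^{-1}_1$), and $\mcE_m$ with $m<-1$ are exactly those for which $\HamM\pm J$ is unbounded below along the orbit, hence are excluded by the finite lower bound just obtained, leaving the good or marginal types of (i). Once the type is good or marginal one has $\inf(\HamM\pm J)\ge-1$ over the orbit, so $\HamM\pm J\ge-1$ --- equivalently $\Hh-|J|\ge0$ --- in \emph{every} gauge, as (i) asserts. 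For (ii), the same classification, now applied to $\mu$ under the slice-preserving action \eqref{3X24.1}, together with the strict inequality $\Hh>|\vec H|$ (which, through \eqref{9IX24.11}--\eqref{9IX24.13}, constrains the lowest Fourier modes of $\mu$), restricts the type of the mass aspect to hyperbolic $\mcH_0$ or elliptic $\mcE_m$ with $m\ge-1$, removing the parabolic boundary.

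The main obstacle is twofold. Technically, the maximal surfaces must be close to hyperbolic space not merely in $C^0$ but in the weighted $C^1$ (or stronger) norm demanded by Theorem~\ref{t8IX24.1}, which requires chasing the weights and the number of derivatives through \cite{ShiMax,ChGallowayMax}, and the spacetime must be large enough --- the role of the slab $t\in[-C\epsilon,C\epsilon]$ --- to supply maximal surfaces for every cut used in the all-frames step. Conceptually, the delicate point is the type determination in (i)--(ii): the bad orbits, notably $\mcH_1$ and $\mcE_m$ with $m$ just below $-1$, accumulate at the AdS orbit $\mcP^{-1}_1$, so closeness to $\bfour$ cannot by itself exclude them; the exclusion must be carried entirely by the positivity, and one must verify that the argument controls each chiral charge $\HamM\pm J$ finely enough, rather than only the total energy $\Hh$, to pin down the orbit type.
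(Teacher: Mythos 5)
Your proposal assembles the same ingredients as the paper's argument (maximal surfaces spanning cuts of $\scri$ via \cite{ChGallowayMax}, invariance of the mass aspect under passage to the maximal surface, Theorem~\ref{t8IX24.1}, and the orbit classification of \cite{Balog,Oblak,ChWutte}), and your treatment of point 3 coincides with the paper's. But there is a genuine gap in your exclusion of the bad monodromy types, located precisely where you restrict to ``cuts obtained from $\cut$ under the isometries of $\bfour$''. A bad type ($\mcH_n$, $\mycal P^q_n$ with $n\ge 1$ other than $\mcP^{-1}_1$, or $\mcE_m$ with $m<-1$) does \emph{not} assert that the charges are negative at the given point of the orbit; it asserts that $H\pm J$ is unbounded below along the orbit of the \emph{full}, infinite-dimensional asymptotic symmetry group acting by \eqref{barLs}. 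A configuration of bad type can perfectly well have a causal future-pointing charge vector $(H^0,\vec H,J,\vec C)$ in its given gauge and in every gauge obtained from it by the finite-dimensional isometry group of $\bfour$: causality is a finite set of inequalities at a single orbit point and is compatible with any monodromy type. So ``the finite lower bound just obtained'' excludes nothing, and your own closing remark --- that the bad orbits accumulate at the AdS orbit --- is exactly this unresolved problem, not merely a technical caveat. The same restriction also undercuts your claim to have proved \eqref{8IX24.6} and \eqref{8IX24.6b} ``in every asymptotic gauge'': a general asymptotic gauge shifts the charges by Schwarzian terms and is not reachable by isometries, so positivity there is not a consequence of causality in the original frame.

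The paper closes this gap by running the positivity argument \emph{along the orbit}, as a contradiction: if one of $\mu\pm j$ were bad, then by \eqref{30VII24.2} an asymptotic symmetry of size $O(\epsilon)$ --- a genuine supertranslation-type transformation, not an isometry --- produces a new section $\{\bar t=0\}$ of $\scri$ on which $H^0<0$; one then spans a maximal surface on \emph{that} cut, invokes Appendix~\ref{app10IX24.1} to see that its mass aspect agrees with that of $\{\bar t=0\}$, passes to the gauge $\bcheckpsi^i=0$ (which again does not change the mass aspect), and contradicts the identity \eqref{26IV18.8}. This is what forces both $\mu\pm j$ to be of good or marginal type, and then the orbit classification gives the inequality in all gauges, as in your final step. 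Two smaller points: your assertion that the Hamiltonian charges ``depend only on the cut and not on the surface spanning it'' is not free --- it is exactly the content of the computation in Appendix~\ref{app10IX24.1} for maximal surfaces (including the absence of $\bar x^3\log\bar x$ terms in the graph function), and should be cited as such rather than assumed; and for point 2, be aware that the paper itself only excludes $\mycal P_0$ by strict positivity and leaves the marginal $\mcP^{-1}_1$ case as an expectation, so your Fourier-mode argument would need to be made precise to do better. If a fully rigorous statement is wanted, the paper's recommendation is to bypass the perturbative route altogether and use the Witten-type Theorem~\ref{T15VIII24.1}.
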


 {\noindent\sc Proof:} 
 Our aim here is to sketch how to obtain the result from the small-data analysis above. A
  complete proof in the current setting would require working out analytical estimates, and formulating more precise assumptions on the spacetime,  which would be a wasteful effort as  these can be avoided using Theorem~\ref{T15VIII24.1}. 
 
 So, suppose that one of $\mu\pm j$ is of a bad monodromy type. It follows from \eqref{30VII24.2} that one can then apply an asymptotic symmetry, of size $O(\epsilon)$,   
 to the metric so that $H^0<0$ on the new hypersurface, say  $\{\bar t=0\}$. 
 By e.g.~\cite{ChGallowayMax}
there exists a maximal hypersurface spanned on the new section $\{\bar t=0\}$ of $\scri$ (which will in general be different from the original section $\{ t=0\} \cap \scri$).
We show in Appendix~\ref{app10IX24.1} that the metric induced on  the maximal hypersurface has the same 
mass aspect function as the one on $\{\bar t=0\}$.
 From what has been said one can  transform the metric induced on the maximal surface  to the gauge $  \bcheckpsi^i=0$, again without changing the mass aspect function, and hence maintaining $H^0<0$. This contradicts the identity \eqref{26IV18.8} when $\epsilon_0$ is chosen small enough.
 
 Strict positivity of $H^0$ if $\fourg\ne \bfour$  excludes  the type $\mcP_0$ 
 which has zero mass,
and we expect that the marginal $\mcP^{-1}_1$ case
can also be excluded by a careful analysis of the sequence of gauges along which the mass tends to zero.
 
 This establishes points 1.\ and 2.

Point 3. is  obtained by applying  Theorem~\ref{t8IX24.1} to the metric induced on the maximal surface spanned on $\{t=0\}\cap\scri$.
 \hfill
 $\Box$
 %

 %

\subsection{A Witten-type argument}
 \label{ss19IX24} 
 
\ptcn{the file mass\_questions.tex has various equations which can perhaps be recycled here, commented out}
Recall that in dimensions $n\ge 3$ inequalities in the spirit of  \eqref{8IX24.6}  have been established in~\cite{Maerten,CMT} (compare \cite{CTAngMom}) when $\Lambda<0$ using a Witten-type argument. The standard approach to this,
and which we will follow,
   requires the existence of a spin bundle which carries imaginary    Killing spinors for a background metric in the asymptotic region;
 see, however, \cite{SkenderisCheng} for an alternative approach. 
 
\begin{remark}
 \label{R3XI24.1} 
Some preliminary comments about spin structures are in order.%
\footnote{We are very grateful to Nikolai Saveliev and Rudolph Zeidler for providing us with this description. The results summarized here can be inferred from e.g.\ \cite{MilnorSpin}.}
First, every two-dimensional manifold $M$ carries exactly one  spin structure if it is simply connected, and at least two distinct spin structures when   $\pi^1(M)$ is nontrivial.
Next, assume that $M$ has a nonempty boundary $\partial M$ and let $A \subset M$ be a closed annulus with one boundary coinciding with a connected component of $\partial M$.  Then  $A$ carries exactly two distinct spin structures  when $A$ is viewed as a manifold on its own, with two boundary components. Let us call   \emph{canonical}  
the spin structure induced on $A$ by restriction when $A$ is viewed as a subset of a disc, and let us call  \emph{twisted}  the remaining one.  
If $M$ is conformally compact   
and $\partial M$ is connected, then $M$ induces on $A$ the canonical spin structure. 
If $M$ is non-compact with compact conformal infinity, or if   
 the boundary of the conformally completed manifold has more than one component (whether at finite distance or at infinity),
 then
$M$ carries at least two spin structures, with one of them inducing on $A$ the canonical structure, and another inducing the twisted one.
%
%
\qed
\end{remark}

We are ready now to pass to our first proof of positivity.

\medskip

{\noindent\sc Proof of Theorem~\ref{T15VIII24.1}:}
Consider a complete two-dimensional ALH initial data set $(M,g,K)$ equipped with the canonical spin structure on an annular neighborhood of a connected component of the conformal boundary at infinity $\partial M$.
As in Section \ref{30VII24} the mass aspect is defined by   writing   $g$  as (compare Remark~\ref{R8IX24.1})
\begin{eqnarray}
  g 
   &= & 
     \frac{ dr^2}{r^2+1 } + r^2 d\varphi^2 
       + 
       \big( \bmu_{ab}(\varphi) 
       + o(r^{-2}) 
       \big)
       \theta^a \theta^b      
        \,.
       \label{22X24.46bw} 
\end{eqnarray}
%
We thus choose the background as
 \ptcn{should it be $\backg$ or $\mzg$? \\ -- \\ rw: this should be $\backg$, since we use
 $\mzg$ for our standard background, the cusp, compare to   
 \eqref{backgroundban}, \eqref{bansolutionsONB}}
\begin{equation}\label{24X24.6}
 \frac{ dr^2}{r^2+1 } + r^2 d\varphi^2
  \,.
\end{equation}
 %
For this background, in addition to the static KIDs already mentioned, and to the rotational Killing vector  of $\mzg$,  
    we   have two Killing vector fields $\vec{\mathcal C}$ of 
    \eqref{16X24.1}, with associated Hamiltonian charges $\vec C$  defined in \eqref{51024}. 
    
The full set of imaginary Killing spinors for the background \eqref{24X24.6},  in a  spin frame which avoids the coordinate singularity at $r=0$, can be found in Appendix~\ref{s24X24.1}. 
 
We wish to show  that the Witten-type positivity proof as in~\cite{ChHerzlich,CMT,GHWSpires} leads to 
\begin{align}
\label{27IX24.9}
 H^0   \ge |J| + \sqrt{|\vec C|^2 + |\vec H|^2 + 2 |\star ( \vec H\wedge \vec C) }|
  \,,
\end{align}
where 
$$\star (\vec H\wedge \vec C) = H^1 C^2 - H^2 C^1
 \,.
$$
\ptcn{some obsolete remarks in WittenComment commented out}
%
Indeed, using two-component spinors,  the analysis in~\cite{CMT} shows that the quadratic polynomial
\begin{equation} 
\label{26IX24.7}
 \mathbb{C}^2 \ni u \mapsto 
  \bar u^t  Q u
  \,,
\end{equation}
with the  
  matrix $Q$ 
given by 
\begin{equation}\label{26IX24.5}
  Q = H^0 + i \gamma^k H_k + \gamma^0 \gamma^k C_k + \frac{1}{2} i \gamma^0 \gamma^{k j} J_{kj}
  \,,
\end{equation} 
with 
%
\begin{align}
  \gamma^{k j} &:= \frac{1}{2} \left(\gamma^k \gamma^j - \gamma^j \gamma^k \right)\,, \\
  J_{kj} &=  \left(
    \begin{array}{cc}
      0& J \\
      -J &  0\\
    \end{array}
    \right)
    \,,
  \end{align}
 takes values in $[0,\infty)$
(see the unnumbered equation above \cite[Equation~(3.12)]{CMT}).

If we use the following  representation of the $\gamma$-matrices,  
\begin{equation}
\gamma^1 = i \sigma^1\,,
\qquad 
\gamma^2 = i \sigma^2\,,
\qquad 
\gamma^0 = \sigma^3\,,
\label{firstirrep}
\end{equation}
where the $\sigma^i$'s are the Pauli matrices,
one finds 
\begin{equation}
Q = \left(
  \begin{array}{cc}
    H^0 + J & i
     C^1+ C^2 - H^1 + i
     H^2 \\
   -i C^1+ C^2 - H^1 - i
   H^2 & H^0 + J \\
  \end{array}
  \right)\,.
   \label{24X24.7}
\end{equation}
(We use the convention where the $\gamma$-matrices 
 satisfy
 \begin{equation}\label{27IX24.1}
   \gamma^\mu \gamma^\nu + 
   \gamma^\nu \gamma^\mu 
    = - 2 \eta^{\mu\nu}
    \,,
 \end{equation}
with $\eta^{\mu\nu}= \mathrm{diag}(-,+,+)$.) 
The eigenvalues of $Q$ are
\begin{align}
 H^0 + J \pm  \sqrt{(C^2 - H^1)^2 + (C^1 + H^2)^2}\,, 
\end{align}
and positivity of \eqref{26IX24.7} shows that
\begin{align}
 H^0 + J \ge \sqrt{(C^2 - H^1)^2 + (C^1 + H^2)^2} 
 \equiv \sqrt{|C|^2 + |\vec H|^2 -  2 \star ( \vec H\wedge \vec C) }
 \,.
\end{align}
Choosing instead  ${\gamma^\prime}^\mu = - \gamma^\mu$ one finds
\begin{equation}
Q^\prime = \left(
  \begin{array}{cc}
    H^0 - J & i
     C^1+ C^2 + H^1 - i
     H^2 \\
   -i C^1+ C^2 + H^1 + i
   H^2 & H^0 - J \\
  \end{array}
  \right)\,,
   \label{1XI24.3}
\end{equation}
and positivity gives
\begin{align}
 H^0 - J \ge  
  \sqrt{|C|^2 + |\vec H|^2 +2 \star ( \vec H\wedge \vec C) }
  \,.
\end{align}

 Applying the argument to the same initial data set endowed with the opposite space-orientation
 (equivalently, exchanging the previous $\gamma^1$ and  $\gamma^2$)  leads to two inequalities where $\vec H$ and $\vec C$ are interchanged, 
and \eqref{27IX24.9} follows. 

Now,  since \eqref{27IX24.9} holds in all asymptotic gauges, the mass aspect function $\mu$ can only be of good or of marginal type. 

In the marginal case there exists a sequence of gauges in which the integrals of  $\mu$, say $H^0_n$, tend  to zero as $n$ tends to infinity. It follows from \eqref{27IX24.9} that the  values of momentum, center of mass, and angular momentum in these gauges, say $\vec H_n$, $\vec C_n$, and $J_n$, also tend to zero as $n\to \infty$. This gives a clear motivation for the following definition in the marginal case:
\begin{equation}
 \label{1XI24.5}
  \underline{\vec H} = 0 = \underline{\vec C} = \underline{J}
  \,.
\end{equation}
In all remaining  cases  the $H^0$-minimizing gauge is attained, and we obtain
\begin{align}
\label{27IX24.9rd}
 \underline H^0   \ge | \underline J| + \sqrt{|\vec { \underline C}|^2 
  + |\vec{  \underline H}|^2 
  + 2 |\star (\vec{  \underline H}\wedge \vec 
  {\underline C}) 
    }|
  \,,
\end{align}
where $\vec{  \underline C}$ and $\vec{  \underline H}$  and $ \underline J$ are the values of the charges calculated in the minimizing gauge.

Clearly \eqref{27IX24.9rd} holds both in the good and in the marginal cases,  so that the claimed inequality \eqref{15VIII24.2dr} holds. 
\qed

 \begin{remark}
  \label{R30X24.1}
The vanishing of any of the four eigenvalues listed above implies the existence of an imaginary   Killing spinor.
 The Witten identity shows then that matter must be lightlike, in the sense that the length of the  
 momentum density vector $|\vec J|$ equals the energy density $\rho$.
 Metrics with such spinors have been listed in e.g.\ \cite[Theorem~5.3]{LeitnerIKS}. It would be of interest to determine which of 
 these metrics  are ALH. 
It follows from  \cite[Proposition~2.1]{LeitnerIKS} that if the space of imaginary   Killing spinors has dimension larger than one, which occurs when either of the matrices  $Q$ of \eqref{24X24.7} or $Q'$ of \eqref{1XI24.3} vanish,
 then the initial data set is vacuum.
 \qed
\end{remark}

\ptcn{previous version in Witten3} 
{\noindent \sc Proof of Theorem~\ref{T30X24.1}:}
We define  the mass aspect using a mass-aspect tensor 
$\mu_{ab}$ obtained by writing   $g$ as
\begin{eqnarray}
  g 
   &= & 
     \frac{ dr^2}{r^2  } + r^2 d\varphi^2 
       + 
       \big( \mu_{ab}(\varphi) 
       + o(r^{-2}) 
       \big)
       \theta^a \theta^b      
        \,.
       \label{22X24.46bwx} 
\end{eqnarray}

We start by noting that when $(M,g)$ has more than one ALH end,  the null expansion $\theta$ of the level sets of $r$, when determined with respect to the normal pointing towards the manifold, tends to $-1$ when $r$ tends to infinity. 
Hence, we can cut-off the manifold $M$ at some large $r$ near one of the remaining ALH ends to obtain a manifold with trapped boundary. This reduces the problem to one where $(M,g)$ has a non-empty trapped or weakly trapped boundary.
 
Thus, without loss of generality, we  can assume that we have two spin structures at disposal near that  conformal boundary at infinity where we measure the mass. If we choose the canonical spin structure, we can repeat the argument   of Theorem~\ref{T15VIII24.1},
which applies in the presence of  weakly trapped compact boundaries, leading to \eqref{27IX24.9rd}. But we can  choose instead the twisted spin structure. Then, using the imaginary Killing spinors 
 \eqref{22X24.42}-\eqref{22X24.43} with $a_2=0=b_2$, 
 the $Q$ matrix of \eqref{24X24.7} becomes  
\begin{equation}
Q = \left(
  \begin{array}{cc}
    \HamM - \epsilon J & 0
       \\
   0 &  \HamM - \epsilon J \\
  \end{array}
  \right)\,,
   \label{24X24.8}
\end{equation}
keeping in mind that $\Hh$ in \eqref{24X24.7} needs now to be replaced by  $
 \HamM$.
Positivity of $Q$ leads therefore to 
\begin{equation}\label{24X24.8456}
  \HamM=  \Hh -1 \ge |J|
  \,.
\end{equation}
Minimizing over asymptotic gauges gives the result.
 \qedskip

\begin{remark}\label{R30X24.2}
Repeating the last proof using instead the imaginary  Killing spinors of the extreme BTZ black hole leads to the same inequality,
with saturation  if $(M,g,K)$ are initial data for an extreme BTZ black hole. We expect that the ``if'' of the last sentence 
 is ``if and only if'', with a proof which 
should follow from   \cite{LeitnerIKS}, or from calculations mimicking \cite[Section~3.1]{CMT}.
 \ptcn{to do eventually? but then the same problem as in Remark \ref{R30X24.1}}
 \qed
\end{remark}

	\section{Some matter models}
	\label{s11VIII24.11}

The aim of this section is to  include some matter field models in the problem at hand.
For this it is  convenient to use the ADM notation:
 \tqn{checked the consistency except for §7}
\begin{equation}\label{3VI24.1}
	\ADMg = - N^2 dt^2 + \pg_{ij}(dx^i + N^i dt) (dx^j + N^j dt)
	\,,
\end{equation}
so that in particular
\rwc{checked on 19824}
\begin{equation}\label{3VI24.2a}
	\ADMg^ {\mu\nu}\partial_\mu \partial_\nu  = - N^{-2} (\partial_t - N^i \partial_i)^2 +   \pg^{ij} \partial_i \partial_j
	\,,
\end{equation}
and
\rwc{checked on 19824}
\begin{equation}\label{3VI24.2}
	\det \ADMg_{\mu\nu} = - N^2 \det \pg_{ij}
	\,.
\end{equation}
We will denote by $n$  the field of future-oriented unit-normals to the spacelike hypersurfaces $\{t=\mathrm{const}\}$:
\rwc{checked on 19824}
\begin{equation}
	n^\mu \partial_\mu = N^{-1} \left(\partial_t - N^{i}\partial_i\right)\,, \qquad n_\mu dx^\mu = -N \, dt\,.
\end{equation}
A non-vanishing energy-momentum tensor $T$  affects the constraint equations through the energy density
%
$$
\matter=n^\mu n^\nu \mnd T
$$
and  the matter current
$$
J^i = n^\mu g^{ij} T_{\mu j}
\,.
$$
As such, there is some freedom in the conformal method to prescribe the conformal transformation properties of the fields at hand,   
which will be exploited below.

\subsection{Maxwell fields in vacuum}
 \label{ss11VIII24.12}
Let us consider an electromagnetic field
\begin{equation}
	F=dA \,,
\end{equation}
satisfying Maxwell's equations
\begin{eqnarray}\label{MaxwellGauss}
	\nabla_\nu  F^{\mu\nu} &=& 4\pi\,j^{\mu}\,, 
\\
	\nabla_\mu {} \hodge F^{\mu} &=& 0\,,
\end{eqnarray}
where $\nabla$ is the connection of the spacetime metric $\ADMg$, $j^{\nu}$ is the charge-current three-vector, and $\hodge F$ is the dual of $F$ defined by 
\begin{equation}
	\label{hodgedualF}
	\hodge F^{\mu} = \frac12\epsilon^{\mu\nu\lambda}F_{\nu\lambda}\,,
\end{equation}
where $\epsilon^{\mu\nu\lambda}$ is the Levi-Civita tensor satisfying
\begin{equation}\label{6IX24.1}
	\epsilon^{\mu\nu\lambda} = \frac{\mathring \epsilon^{\mu\nu\lambda}}{\sqrt{-g}} \,, \qquad \epsilon_{\mu\nu\lambda} = \sqrt{-g} \,\mathring \epsilon_{\mu\nu\lambda}\,,
\end{equation}
with $\mathring \epsilon^{\mu\nu\lambda}$ the flat Levi-Civita symbol such that $\mathring \epsilon_{012}=1$ and $\mathring \epsilon^{012}=-1$.
\ptcheck{23VIII}

In $2+1$ dimensions we define the electric and magnetic fields as
\rwc{checked on 19824}
\begin{align}\label{3VI24.3}
	E^i &:= -n_\nu F^{\nu i} \equiv N\,F^{0i}\,, 
	\quad
 E^0 := 0
\,,
\\
	\quad
	B &:= n_\nu {} \hodge F^\nu \equiv \frac 12 \epsilon^{ij} F_{ij}\,, 
\end{align}
where  
	$$
\epsilon^{\mu \nu} := n_\rho \epsilon^{\rho \mu \nu}
 \,.
$$ 
This leads to the following decomposition of $F$: 
\begin{equation}
	\label{decomp}
	F^{\mu \nu} = \epsilon^{\mu \nu} B + n^\mu E^\nu - E^\mu n^\nu\,, \quad \hodge F^\mu =  n^\mu B - \epsilon^{\mu\nu}E_\nu\,.
\end{equation}
In this notation   the 
	Gauss constraint equation reads
\begin{eqnarray}
	D_i E^i &=& 4\pi\rho_Q \,, \label{4VI24.1} 
\end{eqnarray}
where $D$ is the covariant derivative of the metric $\pg_{ij}dx^idx^j$, and where 
$\rho_Q = N j^0$ is the charge density.

In the absence of charges and currents the equations can be derived from the (randomly-normalized) Lagrangian
\begin{equation}\label{26VIII24.1}
  \mycal{L} = \frac 1 {4} \sqrt{-\det g} F^{\alpha \beta} F_{\alpha\beta}
  \,,
\end{equation}
leading to an energy-momentum tensor  
\begin{align}
	T_{\mu\nu} &= 
 F_{\mu \lambda}F^{~ \lambda}_\nu - \frac14 g_{\mu\nu} F_{\rho\lambda} F^{\rho\lambda} 
 \nonumber
  \\
	&= 
\frac{B^2 + \normsq{E}{\pg}}2 (\pg_{\mu\nu} + \red{2} n_\mu n_\nu) - E_\mu E_\nu + (\epsilon_{\mu \alpha}n_\nu \red{+} \epsilon_{\nu \alpha}n_\mu)E^\alpha B
\,.
\end{align}
Hence,
\begin{align}
	\matter = \frac{B^2 + \normsq{E}{\pg}}{2} \,,\\
	J^i =  \epsilon^{ij}E_j B \,.
	\label{9II25.1}
\end{align}

The question arises, how should the fields above transform under the conformal rescaling $	\ijd\pg = e^{-2u} \ijd\tg$. One finds that \eqref{4VI24.1} will have convenient transformation properties if we require that under this conformal scaling  the electric field transforms as
\begin{equation}
	\tE^i = e^{-2u} E^i\,;
\end{equation}
in particular if $g$ is asymptotic to $ x^{-2} (dx^2 + dy^2)$ then the coordinate components of $E$ transform as
\begin{equation}
	\tE^i = x^{-2}  E^i\,.
\end{equation}
Since the left-hand side of the vector constraint equation transforms as 
$$
\tD_i \tL^{ij} = e^{-4u} D_i L^{ij}\,,
$$
it is convenient to choose
\begin{equation}
	\tB = e^{-2u} B\,,
\end{equation}
so that, using \eqref{9II25.1} and \eqref{6IX24.1},
$$ 
\tilde J^i = e^{-4u} J^i,
$$
hence,
$$
D_i L^{ij} = J^j 
 \quad
  \Longleftrightarrow 
   \quad
   \tD_i \tL^{ij} = \tilde J^j\,.
$$
As a consequence the energy density $\matter$ of the Maxwell field equals
\begin{equation}
	\matter = \matter_B + \matter_E = e^{4u}\widetilde{\matter_B} + e^{2u}\widetilde{\matter_E}\,,
\end{equation}
where
$$
\matter_B 
 : = \frac{B^2}{2}\,, \quad \matter_E 
 := \frac{\normsq{E}{\pg}}{2}\,, \qquad \widetilde{\matter_B} 
 := \frac{\tB^2}{2}\,, \quad \widetilde{\matter_E} 
 := \frac{\normsq{\tE}{\tg}}{2}\,.
$$
In coordinates in which  $g$ is asymptotic to $x^{-2}(dx^2+dy^2)$, with $\tg = dx^2+dy^2$ so that $e^u \sim x $, the finite mass condition imposes
\begin{equation}
	\tB = \gO{x^{-\frac32 +\varepsilon}}\,, \quad 
 \big( \tE^i = \gO{x^{-\frac12 +\varepsilon}}
 \ \Leftrightarrow \
  |E|_g = \gO{x^{\frac12 +\varepsilon}} 
 \big )
\,,
\label{9II25.2}
\end{equation} 
for some $\varepsilon>0$,
where $\tB$  is the magnetic field and the  $\tE^i$'s are the coordinate components of the electric field, both rescaled to  the Euclidean metric.

Finally, the conformally-rescaled constraint equations written with respect to the hyperbolic metric $\tg$ become
\begin{eqnarray}
	\tD_i \tL^{ij} &=& \tilde J^j 
\,,
\\
	\Delta_{\tg} u &=& \qa \, e^{2u} + \qb - e^{-2u}\,,
	\label{9II25.4}
\end{eqnarray}
where
\begin{equation}
	\qa = \frac12 \normsq{\tL}{\tg} + \widetilde{\matter_B} \,,\quad \qb = 1 + \widetilde{\matter_E}\,,
 \label{9II25.p1}
\end{equation}
with  $\qa,\qb \ge 0$,

\subsection{Scalar field} 
 \label{11VIII24.21}
 
Another field, relevant e.g.\ for cosmology, is the scalar field, studied for $n=3$ in \cite{BruSFAE} on asymptotically Euclidean manifolds and in \cite{BruSFCompact} on compact ones. Following these authors we consider a scalar field $\MyPsi$   satisfying 
\begin{equation}
	\nabla_\mu \nabla^\mu \MyPsi = V'(\MyPsi)\,,
\end{equation}
where $V= V(\MyPsi)$ is the potential. The energy-momentum tensor reads
\begin{equation}
	T_{\mu\nu} = \partial_\mu \MyPsi \partial_\nu \MyPsi - g_{\mu\nu} 
 \Big(\frac12 g^{\alpha \beta} \partial_\alpha \MyPsi \partial_\beta \MyPsi + V(\MyPsi)\Big)\,.
\end{equation}
We will only consider field configurations where $\myPsi$ approaches a constant at the conformal boundary. Without loss of generality we can then assume that 
$$
 \myPsi\to 0
$$
as the conformal boundary at infinity is approached. Redefining the cosmological constant if necessary, and assuming that the redefined cosmological constant remains negative, 
we can without loss of generality assume that 
\begin{equation}\label{18VIII24.1}
  V(0)=0
  \,.
\end{equation}

We use again the ADM notation and  continue to use the symbol $\myPsi$ for the restriction of $\MyPsi$ to the initial data surface. Another relevant quantity is the field 
\begin{equation}
	\pi := N^{-1} (\partial_t - N^i\partial_i)\myPsi 
\,.
\end{equation}
Using this notation, one has 
\begin{equation}
	2\matter = \pi^2 + \normsq{d\myPsi}{\pg} + 2V(\myPsi) \,,
	\qquad
	J^i = \pi \pg^{ij} \partial_j \myPsi\,.
\end{equation}

We will use the conformal method, with a rescaling as in \eqref{Scalingu}, i.e.\ $	\ijd\pg = e^{-2u} \ijd\tg$.
There is no constraint on the conformal transformation of $\myPsi$, we choose
\begin{equation}
	\tilde\myPsi = \myPsi\,,
\end{equation}
and therefore we will interchangeably use the symbols $\tilde \myPsi$ and $\myPsi$ in what follows.
However, following \cite{BruSFAE}, $\pi$ is defined with the lapse function $N$ that we require to behave  under conformal scalings as a density:
\begin{equation}
	\frac{N}{\sqrt{\det\pg}} = \frac{\tilde N}{\sqrt{\det\tg}}\,.
\end{equation}
Hence
\begin{equation}
	N = e^{-2u}\tilde N\,,
\end{equation}
and, choosing $N^i$ to be invariant under conformal transformations, we obtain
\begin{equation}
	\pi = e^{2u} \tilde\pi\,.
\end{equation}
Finally, 
\begin{equation}
	2\matter = e^{4u}|\tilde\pi|^2 + e^{2u}\normsq{d\tmyPsi}{\tg} + 2V(\tmyPsi)\,,
	\qquad
	J^i = -e^{4u}\tilde\pi\tg^{ij}\partial_j\tmyPsi =: e^{4u} \tilde J^i\,,
\end{equation}
so that the constraint equations become 
\begin{eqnarray}
	\tD_i \tL^{ij} &=& \tilde J^j
	\,,
	\\
	\Delta_{\tg} u &=& \qa \, e^{2u} + \qb - \qc \, e^{-2u}\,,
	\label{9II25.3}
\end{eqnarray}
with
\begin{equation}
	\qa = \frac12 \normsq{\tL}{\tg} + |\tilde\pi|^2_{\tg} \,,\quad \qb = 1 + \normsq{d\tmyPsi}{\tg} \,,\quad \qc = 1 - 2 V(\tmyPsi)\,.
 \label{9II25.p2}
\end{equation}
Note that the finite-mass condition imposes the asymptotic behaviors
\begin{equation}
	\qpi = \gO{\bro^{-\frac32 +\varepsilon}}\,, \quad |D\qpsi| = \gO{\bro^{-\frac12 +\varepsilon}}\,, \quad V(\qpsi) = \gO{\bro^{1+\varepsilon}}\,,
\end{equation}
for some $\varepsilon>0$ as the conformal boundary at infinity $\{\bro=0\}$  is approached, with $\qpsi\equiv \mypsi$ and  $\qpi$ denoting the fields scaled as relevant  for the Euclidean metric.

Using the notation \eqref{24VI24.11}, the vector constraint equation reads\footnote{While the complex notation $\partial_z \qpsi$  applies, it might be  somewhat misleading in that we assume that $\myPsi$ is real-valued.}
$$
\partial_{\bar z}  \qf = -{\rm i} \qpi \partial_z \qpsi\,.
$$
The function 
$\qf$ will be holomorphic only if $\qpsi$ is constant, so complex analysis does not seem to be useful here. 
However, we show in Section~\ref{11VII24.4} that one can always find a solution.

As for the existence and uniqueness of $u$,  in order to be able to apply the analysis that follows without further due, we will 
only consider initial data sets for which
\begin{equation}\label{9VII24.41}
	V(\tmyPsi)\le \frac 12
 \,.
\end{equation}
\subsection{Existence and regularity in the non-vacuum case}
 \label{11VII24.4} 
We start with the problem
\begin{eqnarray}
	\tD_i \tL^{ij} &=& \tilde J^j 
\,,\label{11VII24.2}
\\
	\Delta_{\tg} u &=& \qa \, e^{2u} + \qb - \qc \, e^{-2u}
  \label{11VII24.3}
   \,,
\end{eqnarray}
and consider the question of existence and uniqueness of solutions.

Using the York decomposition
\begin{equation}
	\tL^{ij} = \tilde D^i \tilde Y^j + \tD^j \tilde Y^i - (\tD ^k\tilde Y_k)\delta^{ij}\,,
\end{equation}
the vector constraint equation \eqref{11VII24.2} becomes
\begin{equation}
	\Delta_{\tg} \tilde Y^j +  \tilde R^j{}_k \tilde Y^k={\tilde{J}}^j\,,
\end{equation}
for which we can always find a solution with $\tilde Y^i|_{\partial\B}=0$ when the Ricci tensor $\tilde R_{ij}$ of $\tg$ is non-positive, as is the case of the hyperbolic metric.
 
Smoothness of $\tilde Y^j$ at the conformal boundary follows from standard elliptic theory when the conformally rescaled field $\tilde J^j$ extends smoothly through the boundary. More generally, a polyhomogeneous expansion of $\tilde J^j$ at the conformal boundary at infinity leads  to a polyhomogeneous expansion of $\tilde Y^j$, with coefficients which can be obtained by matching polyhomogeneous expansions of both sides of the equation.
 
For the scalar constraint \eqref{11VII24.3}, the arguments presented in Appendix~\ref{s26VII24.2} provide existence and uniqueness of solutions on Riemannian manifolds which are the union of a compact set and a finite number of ALH ends whenever the functions $\qa>0$, $ \qb$  and $\qc >0$ are uniformly bounded. 

Concerning  regularity near the boundary, let us for simplicity assume that the functions $a$, $b$ and $c$ in \eqref{11VII24.3} 
extend smoothly across the conformal boundary. Rewriting the Lichnerowicz equation as 
$$
\aL u := \aF(\mycdot,u) + \aS(\mycdot) \,,
$$
where $\mycdot$ stands for the coordinates $(x,y)$ in the half-space model, so that
$$
 \tg = \frac{dx^2+dy^2}{x^2}
$$
with
$$
\aL:= \Delta_{\tg} - 2(\qa + \qc)\,,
\
\aS:=\qa + \qb - \qc \,,
\
\aF(\mycdot,u):= \qa (e^{2u} -1 -2u) - \qc (e^{-2u} -1 +2u)
\,;
$$
we can apply \cite[Theorem 7.4.5]{AndChDiss} to obtain a
 polyhomogeneous expansion of $u$. 
 Writing  
\begin{equation}\label{15VIII24.1b}
  a(x,\myy) = a_1 (\myy) x
  + \ldots
  \,,
  \quad 
  b(x,\myy) =1+b_1 (\myy) x
  + \ldots
  \,,
  \quad 
  c(x,\myy) =1+c_1 (\myy) x
  + \ldots
  \,,
\end{equation}
where we have assumed that $a$ vanishes at the boundary (as is the case under the finite mass condition), 
we find 
\ptcheck{18VIII}
\begin{align}\label{15VIII24.1c}
  u 
  &  
= 
     \frac{(c_1-b_1-a_1)  } 2 x 
     \nonumber
\\
 &
      + \frac{1}{6} \left(a_1 (2 c_1-4 b_1)-3 a_1^2+2
   a_2-b_1^2+2 b_2+c_1^2-2 c_2\right)x^2 \log x  + O(x^2)
  \,.
\end{align}
In particular if 
\begin{equation}\label{18VIII24.2}
  a=O(x^3)\,, 
  \quad 
   b=1 +O(x^3)
   \,,
   \quad
   c= 1+O(x^3)
   \,,
\end{equation}
the solution $u$ will be smooth at (and near) $\{x=0\}$,
\ptcheck{18VIII}
with 
\begin{align}\label{15VIII24.2}
  u (x,y)
  &  
= u_2(y) x^2 
 + 
     \frac{(a_3+b_3-c_3) (y) } 4 x ^3   + \ldots
  \,,
\end{align}
with a function $u_2(y)$ which is defined globally by the initial data.

For instance, if the scalar field   is smooth at the conformal boundary, Equation \eqref{18VIII24.2} will hold for initial data satisfying 
\begin{equation}\label{18VIII24.31}
  V(\myPsi) = O(x^3)
  \,,
  \quad
 \partial \mypsi =  O(x) 
 \,,
 \quad
 |\tilde\pi|^2_{\tg} = O(x^3)
   \,,
\end{equation}
which also (more than) guarantees that our  finite total energy condition is satisfied. 
Similarly, in the case of  Maxwell fields which are smooth at the conformal boundary, we will need
\begin{equation}\label{18VIII24.3}
	\tilde B =  O(x^2) 
	\,,
	\quad
	 |E^i| = O(x^3)
	\,,
\end{equation}
where $E^i$ are the components of the electric field in local coordinates near the conformal boundary.

 
\section{Maskit-type gluing of ALH spacelike  initial data}
 \label{s11VIII24.14}

In \cite{ChDelayExotic}, ``Maskit-type'' gluing results are proved for AH data sets in dimension $n\ge3$. We check here 
that the main results there continue to hold when $n=2$. 

We start with  reviewing some notation from \cite{ChDelayExotic}, and  refer  to this last reference for   definitions of function spaces. 

We consider a smooth two-dimensional manifold $\overline M$ with boundary $\partial M$ and an open subset $\Omega \subset M$ such that $\partial \Omega \cap M$
 \begin{figure}
	\centering
 \includegraphics{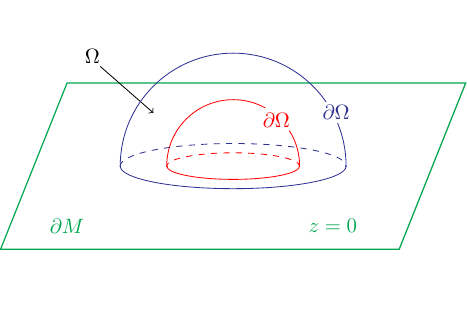}
    \caption{The set $\Omega$ and its boundaries. In this section the boundary $\partial M$ is given by $\{z=0\}$. The boundary of $\Omega$ has three components, one is a subset of $\partial M$. The function $x$ vanishes precisely on both the blue and red components of the boundary of $\Omega$.}
	\label{F26VIII21.2}
\end{figure}
 is the union of two disjoint smooth hypersurfaces $\Sigma_0, \Sigma_1$. We denote by $\gz$, resp. by $\gx$, the defining functions for $\partial M$, resp. $\overline{\partial \Omega \cap \partial M}$.
(In order to ease comparison with \cite{ChDelayExotic} we thus use $z$ for a coordinate which is denoted by $x$ or $\hat x$ at some other places in this paper.)
  We set 
 $$
  \gr:=\sqrt{\gx^2+\gz^2}
  \,,
$$
hence the inequalities on $\Omega$
\begin{equation}\label{8VI24.1}
	0 < \frac \gx \gr \le 1 , \quad 0 < \frac \gz \gr \le 1 \,.
\end{equation}
Finally, we will denote again by $\zg$ the hyperbolic model metric on $M$.
 
Let us define the constraint operator
\begin{equation}
	\sources[(K,g)]:=
	\begin{pmatrix}
		2\big(D_j (\iju K - \tr_g(K) \iju g)\big) \\
		R_{g} - \normsq K g + (\tr_{g}(K))^{2} - 2\Lambda
	\end{pmatrix}
	\,.
\end{equation}
We also introduce  $P_{(K,g)}$ as the linearization of $\sources[(K,g)]$ along with its adjoint $P^*_{(K,g)}$; and the operator $\Phi: (A,B)\mapsto(\phi A, \phi^2 B)$. We remind the reader that   elements of the kernel of $P^*_{(K,g)}$ are known as ``KIDs''.
From two AH data sets $(\Ka,\ga)$ and $(\Kb,\gb)$ close on $\Omega$, we build the interpolated couple
\begin{equation}
	(K, g)_\chi \equiv (K_\chi, g_\chi):= (1-\chi) (\Ka,\ga) + \chi (\Kb,\gb) \,,
\end{equation}
where $\chi$ is a smooth cut-off function with values in [0,1], with $\chi$ equal to zero near $\Sigma_0$ and equal to one near $\Sigma_1$.
We would like to build glued Einstein initial data sets, i.e. find data sets $(K,g)$ such as $\sources[(K,g)]$ interpolates between $\sources[(\Ka,\ga)]$ and $\sources[(\Kb,\gb)]$. For this we set
\begin{eqnarray}
	\sources_\chi &:=& (1-\chi) \sources[(\Ka,\ga)] + \chi \sources[(\Kb,\gb)] \,,\\
	\delta\sources_\chi &:=& \sources_\chi - \sources[(K_\chi,g_\chi)] \,.
\end{eqnarray}
The following theorem, proved in \cite{ChDelayExotic} assuming $n\ge 3$,   generalizes without further due to $n=2$ to show  that such data sets can be found in a neighborhood of $(K,g)_\chi$.
\begin{theorem}
	\label{5VI24.G1}
	\cite[Theorem 3.10]{ChDelayExotic}
	Let $2\le k\in  \N$, $b\in[0,1]$, $\sigma>b+\frac 12$, and
$$
\mbox{$\phi = \gx/\gr$,
 $\psi= \gx^a\gz^b\gr^c$, $\tau \in \R$.
 }
 $$
 Suppose that $(\ga-\zg) \in C^{k+4}_{1,z^{-1}}$ and
	$(\Ka-\tau\zg)\in C^{k+3}_{1,z^{-1}}$.
	For all real numbers $a$ and $c$ large enough and all
	$(\Kb,\gb)$ close enough to $ (\Ka,\ga)$ in $C^{k+3}_{1,z^{-\sigma}}(\Omega )\times C^{k+4}_{1,z^{-\sigma}}(\Omega )$
	there exists a unique couple of two-covariant symmetric tensor fields
 of the form 
	%
	$$
	(\delta K, \delta g)
	= \Phi\psi^2 \Phi P^*_{(K,g)_\chi}(\delta Y,\delta N)\in \psi^2 \left(\phi\Lpsikg{k+2}{g_\chi}\times
	\phi^2\Lpsikg{k+2}{g_\chi} \right)
	$$
	such that  $(K_\chi+\delta K,g_\chi+\delta g)$ solve
	\bel{fullcolle}
	\sources
	\left[(K,g)_\chi+(\delta K,\delta g)\right]
	- \sources[
	(K,g)_\chi]
	=\delta\sources_\chi
	\,.
	\ee
	Furthermore,  there exists a constant $C$ such that
	\begin{align}
	\label{estimatesolutionfullC}
 \|(\delta K,\delta g)
  &\|_{\psi^2
  \left(\phi\mathring H^{k+2}_{\phi,\psi}(g_\chi)\times \phi^2\mathring H^{k+2}_{\phi,\psi}(g_\chi)\right)}
\nonumber
 \\
	& \leq C \left\|
	\sources(\Kb,\gb)-\sources (\Ka,\ga)
	\right\|_{ \mathring H^{k+1}_{1,z^{-b}}(g_\chi)\times \mathring H^{k}_{1,z^{-b}}(g_\chi)}
	\,.
	\phantom{xx}
	\end{align}
	The   tensor fields $(\delta K,\delta g)$ vanish at $\partial \Omega$ and can be $C^{k+1}$-extended by zero across $\partial \Omega$.
\qed
\end{theorem}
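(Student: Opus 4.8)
The plan is to reproduce the argument of \cite[Theorem~3.10]{ChDelayExotic} line by line, verifying at each step that the hypothesis $n\ge 3$ invoked there is inessential. The first point to stress is that this construction is \emph{not} based on the conformal method, so the two-dimensional conformal subtleties flagged in the introduction are irrelevant here: one solves the underdetermined constraint system directly, via the formal adjoint. Concretely, one inserts the ansatz $(\delta K,\delta g)=\Phi\psi^2\Phi P^*_{(K,g)_\chi}(\delta Y,\delta N)$ into \eqref{fullcolle} and writes $\sources[(K,g)_\chi+(\delta K,\delta g)]-\sources[(K,g)_\chi]=P_{(K,g)_\chi}(\delta K,\delta g)+Q(\delta K,\delta g)$, where $Q$ gathers the terms at least quadratic in the increments. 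The equation to be solved then reads
\[ L(\delta Y,\delta N)+Q\big(\Phi\psi^2\Phi P^*_{(K,g)_\chi}(\delta Y,\delta N)\big)=\delta\sources_\chi\,, \qquad L:=P_{(K,g)_\chi}\,\Phi\psi^2\Phi\,P^*_{(K,g)_\chi}\,. \]

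The analytic heart of the matter is to show that the formally self-adjoint operator $L$ is an isomorphism between the weighted spaces appearing in the statement. As in \cite{ChDelayExotic} this follows from a weighted coercivity estimate of the shape $\int_\Omega \psi^2\,|\Phi P^*_{(K,g)_\chi}(\delta Y,\delta N)|^2 \ge C\,\|(\delta Y,\delta N)\|^2$ in the relevant $\mathring H_{\phi,\psi}$-norms, upgraded by elliptic regularity. This estimate simultaneously encodes the injectivity of $P^*_{(K,g)_\chi}$ on the weighted space, since the strong weights $\psi=\gx^a\gz^b\gr^c$ (with $a,c$ large) force any would-be KID to decay too fast at $\partial\Omega$ and hence to vanish. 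The underlying scalar weighted Poincaré inequality near the corner $\{\gx=0\}\cup\{\gz=0\}$ is proved by an integration-by-parts computation in the quotients $\gx/\gr$ and $\gz/\gr$, and the conditions $\sigma>b+\tfrac12$ and ``$a,c$ large enough'' are exactly the thresholds that make the resulting boundary terms have the correct sign and the weighted integrals converge. With $L$ invertible, the full nonlinear problem is solved by the inverse function theorem, the smallness of $\delta\sources_\chi$ --- controlled by $\|\sources[(\Kb,\gb)]-\sources[(\Ka,\ga)]\|$ and hence by the closeness of the two data sets --- guaranteeing a solution together with the estimate \eqref{estimatesolutionfullC}; the $C^{k+1}$-extension by zero across $\partial\Omega$ is immediate from the vanishing of $(\delta Y,\delta N)$ and its derivatives there.

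It then remains to audit the dimension-dependent ingredients, all of which turn out to be benign. The principal symbols of $P_{(K,g)}$ and $P^*_{(K,g)}$ are given by the same formulas in every dimension, so the overdetermined ellipticity of $P^*$ --- equivalently the ellipticity of $L$ --- holds for all $n\ge 2$; the dimension enters $\sources$ only through the zeroth- and first-order trace terms $\tr_g K$ and $R_g$, which do not disturb the leading symbol. The mapping properties, interpolation, and Moser-type product inequalities for the spaces $\mathring H^k_{\phi,\psi}$ are established in \cite{ChDelayExotic} by localization and scaling arguments insensitive to $n$, and the nonlinear estimates on $Q$ require only the Sobolev embeddings available once $k\ge 2$, again irrespective of dimension.

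The single genuinely dimension-sensitive point, and the step I expect to demand the most care, is the weighted coercivity estimate: in dimension two the relevant background indicial exponents shift, and one must check that the exponent arithmetic driving the weighted Poincaré inequality still closes for the given $\phi=\gx/\gr$ and $\psi=\gx^a\gz^b\gr^c$. Since the hyperbolic model metric $\zg$ has constant curvature $-1$ in every dimension, and since the inequality is driven by the degeneration of $\gx$ and $\gz$ at the corner rather than by any curvature identity special to $n\ge 3$, the estimate does survive once $a$ and $c$ are taken large enough --- precisely the freedom already built into the hypotheses. Granting this, the remainder of the proof is word-for-word that of \cite{ChDelayExotic}.
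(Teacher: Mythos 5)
Your proposal is correct and takes essentially the same route as the paper: the paper's own proof consists precisely of the observation that the argument of \cite[Theorem 3.10]{ChDelayExotic} --- the ansatz $\Phi\psi^2\Phi P^*_{(K,g)_\chi}(\delta Y,\delta N)$, the weighted coercivity estimate making $L=P\,\Phi\psi^2\Phi\,P^*$ an isomorphism (with the strong weights excluding KIDs), and the inverse function theorem --- carries over to $n=2$ ``without further due'', which is exactly the dimension audit you carry out. Your identification of the overdetermined ellipticity of $P^*$ for all $n\ge 2$ and of the weighted Poincar\'e/coercivity estimate near the corner as the only genuinely dimension-sensitive ingredients matches the paper's (unwritten, asserted) verification.
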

\ptcn{proof moved to ProofGluing}
\ptcn{one should add something about asymptotics in the gluing region here; mass convergence? a decent one would be obtained by a gluing a la Mao Tao only in any case}

From the previous theorem, we deduce the following generalization of \cite[Theorem 3.12]{ChDelayExotic} to $n=2$,
where we consider initial data set which are vacuum near the conformal boundary:

\begin{theorem}\label{T3X15.1}
	Let $(M_a, K_a, g_a)$, $a=1,2$ be two asymptotically locally hyperbolic and asymptotically CMC
	two-dimensional initial data sets satisfying the Einstein (vacuum) constraint
	equations, with the same constant asymptotic values of  $\tr_{g_1}K_1$ and $\tr_{g_2}K_2$ as $\partial M$ is approached and with locally conformally flat boundaries at infinity
	$\partial\overline M_a$. Let $p_a\in \partial M_a$ be  points on the conformal boundaries.
	Then for all $\varepsilon$ sufficiently small there exist asymptotically locally hyperbolic 
 and asymptotically CMC vacuum initial data sets
	$(M_\varepsilon, K_{\varepsilon }, G_{\varepsilon })$
	such that
	\begin{enumerate}
		\item
		$M_\varepsilon$ is diffeomorphic to the interior of a boundary connected sum of the $M_a$'s, obtained
		by excising small half-balls $B_1$ around
		$p_1$ and $B_2$ around $p_2$, and identifying their boundaries.
		
		\item
		On the complement of coordinate half-balls of radius $\varepsilon$ surrounding $p_1$ and $p_2$,
		and away from the
		corresponding neck region in $M_\varepsilon$, the data
		$(K_{\varepsilon }, g_{\varepsilon })$  coincide with the original ones
		on each of the $M_a$'s.
	\end{enumerate}
\end{theorem}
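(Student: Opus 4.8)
The plan is to follow the template of the proof of \cite[Theorem 3.12]{ChDelayExotic}, reducing the global boundary-connected-sum construction to the local gluing estimate of Theorem~\ref{5VI24.G1}, whose validity for $n=2$ has already been recorded above. Since the geometric part of the construction does not see the dimension, I would only need to check the points at which $n=2$ genuinely enters. First I would place each metric $g_a$ in the half-space normal form of Section~\ref{ss11VIII24.2}: because the boundary at infinity is locally conformally flat and the data are asymptotically CMC with $R_{\tg}=-2$ near the boundary, there are local coordinates $(x,y)$ about $p_a$ in which $g_a$ agrees with $x^{-2}(dx^2+dy^2)$ to leading order as in \eqref{2VIII24.2}, and in which $\tr_{g_a}K_a$ tends to the common constant. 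I would then excise a small coordinate half-ball $B_a$ of radius $\varepsilon$ about each $p_a$ and identify the resulting hypersurfaces, producing $M_\varepsilon$ and its neck, and introduce on the overlap $\Omega$ the defining functions $z$ for $\partial M$ and $x$ for the neck boundaries $\Sigma_0,\Sigma_1$, with $\rho=\sqrt{x^2+z^2}$ and $\tg$ the hyperbolic model metric, exactly as in Figure~\ref{F26VIII21.2}.

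Next I would match the two data sets to high order near the gluing points. Since the boundary conformal structures are one-dimensional, hence locally identifiable after a reparametrization of $y$, and since both data sets share the same asymptotic constant value of $\tr K$, for $\varepsilon$ small the pulled-back data $(K_1,g_1)$ and $(K_2,g_2)$ can be made as close as desired on $\Omega$ in the weighted norms $C^{k+3}_{1,z^{-\sigma}}(\Omega)\times C^{k+4}_{1,z^{-\sigma}}(\Omega)$ of Theorem~\ref{5VI24.G1}, with $\sigma>b+\tfrac12$; I would fix the weights $\phi=x/\rho$ and $\psi=x^a z^b\rho^c$ with $a,c$ large enough that the hypotheses of that theorem hold. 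I would then form the interpolated data $(K_\chi,g_\chi)=(1-\chi)(K_1,g_1)+\chi(K_2,g_2)$ with $\chi$ a cut-off equal to $0$ near $\Sigma_0$ and $1$ near $\Sigma_1$.

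The vacuum hypothesis simplifies the source term: since $\sources[(K_1,g_1)]=\sources[(K_2,g_2)]=0$, one has $\sources_\chi=0$ and $\delta\sources_\chi=-\sources[(K_\chi,g_\chi)]$, which is supported in $\{0<\chi<1\}$ and, by the nonlinearity of $\sources$, is controlled by the closeness of the two data sets on $\Omega$, hence small in the relevant $\mathring H^{k+1}_{1,z^{-b}}\times\mathring H^{k}_{1,z^{-b}}$ norm when $\varepsilon$ is small. Applying Theorem~\ref{5VI24.G1} would then yield a unique correction $(\delta K,\delta g)=\Phi\psi^2\Phi P^*_{(K,g)_\chi}(\delta Y,\delta N)$, vanishing at $\partial\Omega$ and $C^{k+1}$-extendable by zero, so that $(K_\chi+\delta K,g_\chi+\delta g)$ solves the vacuum constraints exactly. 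The support of the correction would give properties 1 and 2, and the weight $z^{-\sigma}$ with $\sigma>b+\tfrac12$ would guarantee that $(\delta K,\delta g)$ decays fast enough at $\partial M$ to preserve both the asymptotically locally hyperbolic character and the asymptotically CMC condition of the glued data.

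The hard part, and the only genuinely dimension-sensitive step, is the solvability underlying Theorem~\ref{5VI24.G1}, namely the invertibility of $P_{(K,g)_\chi}\Phi\psi^2\Phi P^*_{(K,g)_\chi}$ on the chosen weighted spaces, which rests on the absence of obstructing KIDs supported near the conformal boundary. I would argue that in $n=2$ this holds exactly as in higher dimensions: on the negatively curved model the operator has trivial kernel among fields satisfying the imposed boundary and neck-vanishing conditions, the two-dimensional counterpart of the no-conformal-Killing-field observation of Section~\ref{ss24VII24.11b}. Granting this, the estimate \eqref{estimatesolutionfullC} holds verbatim and the construction closes; the remainder of the argument is dimension-independent.
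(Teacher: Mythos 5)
Your proposal is correct and is essentially the paper's own argument: the paper gives no separate proof of Theorem~\ref{T3X15.1}, deducing it from Theorem~\ref{5VI24.G1} exactly as in the proof of \cite[Theorem 3.12]{ChDelayExotic}, which is precisely the template you follow. The details you fill in --- closeness of the two data sets on $\Omega$ coming from the ALH, asymptotically CMC and conformally-flat-boundary hypotheses, the vanishing of $\sources_\chi$ in vacuum so that the corrected data $(K_\chi+\delta K, g_\chi+\delta g)$ solve the constraints exactly, and the support and weighted-decay properties of $(\delta K,\delta g)$ yielding points 1 and 2 --- are exactly the intended content of that deduction, with the dimension-sensitive solvability already absorbed into Theorem~\ref{5VI24.G1}.
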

%

We note that the new approach to gluing of Mao, Oh and Tao~\cite{MaoOhTao,MaoTao},
based on Bogovski\v \i-type operators which control well the support of the solutions, allows one to produce gluings as above in dimensions $n\ge 3$ with better control of the data and lower differentiability requirements. The method generalizes to the time-symmetric case $K_{ij}=0$ in $n=2$~\cite{CCN,CDN}, but does not  apply as is to the whole system of the constraints because in space-dimension two the divergence operator acting on trace-free tensors is elliptic, which prevents one to localize the support of the solutions of the vector constraint equation.

 \renewcommand{\blue}[1]{{#1}}

\section{Characteristic Cauchy problem}
 \label{s18VIII24.1}

An alternative to the spacelike Cauchy problem, which plays a prominent role in current mathematical studies of the Einstein equations, is the characteristic Cauchy problem. 
Here we consider the $(2+1)$-dimensional vacuum characteristic Cauchy problem in Bondi gauge. 
In this section we exceptionally consider both cases $\Lambda<0$ and $\Lambda>0$.

Let  thus $\mcN$ be a null hypersurface  in a $3$-dimensional spacetime, with all its generators meeting a circle $S^1$ transversally. If the divergence scalar of $\mcN$ has no zeros one can construct on $\mcN$  Bondi-type coordinates $(u,r,\varphi)$, with 
$$
 \mcN=\{u=0\}
  \,,
$$
and where $\varphi$ is a local coordinate on $S^1$,  in which the metric takes the form (see~\cite{MaedlerWinicour} or~\cite[Appendix~B]{ChCong0})
\begin{align}
    g_{\alpha \beta}dx^{\alpha}dx^{\beta}
   &=  -\frac{V}{r}e^{2\beta} du^2-2 e^{2\beta}dudr
  \nonumber
\\
 &\qquad
   +r^2\gamma_{\varphi\varphi}\Big(d\varphi-U^\varphi du\Big)\Big(d\varphi -U^\varphi du\Big)
    \label{23VII22.1}
    \,,
\end{align}
with $\gamma_{\varphi\varphi} = \gamma_{\varphi\varphi}(\varphi) $,
and with $\int_{S^1}\sqrt{\gamma_{\varphi\varphi} } d\varphi = 2\pi$.
The inverse metric reads
\begin{equation}
    g^{\sharp} =  e^{-2\beta} \frac{V}{r}\, \partial_r^2 - 2 e^{-2\beta} \, \partial_u\partial_r - 2e^{-2\beta} U^\varphi \,  \partial_r \partial_\varphi +\frac{1}{r^2} \gamma^{\varphi\varphi} \, \partial_\varphi \partial_\varphi\,.
\end{equation}
 By a redefinition of $\varphi$ one can always achieve 
\begin{equation}\label{19VIII24.1}
 \gamma_{\varphi\varphi} \equiv 1
 \,.
\end{equation}

The characteristic Cauchy problem is usually formulated in terms of two  null hypersurfaces intersecting transversally on a spacelike submanifold $S$.  If $\mcN$ is one of these hypersurfaces, then in the Bondi coordinates associated with $\mcN$ the free characteristic data are

\begin{enumerate}
  \item the tensor field  $\gamma_{\varphi\varphi}d\varphi^2$ on $\mcN$,   
  \item and the collection of fields 
$(\beta,U^\varphi, 
\partial_r U^\varphi,
 \partial_u U^\varphi, 
V)$ on $S$.  
\end{enumerate}

The part of the data associated purely with $\mcN$ is therefore trivial, since $\gamma_{\varphi\varphi}d\varphi^2$ can always be transformed to $d\varphi^2$.  Note that this renders the $S^1$-coordinate $\varphi$  unique up to a rotation of the circle. Given the free characteristic data, the Einstein equations can be used to solve for all components of the metric on $\mcN$. 
 
 Let us pass to an analysis of the vacuum Einstein equations,
\begin{equation}\label{19VIII24.2}
  G_{\mu\nu} + \Lambda g_{\mu\nu} = 0
  \,,
\end{equation}
 in this setting in the gauge \eqref{19VIII24.1}.
First, the vanishing 
of the $rr$ component of the Einstein tensor provides an equation restricting $\beta$ to a $r$-independent function:
\begin{equation}
         \label{19VII.1}
          0\equiv\frac{r}{2} G_{rr} = \partial_{r} \beta \implies 
          \beta=\beta(u,\varphi)  
          \,.
\end{equation}

Next, the $rA$-component of the Einstein equations can be written as
\begin{align}
         0 \equiv 2   r  e^{2\beta}  G_{rA} = \partial_r \left[r^{3} (\partial_r U^\varphi) + r \partial_\varphi e^{2\beta}\right] \,.
\end{align}
This can be integrated in the variable $r$ to give
\begin{align} 
  U^\varphi 
              =  \mU{} (u,\varphi)r^{-2} + \mmU (u,\varphi)  
               +\frac{\partial_\varphi e^{2\beta (u,\varphi)}} r
               \label{14VIII23.22}%
                  \,,
           \end{align}
for some arbitrary $S^1$-vector fields  $\mU$ and $\mmU$. 

The metric function $V$ can now be determined from the equation 
\begin{align}
           0 &\equiv 2\Lambda r^2 - r^2g^{AB}R_{AB} 
              \nn
              \\
              &=
                2\Lambda r^2+ 2  \Big[\partial_\varphi^2 \beta
                + (\partial_\varphi\beta) (\partial_\varphi \beta)\Big]
               - e^{-2\beta}\partial_\varphi \Big[ \partial_r (r^{2 }U^\varphi)\Big]
               \nonumber
\\
                &\qquad
                +\frac{1}{2}r^4 e^{-4\beta}  (\partial_r U^\varphi)(\partial_r U^\varphi)
                +r  e^{-2\beta} \partial_r(  V/r )\,.
                   \label{13VIII24.2}
           \end{align}
Using the solution for $U^\varphi$ gives:  
\begin{align} 
\frac{V}{r}  &
   = \frac{e^{-2\beta}\mU{}^2}{r^2} + 2 r \partial_\varphi \mmU  - e^{2\beta}\Lambda r^2 +\frac{4\mU \partial_\varphi \beta}{r}  + c(u,\varphi) 
   \,,
   \label{8XI24.w1}
\end{align}
where the \emph{Trautman-Bondi mass-aspect function}  $c$ is  determined globally from the initial data.

We continue with the equation $G_{u\varphi} = \Lambda r^2 U^\varphi$ , 
\begin{align}
    0 &\equiv re^{2\beta}(G_{u\varphi}-\Lambda r^2 U^\varphi)
    \nn
    \\
    &=
    -2 \mU \partial_u\beta + \partial_u\mU - 2 \mmU \mU \partial_\varphi \beta -  e^{2 \beta} c \partial_\varphi \beta + 8 e^{4 \beta} (\partial_\varphi \beta)^3 
    \nn
    \\
    &\quad
    -\frac{1}{2} e^{2\beta} \partial_{\varphi} c + 2 \mU \partial_\varphi \mmU + \mmU \partial_\varphi \mU + 4 e^{4 \beta} \partial_\varphi \beta \partial_\varphi^2 \beta 
    \,,
    \label{18VIII24.5}
\end{align}
which can be thought-of as a $\partial_u$-evolution equation for $ \mU$.

The equation  
$$
 G_{uu} = -\Lambda
  \big(
   - \frac{V e^{2\beta}}r+r^2 (U^\varphi)^2
   \big)
$$
provides  a $\partial_u$-evolution equation for  $c$:
\begin{align}
    0 &\equiv r
    (G_{uu} + \Lambda g_{uu})
    \nn 
    \\
    & = 
    -  {c \partial_u\beta}  +  \frac{\partial_uc}{2 } +  {2 e^{2 \beta} \Lambda \mU \partial_\varphi \beta}  
    -   {\mmU c \partial_\varphi \beta}  -   {8 e^{2 \beta} (\partial_\varphi \beta)^2 \partial_\varphi \mmU}  +  {c \partial_\varphi \mmU}  
    \nn
    \\
    &\quad 
    +\frac{\mmU \partial_\varphi c}{2} +  {e^{2 \beta} \Lambda \partial_\varphi \mU}  -   {8 e^{2 \beta} \partial_\varphi \beta \partial_\varphi \partial_u\beta}  -   {8 e^{2 \beta} \mmU \partial_\varphi \beta \partial_\varphi^2 \beta}  
    \nn
    \\
    &\quad 
    -   {4 e^{2 \beta} \partial_\varphi \mmU \partial_\varphi^2 \beta}  -   {2 e^{2 \beta} \partial_\varphi^2 \partial_u\beta}  -   {2 e^{2 \beta} \mmU \partial_\varphi^3 \beta}  +  {e^{2 \beta} \partial_\varphi^3 \mmU} 
    \,.
    \label{18VIII24.4}
\end{align}

The $G^r{}_A$-equation turns out to be identical to \eqref{18VIII24.5}.  

Similarly, the {$G^r{}_u$-equation turns out to coincide with \eqref{18VIII24.4}.%

We end the list by noting that the only remaining equation, out of the whole set of vacuum Einstein equations, namely $R = 6\Lambda$, reads, after making use of \eqref{19VII.1}, \eqref{14VIII23.22} and \eqref{8XI24.w1},
 \begin{align}
     6 \Lambda
     + 4 \underbrace{\partial_{r} \partial_{u}\beta}_{=  0}
      = 6 \Lambda
      \,,
 \end{align}
and is therefore trivially satisfied with the fields determined so far.

 \begin{remark}
 \label{R10X.1}
A recent addition to the study of the Cauchy problem in general relativity is the  gluing method for characteristic initial data of Aretakis, Czimek and Rodnianski, see
\cite{ACR3,CzimekRodnianski} and references therein. The question is,
 whether characteristic initial data on a null hypersurface $(r_0,r_1]\times S$ can be glued, together with a number of transverse derivatives of the metric, with initial data on  $[r_2,r_3)\times S$, using an intermediate set of data on  $[r_1,r_2]\times S$. Here $r_0<r_1<r_2<r_3$. The construction is based on the possibility to manipulate,  in the interpolating region   $[r_1,r_2]\times S$,  both 
 the free gravitational data and 
 the gauge freedom of the data. 
 The gluing is  obstructed by a collection of 
 radial charges in several cases of interest. 
 
 In $D$-dimensional spacetimes with $D\geq 4$, the free gravitational data is provided by the field $\gamma_{AB}$ \cite{ChCongGray1}. It follows from \eqref{19VIII24.1} that there are no free gravitational data in two dimensions which could be used for gluing, thus only  gauge freedom and obstructions remain. In the linearized vacuum case, \
 gauge freedom is parametrized by two functions, $\xi^u(u,\varphi)$ and $\xi^\varphi(u,\varphi)$, on $S$. As analyzed in~\cite[Section 3.2]{ChCongGray1} the field $\delta\beta$ can be gauged to zero by setting (cf.\ Equation (3.23) there)
$$ 
     \partial_u\xi^u - \partial_{\varphi}\xi^{\varphi} = -2 \delta\beta \,,$$ 
where $\delta g_{\mu\nu}$ denotes the linearized perturbation of the metric. In the gauge $\delta\beta = 0$, the field $\delta\mmU = - \frac{1}{2r} \partial_r \delta g_{uA} $ can also be gauged to zero, by setting (cf.\ \cite[Equation 3.29]{ChCongGray1})
$$ \partial_u \xi^{\varphi} + (\alpha^2 + \frac{2m}{r^2}) \partial_\varphi \xi^u = -\frac{1}{2r} \partial_r \delta g_{uA}\,.$$
The obstructions are provided by the fields $\delta c$ and $\delta \mU$, 
 which does not leave any freedom for the gluing. We believe that this statement carries over to the small-perturbation regime of nonlinear gluing as in 
\cite{ACR3,CzimekRodnianski,ChCongGray2}, but we have not attempted to work-out all the details of this.  

Summarizing, the $2+1$ vacuum characteristic gluing appears to be completely rigid, in the sense just explained.

Adding matter fields renders the problem more flexible, in that one can use the freedom in the matter fields data to carry out the gluing.
 \qed
\end{remark}   
 
Returning to our main line of thought, we conclude that for any freely-prescribable pair of  functions 
 $$
  \big(\beta(u,\varphi),\mmU(u,\varphi)\big)
 $$
 the vacuum Einstein equations  \eqref{18VIII24.5}-\eqref{18VIII24.4} are equivalent to the following  system of PDEs for $(c,\mU)$: 
\begin{align}
  - \partial_u\mU- \mmU \partial_\varphi \mU
    +\frac{1}{2} e^{2\beta} \partial_{\varphi} c
    & =
    {(2 \partial_\varphi \mmU  -2 \partial_u\beta   - 2 \mmU\partial_\varphi \beta  ) \mU}  -  {e^{2 \beta} c \partial_\varphi \beta } 
    \nn
    \\
    &\quad
     + 8 e^{4 \beta} (\partial_\varphi \beta)^3 + 4 e^{4 \beta} \partial_\varphi \beta \partial_\varphi^2 \beta 
    \,,
 \\ 
      -  \frac{\partial_uc}{2 }   
    -
    \frac{\mmU \partial_\varphi c}{2} 
     - {e^{2 \beta} \Lambda \partial_\varphi \mU}
      & = 
       ({ \partial_\varphi \mmU}  -  { \partial_u\beta}  
    -   {\mmU \partial_\varphi \beta}  
      ) c
     +  {2 e^{2 \beta} \Lambda \mU \partial_\varphi \beta}  
    \nn
    \\
    &\quad  
    -   8 e^{2 \beta} \Big( {(\partial_\varphi \beta)^2 \partial_\varphi \mmU} 
     + {\partial_\varphi \beta \partial_\varphi \partial_u\beta}
       + {\mmU \partial_\varphi \beta \partial_\varphi^2 \beta}  \Big)
    \nn
    \\
    &\quad 
    -  2 e^{2 \beta} \Big( {2  \partial_\varphi \mmU \partial_\varphi^2 \beta}  +   { \partial_\varphi^2 \partial_u\beta}  +   {\mmU \partial_\varphi^3 \beta}  -  {\frac{1}{2} \partial_\varphi^3 \mmU} \Big)
    \,.   
    \label{18VIII24.6}
\end{align}
Letting 
$$
 \breve c := c/2
 \,,
$$
this can be rewritten as
\begin{equation}\label{19VIII24.4}
  \bigg[
   \left(
     \begin{array}{cc}
       1 & 0 \\
       0 & 1 \\
     \end{array}
   \right)
   \partial_u
   +
   \left( 
     \begin{array}{cc}
       \mmU& -e^{2\beta} \\
      \Lambda e^{2\beta} & \mmU \\
     \end{array}
   \right)
   \partial_\varphi
   \bigg]
   \left(
     \begin{array}{c}
       \mU \\
       \displaystyle
       \breve c \\
     \end{array}
   \right)
   =
    F(\mU,\breve c)
   \,,
\end{equation}
with an affine function $F$. Rather surprisingly, the system is

\begin{enumerate}
  \item  manifestly symmetric hyperbolic if and only if $\Lambda=-1$.
  \item  Recall (cf., e.g., \cite{RauchBareHands}) 
   that a system with principal part of the form $\partial_u + A^\varphi \partial_\varphi$
   is hyperbolic if all the roots of the polynomial 
   $$
    P(\tau):=\det(\tau + A^\varphi p_\varphi)
    $$
     are real; strictly hyperbolic if the roots are real and distinct. One readily checks that \eqref{19VIII24.4} is hyperbolic if and only if $\Lambda \le 0$, strictly hyperbolic if $\Lambda <0$. 
\item The system \eqref{19VIII24.4} is elliptic when $\Lambda >0$. 
\end{enumerate} 

It follows that: 

\begin{enumerate}
  \item   When  $\Lambda <0$ the system has unique global smooth  solutions for any smooth initial data $(\mU,c)|_{u=u_0}$ (cf., e.g., \cite{KreissLorenz}). 
      
  \item  When $\Lambda=-1$ the system also has unique global solutions for any initial data $(\mU,c)|_{u=u_0}$ in Sobolev spaces.  
      
\item When $\Lambda >0$ the description of the set of solutions does not seem to be obvious in general. In the simplest case $\beta\equiv0\equiv\mmU$  and $\Lambda =1$, Equation~\eqref{19VIII24.4} is the requirement 
    (cf.\ \eqref{21VIII24.1} below) that the function $\mU-i c/2$ be   holomorphic in $u+i \varphi$. 
\end{enumerate}

 It is natural to raise the question, which of the functions appearing above are gauge. 
As such, locally and in vacuum, all the functions are gauge, since the metric can always locally be brought to the (Anti-)de Sitter or Minkowski form depending upon the value of $\Lambda$. 
Whether or not this can be done globally is a more delicate issue.
In order to make contact with the analysis of Section~\ref{s11VII24.1},
following~\cite{Barnich:2012aw}%
\footnote{See~\cite{Barnich:2010eb} for a complementing analysis   when $\Lambda=0$.}
we consider  metrics in which only $\beta$ and $\mmU$ are gauged-away to zero. In order to make the
comparison with the formulae of Section~\ref{subsec:asym} easier,
 we change notation and make the replacement
%
\begin{equation}\label{30VIII24.1}
  c \mapsto - \mu
  \,,
  \qquad
  \mU \mapsto \frac j2
  \,,
\end{equation}
so that the metric reads 
\begin{align}
    g 
   &=  -
   \Big( \frac{\red{j}^2}{4r^2} - \Lambda r^2  \red{-\mu}
   \Big)  du^2-2  du\,dr 
   +r^2 \Big(d\varphi-
    \frac{\red{j}}{2 r^{2}}  
                du\Big)
                \Big(d\varphi-
    \frac{\red{j}}{2 r^{2}}  
                du\Big)
                \nonumber
\\ 
&=  -
   \Big(  - \Lambda r^2  \red{-\mu}
   \Big)  du^2-2  du\,dr 
    - \red{j} d\varphi \, dr 
   +r^2 d\varphi^2 
    \label{23VII22.1qf}
    \,.
\end{align}
The vacuum Einstein equations  \eqref{19VIII24.4} reduce now to 
\begin{equation}\label{21VIII24.10}
  \partial_u \red{\mu} = - \Lambda
  \partial_\varphi \red{j}
  \,,
  \qquad
  \partial_\varphi  \red{\mu} = 
  \partial_u \red{j}
  \,.
\end{equation}
Hence $\mu$ and $j$ are solutions of the second-order equations
\begin{equation}\label{21VIII24.1}
 (  \partial_u^2   +  \Lambda
  \partial_\varphi^2 )  \red{\mu} = 0 = 
   (  \partial_u^2 + \Lambda
  \partial_\varphi^2) \red{j}
  \,.
\end{equation}

\subsection{Global charges, asymptotic symmetries}

The Trautman-Bondi mass, defined as 
\begin{equation}\label{21VIII24.3x}
  \mTB : =  \frac{1}{2\pi}\int_{S^1}  \mu 
  \,d\varphi
  \,,
\end{equation}
is $u$-independent,
\begin{equation}\label{21VIII24.3xb}
\frac{d\mTB}{du} 
 =    \frac{\Lambda}{2\pi}\int_{S^1}  \partial_\varphi\red{j} d\varphi = 0
 \,, 
\end{equation}
regardless of the value of $\Lambda$; similarly for the total angular momentum
\begin{equation}\label{21VIII24.3b}
  J : =   \frac{1}{2\pi}\int_{S^1} \red{j}
  \,d\varphi
  \,,
  \qquad
   \frac{dJ}{du} = 0
   \,.
\end{equation}
However, these quantities suffer from the same problems as their spacelike counterparts of Section~\ref{s14VIII24.1}.
This requires an analysis of asymptotic symmetries, which proceeds as follows:

 Note first that to  preserve the dominant  terms in the metric, 
$$
 r^2 h :=  r^2 (  \Lambda du^2 + d\varphi^2 ) 
  \,,
$$
the leading order of a coordinate transformation which preserves the asymptotic form of the metric, say   $(u,\varphi)\mapsto (\breve  u,\breve   \varphi)$, must be  
a conformal map from the Euclidean ($\Lambda>0$) or Minkowskian  $(\Lambda <0)$ plane $\Lambda du^2 + d\varphi^2$ to itself:
\begin{equation}
 \label{22VII24.1}
 \Lambda du^2 + d\varphi^2 = \psi^2 
 \big(\Lambda d\breve   u^2 + d\breve  \varphi^2
 \big)
 \,,
\end{equation}
for some function $\psi>0$.
As is well known, both functions $(u,\varphi)\mapsto (\breve  u,\breve   \varphi)$ are then solutions of the wave equation when $\Lambda <0$, or are harmonic when $\Lambda>0$. 

To continue, it is convenient to treat the cases of positive and negative cosmological constant
separately. 

\subsubsection{\texorpdfstring{$\Lambda<0$}{Lambda < 0}}
When $\Lambda<0$ there exist functions $\bv_{\pm}$ such that 
\begin{equation}\label{23VIII24.7}
  \breve  u = \frac{\bv_+({\bar x}^+) + \bv_-({\bar x}^-)}{2}
  \,,
   \ 
  \breve  \phi = \frac{\bv_+({\bar x}^+) - \bv_-({\bar x}^-)}{2}
  \,,
  \ 
  \mbox{where} 
  \
  x^\pm = \sqrt{-\Lambda} u \pm \varphi
   \,.
\end{equation}
In order to maintain orientation, we assume that $\bv_{\pm}'>0$. 
To preserve the asymptotic form of the metric we must have
%
\begin{align}
  \label{22VII24.2neg}
  \bar u &= 
  \frac{\bv_+({\bar x}^+)+\bv_-({\bar x}^-)}{2 \sqrt{-\Lambda }}
  +
  \frac{\left(\sqrt{\bv_+^\prime({\bar x}^+)}-\sqrt{\bv_-^\prime({\bar x}^-)}\right)^2}{ 2 r \Lambda
  } + O(r^{-2})
  \\
  \bar \varphi &= \frac{\bv_+({\bar x}^+)-\bv_-({\bar x}^-)}{2}  +\frac{
  \left(\bv_-^\prime({\bar x}^-)-\bv_+^\prime({\bar x}^+)\right)}{2 r \sqrt{-\Lambda
  }}+ O(r^{-2})\,, \\
 \bar r &=  \frac{r}{\sqrt{\bv_+^\prime({\bar x}^+) \bv_-^\prime({\bar x}^-)}} + O(1)
  \,,  \label{22VII24.2neg2}
 \end{align}
 where $x^\pm = \sqrt{-\Lambda } u \pm \varphi$.  
One thus obtains the following formulae for the transformation of the mass-aspect and angular-momentum aspect under asymptotic symmetries: 
\begin{align}
  \label{29VII24.2neg}
 \mu(u, \varphi) &=  \red{\frac{1}{2}}\left(\bar{\mu} \red{-}\sqrt{-\Lambda } \bar{\red{j}}\right) 
  (\bv_+^\prime({\bar x}^+))^2
  +  \red{\frac{1}{2}}\left(\bar{\mu} \red{+}\sqrt{-\Lambda } \bar{\red{j}}\right) (\bv_-^\prime({\bar x}^-))^2
  \nonumber \\ 
  &~~~ \red{-}
  \hSchw[\bv_-'](x^-) \red{-}
  \hSchw[\bv_+'](x^+)\\ 
  \sqrt{-\Lambda } \red{j}(u, \varphi) &
  = \red{-}\red{\frac{1}{2}} \left(\bar{\mu} \red{-}\sqrt{-\Lambda } \bar{\red{j}}\right) (\bv_+^\prime({\bar x}^+))^2
  \red{+} \red{\frac{1}{2}} \left(\bar{\mu} \red{+}\sqrt{-\Lambda } \bar{\red{j}}\right) 
 (\bv_-^\prime({\bar x}^-))^2 \nonumber  \\
  &~~~-  \hSchw[\bv_-'](x^-)
  +  \hSchw[\bv_+'](x^+)\,,
\end{align}
with 
\begin{align}\bar{\mu} &= \bar{\mu} \left(\frac{\bv_+({\bar x}^+)+\bv_-({\bar x}^-)}{2\sqrt{-\Lambda }}, 
  \frac{\bv_+({\bar x}^+)-\bv_-({\bar x}^-)}{2}\right) \,, 
\\
\bar{\red{j}} &=  \bar{\red{j}} \left(  \frac{\bv_+\left(x^+
\right)+\bv_-^\prime({\bar x}^-)}{2 \sqrt{-\Lambda }}, \frac{\bv_+\left(x^+
\right)-\bv_-({\bar x}^-)}{2}\right)
 \,,
\end{align}
where $\hSchw[\bv_+'](x)$ is defined in \eqref{10IX23.31}.  
This is formally identical with \eqref{eqmuchange1} and \eqref{eqmuchange3}, though the context is somewhat different.
(To obtain these equations one needs  the next order terms in the expansions \eqref{22VII24.2neg}-\eqref{22VII24.2neg2}, which we haven't included here as they are lengthy and not very illuminating.)

\subsubsection{\texorpdfstring{$\Lambda >0 $}{Lambda > 0}}

Returning to \eqref{22VII24.1},
when the   cosmological constant is positive, we find instead 
\rwc{checked with piotr leading and first subleading order\\ -- \\ ptc: do we need the next one for the mass transformation laws? \\ -- \\ rw: yes, we need the next one}
 \begin{align}
  \label{22VII24.2pos}
  \bar u &= \breve u +\frac{
  \left(\partial_\varphi {\breve  \varphi}-
  |  d\breve  \varphi |_{h}\right)}{r \Lambda
  } + O(r^{-2}) 
  \\
  \bar \varphi &= {\breve  \varphi}+\frac{
 \partial_u {\breve  \varphi}}{r \Lambda } +O(r^{-2})\,, \\
 \bar r &=  \frac{r}{|d \breve  \varphi |_{h}} + O(1)
  \,,
 \end{align}
 where $|  d\breve  \varphi |_{h} = \sqrt{(\partial_\varphi {\breve  \varphi})^2+
 (\partial_u {\breve  \varphi})^2/\Lambda}$, 
 and
 where $\breve  u = \breve  u(u \sqrt{\Lambda}, \varphi)$ 
 and $\breve  \varphi = \breve  \varphi(u \sqrt{\Lambda}, \varphi)$, with
 \begin{equation}
  \label{deruphi}
 \partial_u \breve  u = \partial_\varphi 
 \breve  \varphi \,, \qquad 
 \partial_\varphi \breve  u = - \frac{\partial_u 
 \breve  \varphi}{\Lambda}\,.
 \end{equation}
 For ease of notation,
  in what follows we normalize $\Lambda$ to $\Lambda =1$. 
Equivalently, instead of parameterizing the asymptotic change of coordinates \eqref{22VII24.2pos} in terms of 
real functions $\breve  u$, ${\breve  \varphi}$  satisfying \eqref{deruphi}, 
we can introduce a  function  $w_\pm$ holomorphic in $z=u + i \varphi$:  
  \begin{align}
    w_+(z) &=    \breve u+  i \breve  \varphi
     \,, 
    \end{align}
  yielding 
  \begin{align}
    \breve  u & 
    = \frac{1}{2}  (w_+(z) + \overline{w_+(z)} )
    \,.
    \label{breveuvarphi}
  \end{align}
 Under the above, the mass aspect function $\red{\mu}$ transforms as, 
 \begin{align}
\mu(u, \varphi)&= ((\partial_\varphi \breve  \varphi)^2 - (\partial_u \breve  \varphi)^2) \,\bar{\mu}(\breve  u, \breve  \varphi)
-
2 \bar{j} (\breve  u, \breve  \varphi) \partial_\varphi 
\breve  \varphi \partial_u
\breve  \varphi \nonumber 
\\
&~~~
+\frac{2 \left(\partial_\varphi  \breve  \varphi \partial^2_u \partial_\varphi  \breve  \varphi+\partial_u  \breve  \varphi \partial^3_u  \breve  \varphi\right)}{|  d\breve  \varphi |_{h}^2}
\nonumber 
 \\
&~~~ - \frac{3 \left((\partial_\varphi  \breve  \varphi)^2- (\partial_u  \breve  \varphi)^2\right)
   \left((\partial_u \partial_\varphi \breve  \varphi)^2- (\partial_u \partial_u \breve  \varphi)^2\right)}{|  d\breve  \varphi |_{h}^4}
\nonumber \\
&~~~ -\frac{12\, \partial_\varphi  \breve  \varphi \,\,
\partial_u  \breve  \varphi \,\,
\partial_u \partial_\varphi \breve  \varphi \,
\partial_u \partial_u \breve  \varphi}{|  d\breve  \varphi |_{h}^4}
\\ 
&= 
  \frac{1}{2}\left(\bar{\mu} + i  \bar{j}\right) (w_+^\prime(z))^2
  +
  \frac{1}{2}\left(\bar{\mu} - i \bar{j}\right) 
  \overline{(w_+^\prime({z}))^2}
 \nonumber \\
&~~~
+ \overline{\hSchw[w_+'](z)} + \hSchw[w_+'](z)
\\
&= 
   \Re\Big(\left(\bar{\mu} + i  \bar{j}\right) (w_+^\prime(z))^2
   + 2 \hSchw[w_+'](z)\Big)\,,
 \end{align}
 where in the second equality  $\bar{\mu} $ and $\bar j$ are understood as functions of $(\breve u,\breve \varphi) = (\Re w_+,  \Im w_+ )$.  
 As for  the angular aspect function $\red{j}$ we find 
 \begin{align}
   \red{j}(u, \varphi)&=
  ((\partial_\varphi \breve  \varphi)^2 - (\partial_u \breve  \varphi)^2) \,\bar{\red{j}}(\breve  u, \breve  \varphi)
  + 2\bar{\mu}(\breve  u, \breve  \varphi) \partial_\varphi 
  \breve  \varphi \partial_u
  \breve  \varphi \nonumber \\
  &~~~+ 2
  \frac{\partial_\varphi \breve  \varphi \, 
  \partial_u \partial^2_\varphi \breve  \varphi+\partial_u \breve  \varphi
  \partial^2_u \partial_\varphi \breve  \varphi}{|  d\breve  \varphi |_{h}^2} \nonumber 
  \\
  &~~~ + \frac{6 \left(\partial_u \breve  \varphi \partial_u \partial_\varphi \breve  \varphi-\partial_\varphi \breve  \varphi \partial^2_u \breve  \varphi\right)
  \left(\partial_\varphi \breve  \varphi
  \partial_u \partial_\varphi \breve  \varphi+\partial_u \breve  \varphi
  \partial^2_u \breve  \varphi\right)}{|  d\breve  \varphi |_{h}^4}
\\
  &= -\frac{i}{2} \left(\bar{\mu} + i  \bar{\red{j}}\right) (w_+^\prime(z))^2
  +\frac{i}{2} \left(\bar{\mu} - i  \bar{\red{j}}\right) 
 \overline{(w_+^\prime(z))^2} 
 \nonumber  \\
  &~~~ + i  \overline{\hSchw[w_+'](z)}
  - i \hSchw[w_+'](\red{z})  \\
  &= 
    \,\Im\Big(\left(\bar{\mu} + i  \bar{j}\right) (w_+^\prime(z))^2
   + 2 \hSchw[w_+'](z)\Big)\,,
\end{align}
which can be compared with \eqref{eqmuchange3}-\eqref{eqmuchange4}.

\appendix


\section{Poles at the boundary of hyperbolic space}
\label{App11VIII24.2}

Let $n\ge 2$, and let $\Bn$ denote  the open unit ball centered at the origin
	$$
	\Bn := \left\{(x^1,\dots \,,x^n), \, |\vec x|^2 = \sum_{i=1}^{n} (x^i)^2 <1 \right\} \,,
	$$
	with boundary $\partial \Bn  = \mathbb S^{n-1}$. 
	We will write $\Bnr$ for a ball ($\Bpr$ for a disc,
 in dimension two), and   $S(p,r)$ for a sphere (circle,
in dimension two) of  radius $r>0$ centered at $p$.
The ball underlies the Poincaré disc model $(\Bn ,\tg)$ for hyperbolic space, with the hyperbolic metric taking the form
\begin{eqnarray} 
	\tg 
 & = &
   \frac 4{(1-|\vec x|^2)^2} \left((dx^1)^2 + \dots +(dx^n)^2 \right)
\\
 &=&
  |_{n=2} 
  \
 \frac 4{(1-x^2-y^2)^2} \left( dx^2 +dy^2 \right) \,.
\end{eqnarray}
\tqn{maybe change here since $x^{-2}\delta$ is used without any hats in \ref{ss11VIII24.2}}
An alternative useful model of two-dimensional hyperbolic space is the Poincaré half-plane 
$$
 \HP:=\left\{(\hx,\hy),\,\hx>0\right\}
 $$
 with metric
\begin{equation}\label{10V24.3}
	\hg
	=
	\frac{d\hx^2 + d\hy^2}{\hx^2}
	= \frac{d\hz \, d\bar \hz }{(\Re \hz)^2}
	\,,
\end{equation}
where the complex notation $\hz = \hx + {\rm i} \hy$ has been used.
%

The half-plane model is linked to the Poincar\'e-disc model  by the transformation
\begin{equation}
	\hx = \frac{2(\bx+1)}{\by^2 + (\bx+1)^2} -1 = \frac{1-\bx^2-\by^2}{\by^2 + (\bx+1)^2} \,,
	\qquad
	\hy = \frac{2\by}{\by^2 + (\bx+1)^2}
	\,,
\label{7XI24.3}
\end{equation}
and its inverted version
\begin{equation}
	\bx = \frac{1 - \hx^2 - \hy^2}{\hy^2 + (\hx+1)^2} \,,
	\qquad
	\by = \frac{2\hy}{\hy^2 + (\hx+1)^2} \,.
\end{equation}
Note that we can write 
\begin{equation}
	\bar \bz = \frac{1-\hz}{1+\hz} \,, \qquad \bar \hz = \frac{1-\bz}{1+\bz} \,,
\end{equation}
where a bar denotes complex conjugation.

A pole on $\partial \B $ corresponds to 
a pole on $\partial \HP$ (and reciprocally), which can also be seen from the explicit formula, for $\bang\in (-\pi,\pi)$,  
%
\begin{equation}\label{27V24}
	\frac 1 {\bz-e^{{\rm i}\bang}} = \frac 1{\bar \hz + {\rm i}\tan\left(\frac\bang2\right)} \left[\frac {-(1+{\rm i}\tan\left(\frac\bang2 \right) )}{1+e^{{\rm i}\bang}}\right] - \left[\frac 1{1+e^{{\rm i}\bang}}\right] \,.
\end{equation}

Lastly, we will sometimes  parameterize two-dimensional hyperbolic space as 
\begin{equation}
\bar{g} = \frac{dr^2}{r^2+1} +r^2 d \varphi^2\,,
\end{equation}
where $r \in \mathbb{R}_0^+$ and $\varphi \in \mathbb{S}^1$.
This model of hyperbolic space is connected to \eqref{10V24.3} by the 
transformation
\begin{equation}
	\label{trafo6924}
	\hx = \frac{1}{\sqrt{r^2+1}+r
   \cos (\varphi )}, \qquad 
   \hy =  \frac{r \sin
   (\varphi)}{\sqrt{r^2+1}+\cos (\varphi )}\,,
\end{equation}
where we assumed that $- \pi \leq \varphi < \pi$. 

\section{The Lichnerowicz equation: existence, uniqueness, boundary regularity}
 \label{s26VII24.2}

In space-dimension $n=2$ the conformal method for solving the general relativistic constraint equations requires an understanding of equations of the form 
\begin{equation}
		\Delta_{\tilde{g}} u(\cdot) = \NL(\cdot,u) \,,
 \label{11VII24.1-x}
\end{equation}
on ALH manifolds $(M,\tg)$. This problem has been addressed in detail in~\cite{AndChDiss} assuming $n\ge3$. However, the results there apply without further due with $n=2$. For instance:

 \begin{enumerate}
   \item 
   The  monotone iteration scheme on a sequence of exhausting sets, together with a diagonalization argument, provides existence  of bounded solutions of \eqref{11VII24.1-x} whenever $\NL$ is continuous in $u$, and if there exists constants $c_1\le c_2$ such that 
\begin{equation}\label{1XI24.6}
\NL(\bcoord,c_1) \le 0 
 \,,
  \qquad
\NL(\bcoord,c_2) \ge 0 
 \,.	
\end{equation}
We emphasize that no further assumptions on the geometry of $(M,\tg)$ are necessary whenever \eqref{1XI24.6} holds. In particular, for the Lichnerowicz equation on a maximal hypersurface, 
\begin{equation}\label{29VII24.3rwe}
2\Delta_{\tg} u = -R_{\tg} + e^{2u} \normsq\tL\tg +  e^{-2u} 
    \big(- 2 + \matter)=: 2 F(\cdot, u)
\end{equation}
with $0\le \rho < 2-\epsilon$ for some $\epsilon >0$, 
existence holds for  manifolds $(M,\tilde g)$ with  an arbitrary number of cusps and ALH ends
if we assume  that $  \normsq\tL\tg$
is uniformly bounded and that the Ricci scalar 
$R_{\tg}$ is sandwiched between two negative constants.  

   \item Uniqueness of the above solutions on,  
   e.g., geometrically finite manifolds with ALH ends, with $u$ vanishing asymptotically in the ends,
   will hold when $\NL$ is monotonous in $u$, 
    which is the case for the Lichnerowicz equation of the form just described. 
 \end{enumerate}
 
 In order to obtain asymptotic information on $u$ for ALH metrics $\tg$,  let us write \eqref{11VII24.1-x} in the form assumed in \cite[Theorem 7.2.1]{AndChDiss},
\begin{equation}\label{22V24.1df}
	\aL u = \aF(\bcoord,u) + \aS(\bcoord) \,,
\end{equation}
with $\aF(\bcoord,0) = \partial_s \aF(\bcoord,s)|_{s=0} = 0$.
In half-space coordinates where $\tg = x^{-2}(dx^2 + dy^2)$ the 
Lichnerowicz equation for vacuum CMC data (and so, in particular, the Yamabe equation) takes the form \eqref{22V24.1df} with 
\begin{eqnarray}
&
\aL:= \Delta_{\tg} - 2(\qh +1)
\,,
\quad
\Delta_{\tg} =x^2 (\partial^2_\bx + \partial^2_\by)
\,;
 &
 \\
 & 
\aS:=\qh \,;
 &
 \\
 & 
\aF(\bcoord,u):= 1 + \qh e^{2u} - e^{-2u} - \qh- 2(\qh +1)u
\,,
 & 
 \label{10X24.1}
\end{eqnarray}
with 
\begin{equation}\label{28VIII24.1}
\qh= x^4 |\qf|^2
\end{equation} in the holomorphic-function representation of $TT$ tensors.
When the metric $\tg$ is conformally smooth at a conformal boundary $\{x=0\}$ and $\qf$ is holomorphic in a neighborhood of the conformal boundary, the solutions have a full asymptotic expansion in terms of functions $x^i\ln^jx$ for small $x$.
To justify this, we start by noting that  $\aL$ is an elliptic operator which can be defined as in \cite[(4.2.1)-(4.2.4)]{AndChDiss} with indicial roots $\mu_{-} = -1$ , $\mu_{+} = 2$.
Using \cite[Theorem 7.2.1]{AndChDiss}, we have $\mu_{\pm}=\alpha_{\pm}$ so that $(\mu_{-},\mu_{+})=(-1,2)$ is a strong regularity interval for $C^\alpha_{0+\lambda}(\B)$ for $\aL$.
Assuming  $\qh$ of the form \eqref{28VIII24.1} with a function $\qf$ holomorphic near the unit disc
gives 
$$
\aS\equiv \qh\in x^4 C^{k+\lambda}(\overline{\B})
 \subset x^2 C^{k+\lambda}(\overline{\B})
\,.
$$
%
%
Applying \cite[Theorem 7.4.5]{AndChDiss} with $\alpha=2$, a polyhomogeneous expansion follows. Smoothness at the boundary follows now by inspection of the coefficients in the equation.


\section{The Lichnerowicz equation with poles on the conformal boundary}
\label{s11VII24.1b}

Our aim is to obtain some insight in the boundary behavior of solutions of the Lichnerowicz equation when the extrinsic curvature arises from a pole at the conformal boundary. 
We use  the notations of \cite{AndChDiss} for the  function spaces that appear here.

In the Poincar\'e-disc model we can use, e.g.,  
$$
 \bro := \frac{ 1-|\vec x|^2}2
$$
as the defining function for the conformal boundary at infinity.

When the extrinsic curvature tensor arises from a function which is holomorphic near the circle at conformal infinity, smoothness at that circle of the solutions of the Lichnerowicz equation has been pointed out in Appendix~\ref{s26VII24.2} below. In the case of poles at the boundary, the analysis in~\cite{AndChDiss} provides
some partial information:

\begin{prop}
 \label{p2VIII24.1}
	Let us assume that 
the function $\qh$ of \eqref{28VIII24.1} satisfies
$\qh=\bro^4|\qf|^2 \in \bro C^{k+\lambda}(\ov \B)$ for some $k\ge1$, $\lambda\in(0,1)$. Then, there exists $\sigma \in (0,1)$ such as the solution $u$ of \eqrefl{22V24.1df}-\eqref{10X24.1} verifies
	\begin{equation}
		u \in \bigcap_{i=0}^2 \bro^{1-i}C^{i+\sigma | k}(\ov \B) \,.
	\end{equation}
	Moreover, in the case $k>1$, there exist  functions $u_j \in \bro^{2j}C^{k-1+\lambda}(\ov \B) \cap \bro^{2j-1}C^{k+\lambda}(\ov \B)$ for $1\le j\le N$ such as
	\begin{equation}
		u - \sum_{j=0}^{N} u_j \log^j(\bro) \in C^{k+1+\sigma}(\ov \B)
		\,,
	\end{equation}
	where $N$ is the smallest integer such as $N > \frac{k}{2} + 1 $.
	Finally, if $u_{1 |\partial \B} = 0$,
	\begin{equation}
		u \in C^{k+1+\sigma}(\overline{\B})\,.
	\end{equation}
\end{prop}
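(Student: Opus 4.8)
The plan is to show that the hypothesis $u_{1|\partial\B}=0$ forces \emph{all} logarithmic terms in the asymptotic expansion of $u$ to disappear, after which the claimed $C^{k+1+\sigma}$ regularity follows from the same machinery that produced the earlier parts. The starting point is the polyhomogeneous expansion recorded in the previous part of the proposition, obtained from \cite[Theorem~7.4.5]{AndChDiss}, together with the observation that the operator $\aL$ of \eqref{10X24.1} has indicial roots $\mu_-=-1$ and $\mu_+=2$; thus its indicial polynomial $I(p)=(p-\mu_-)(p-\mu_+)=(p+1)(p-2)$ vanishes \emph{only} at $p=-1$ and $p=2$. The logarithms recorded in the expansion can therefore originate at exactly one order.

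First I would set up the formal recursion for the coefficients of $u$, writing $u\sim\sum_{p,j}u_{p,j}\,\bro^{p}\log^{j}\bro$ and substituting into $\aL u=\aF(\cdot,u)+\aS$. Here $\aS=\qh\in\bro C^{k+\lambda}(\ov\B)$ is log-free, being a genuine power remainder in $\bro$, and the nonlinearity $\aF(\cdot,u)=\qh\,(e^{2u}-1-2u)-(e^{-2u}-1+2u)$ is assembled from $e^{\pm2u}$ and from $\qh$. The key structural fact I would exploit is that $\aF$ can manufacture a logarithm at a given order only if a logarithm is \emph{already present} in $u$ at some lower order. Consequently logarithms can be injected into the solution through a single mechanism: the resonance at the positive indicial root $p=2$, where $I(2)=0$ obstructs solving the transport relation by a pure power and forces a seed term $\bro^{2}\log\bro$. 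The leading boundary coefficient of this seed is precisely the quantity encoded by $u_{1|\partial\B}$.

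The main step is then an induction on the order $p$. Assuming $u$ carries no logarithm below order $p$, the effective source at order $\bro^{p}$ (built from $\aS$ and from the log-free lower-order data fed through $\aF$) is itself log-free; hence whenever $p\neq2$ the non-vanishing of $I(p)$ determines $u_{p,0}$ uniquely without introducing a logarithm, while at the unique resonant order $p=2$ the would-be logarithm has coefficient proportional to $u_{1|\partial\B}$, which vanishes by hypothesis. Thus no logarithm is ever created, the expansion of $u$ proceeds in pure powers of $\bro$, and re-running \cite[Theorem~7.4.5]{AndChDiss} on this log-free data yields the gain $u\in C^{k+1+\sigma}(\ov\B)$.

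The hard part will be to make rigorous the slogan that ``logarithms beget only logarithms'': I must verify that neither the exponential nonlinearities nor the $\qh$-dependent coefficients can spontaneously generate a logarithm from log-free lower-order data at a non-resonant order, and that $p=2$ is genuinely the only log-generating site. This amounts to tracking the $\log\bro$-grading carefully through the nonlinear iteration, and to confirming that $u_{1|\partial\B}$ is exactly the leading coefficient of the resonant $\bro^{2}\log\bro$ term rather than a derived or merely partial quantity. Once that identification is secured, its vanishing removes the seed and the cascade cannot begin.
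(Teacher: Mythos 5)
The paper's own proof is a one–line citation: the framework of Appendix~\ref{s26VII24.2} applies verbatim, except that now $\aS\equiv\qh\in\bro C^{k+\lambda}(\ov{\B})$, and \cite[Theorem~7.4.5]{AndChDiss}, invoked with $\alpha=1$, delivers \emph{all three} assertions of the proposition at once --- the weighted regularity, the polyhomogeneous expansion, and also the final conditional statement that $u_{1|\partial\B}=0$ implies $u\in C^{k+1+\sigma}(\ov{\B})$. In other words, the last claim is part of the conclusion of the cited theorem; once one checks its hypotheses (the weight $\alpha=1$ of the source, the indicial roots $-1$ and $2$ of $\aL$, the strong regularity interval), nothing remains to be proved.

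Your plan re-derives that last conclusion by hand, and the mechanism you describe --- $p=2$ is the unique resonant order, the source \eqref{28VIII24.1} and the nonlinearity $\aF$ are log-free, so logarithms can only be seeded by the $\bro^{2}\log\bro$ term whose (normalized) boundary coefficient is $u_{1|\partial\B}$, and a vanishing seed kills the whole cascade --- is indeed the correct mechanism underlying the cited theorem, and your indicial computation agrees with the paper's. But as written the argument has a genuine gap, which you yourself flag as ``the hard part'': everything is carried out at the level of a \emph{formal} series, whereas $\qh$ lies only in $\bro C^{k+\lambda}(\ov{\B})$, so it admits no full Taylor expansion at $\partial\B$; the recursion can be run only finitely many steps, and at each step one must show that the \emph{remainder}, not just the formal coefficients, stays in the appropriate weighted H\"older space and finally lands in $C^{k+1+\sigma}(\ov{\B})$. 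That bootstrap --- converting ``no formal log obstruction'' into genuine up-to-boundary H\"older regularity --- is precisely the analytic content of \cite[Theorem~7.4.5]{AndChDiss}, and your closing step of ``re-running'' that theorem on ``log-free data'' is not meaningful as stated: the theorem takes the equation, not an expansion, as input, and its conclusion \emph{is} the conditional statement you are trying to prove. So either you invoke that conclusion directly (the paper's route, which makes your resonance analysis redundant), or you must supply the weighted elliptic estimates yourself, which amounts to reproving the cited theorem.
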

\proof
The result follows by the same arguments as in Appendix~\ref{s26VII24.2}, but this time we only have $\aS\equiv\qh \in \bro C^{k+\lambda}$, in which case \cite[Theorem 7.4.5]{AndChDiss} with $\alpha=1$ provides the regularity above.
\eproofskip
%

	
	Let us now consider the half-plane model $\HP$ near its boundary $\partial \HP
 = \{\hx=0\}$. We consider a meromorphic function $\qf$  satisfying the finite-mass condition, so that $|\qf|^2\hx^4=\mathcal{O}(\hx)$. The  Lichnerowicz equation, written with respect to the hyperbolic background, takes the form
\begin{equation}
	Lu := \left[\hx^2 \big( \partial_{\hx\hx}^2 + \partial_{\hy\hy}^2 \big) \right]u = 2u + \hx^4|\qf|^2 + o(u)
 = 0 \,,
  \label{20XI24.1}
\end{equation}
so that the term containing $\qf$ can now be seen as a source.
Note that the operator $L$ has again indicial roots $\alpha_\pm = 2\,, -1$.
Now, let us choose
\begin{equation}\label{2VIII24.21}
\qf = \frac 1\hz
\,.
\end{equation}
Thus $\qf$ describes a pole sitting at the origin of the half-plane model. 
Neglecting the $o(u)$ term in \eqref{20XI24.1} one obtains
\begin{equation}
	\left[\big( \partial_{\hx\hx}^2 + \partial_{\hy\hy}^2 \big) - \frac{2}{\hx^2} \right]u = \frac{\hx^2}{\hx^2+\hy^2} \,.
\end{equation}
Using polar coordinates $(\hr,\hang)$ centered at the pole this becomes
\begin{equation}
	\left[\frac {1}{\hr} \partial_{\hr} (\hr\partial_{\hr} \cdot) + \frac{1}{\hr^2}\partial^2_{\hang\hang} - \frac{2}{\hr^2 \cos^2\hang}\right]u = \cos^2\hang\,.
\end{equation}
Searching for solutions as $u(\hr,\hang)=h(\hr)\cos^2\hang$,
 one can find an approximate solution
\ptcheck{22V}
$$
 \frac14  \hr^2 \ln \hr\cos^2 \hang 
 =
  \frac18 \hx^2\ln(\hx^2 + \hy^2) \,,
$$
so that, close to the boundary, we expect
\begin{equation}
	u = u_1 + o(\hx^2)
	\,,
	\label{22V24.2}
\end{equation}
where
\begin{equation}
	u_1 =  \frac18 \hx^2\ln(\hx^2 + \hy^2)
+ 
 \hx^2 (\lambda \hy + \mu)\,,
	\label{22V24.1}
\end{equation}
for some $\lambda, \mu \in \R$.

While Proposition~\ref{p2VIII24.1} guarantees an upper bound on the solution, the following only provides a formal development:
\begin{prop}\label{pole_PHdvp}
	Consider the Lichnerowicz equation with a $TT$-extrinsic curvature tensor defined by the function $\qf$ given by \eqref{2VIII24.21}.
There exists a formal asymptotic solution of the form,
	$\forall N\in\mathbb{N},$ 
	\begin{equation}
		u = \sum_{n=1}^{N} \sum_{k=0}^{n} \hr^{2n}|\ln \hr|^{k} F_{n,k}(\hang)  \, + o(\hr^{2N}) \,,
	\end{equation}
	with
	$F_{n,k}\in C^{\infty}_0([-\frac\pi2,\frac\pi2])
	$.\\
\end{prop}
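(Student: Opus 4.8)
The plan is to build the series recursively in powers of $\hr$ by separating the angular variable, exploiting a single resonant angular operator. First I would record how the radial operator of \eqref{20XI24.1}, written in the polar coordinates $(\hr,\hang)$ centred at the pole as $L_{\mathrm{rad}} := \tfrac1\hr\partial_\hr(\hr\partial_\hr\,\cdot\,) + \tfrac1{\hr^2}\partial^2_{\hang\hang} - \tfrac2{\hr^2\cos^2\hang}$, acts on a monomial $\hr^{2n}(\ln\hr)^k F(\hang)$:
$$
L_{\mathrm{rad}}\big[\hr^{2n}(\ln\hr)^k F\big] = \hr^{2n-2}\Big[(\ln\hr)^k\,\mathcal{A}_nF + 4nk\,(\ln\hr)^{k-1}F + k(k-1)(\ln\hr)^{k-2}F\Big],
$$
where $\mathcal{A}_n := \mathcal{L} + 4n^2$ and $\mathcal{L} := \partial^2_{\hang\hang} - \tfrac2{\cos^2\hang}$ is the angular part. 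I would then rewrite the full (nonlinear) Lichnerowicz equation, after dividing by $\hr^2\cos^2\hang$ exactly as in the derivation of \eqref{20XI24.1}, in the form $L_{\mathrm{rad}}u = \cos^2\hang + \mathcal{N}[u]$, where $\mathcal{N}[u]$ collects the contributions of $e^{\pm2u}-1\mp2u$ and of $\qh\,u$. Using $\qh=\hr^2\cos^4\hang$ one checks that $\mathcal{N}$ strictly raises the $\hr$-order and that a product of factors of orders $n_1,n_2$ and log-degrees $k_i\le n_i$ yields a term of order $\hr^{2(n_1+n_2)}$ and log-degree $k_1+k_2\le n_1+n_2$, so $\mathcal{N}$ preserves the proposed even-power index structure.

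The second ingredient is the spectral theory of $\mathcal{L}$ on $(-\tfrac\pi2,\tfrac\pi2)$, whose indicial roots at $\hang=\pm\tfrac\pi2$ are $\{-1,2\}$; the class $C^\infty_0([-\tfrac\pi2,\tfrac\pi2])$ is precisely the one selecting the regular root $2$, i.e.\ $F\sim\cos^2\hang$ at the endpoints. On this class $-\mathcal{L}$ is a self-adjoint Pöschl–Teller operator with simple discrete spectrum $\{(p+2)^2:p\ge0\}=\{4,9,16,\dots\}$ and smooth eigenfunctions $\Phi^{(p)}=\cos^2\hang\cdot(\text{polynomial in }\sin\hang)$, with $\Phi^{(0)}=\cos^2\hang$. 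The decisive point is that $4n^2=\big((2n-2)+2\big)^2$ is an eigenvalue for \emph{every} $n\ge1$, so $\mathcal{A}_n$ has a one-dimensional kernel $\operatorname{span}\Phi_n$, $\Phi_n:=\Phi^{(2n-2)}$ (with $\Phi_1=\cos^2\hang$): the scheme is resonant at every order, which is the origin of the logarithms. The Fredholm alternative reads: $\mathcal{A}_nF=G$ is solvable in $C^\infty_0$ iff $\langle G,\Phi_n\rangle=0$, uniquely up to $\operatorname{span}\Phi_n$.

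Substituting the ansatz and collecting the coefficient of $\hr^{2m}(\ln\hr)^j$ (which by the monomial formula involves only $n=m+1$) gives the triangular system
$$
\mathcal{A}_{m+1}F_{m+1,j} + 4(m+1)(j+1)F_{m+1,j+1} + (j+1)(j+2)F_{m+1,j+2} = G_{m,j},
$$
where $G_{0,0}=\cos^2\hang$ and otherwise $G_{m,j}$ is the $(\ln\hr)^j$-coefficient of $\mathcal{N}[u]$ at order $\hr^{2m}$, hence known from the lower-order $F$'s. I would solve by an outer induction on $m$ and, within each $m$, a downward induction on $j$ from $m+1$ to $0$: at step $j$ the equation becomes $\mathcal{A}_{m+1}F_{m+1,j}=G_{m,j}-4(m+1)(j+1)F_{m+1,j+1}-(j+1)(j+2)F_{m+1,j+2}$ with the higher-$j$ terms already determined, and the solvability condition $\langle\,\cdot\,,\Phi_{m+1}\rangle=0$ is arranged by tuning the kernel component of $F_{m+1,j+1}$. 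This is exactly the mechanism that fixed $F_{1,1}=\tfrac14\cos^2\hang$ in the model computation \eqref{22V24.1}, where solvability of the $F_{1,0}$-equation forced the coefficient $\tfrac14$; the kernel component of $F_{m+1,0}$ stays free, reflecting the expected homogeneous mode $\hr^{2(m+1)}\Phi_{m+1}$.

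The main obstacle is closing the induction at the \emph{top} log-power, i.e.\ showing that no term beyond $k=n$ is ever forced. At $j=m+1$ the coefficients $F_{m+1,m+2}$ and $F_{m+1,m+3}$ are absent, so the equation collapses to $\mathcal{A}_{m+1}F_{m+1,m+1}=G_{m,m+1}$, which is solvable in $C^\infty_0$ only if $\langle G_{m,m+1},\Phi_{m+1}\rangle=0$ — and there is no remaining freedom to enforce this. Establishing that this orthogonality holds automatically is the heart of the argument: $G_{m,m+1}$ is built, through $\mathcal{N}$, from products of the previously-constructed top-log coefficients, and one must show that such products expand only over the eigenfunctions $\Phi^{(q)}$ with $q<2(m+1)-2$, hence are $\perp\Phi_{m+1}=\Phi^{(2m)}$. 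The base instance $G_{1,2}=-\tfrac18\cos^2\hang=-\tfrac18\Phi_1\perp\Phi_2$ already exhibits the cancellation, and in general I expect it to follow from the self-adjointness of $-\mathcal{L}$ together with the algebra of products of its Gegenbauer-type eigenfunctions. Alternatively, and more economically, the bound $k\le n$ on the logarithmic powers is precisely the output of the polyhomogeneity theory for resonant conic operators as in \cite[Theorem~7.4.5]{AndChDiss} applied to $L_{\mathrm{rad}}$, which I would invoke to bypass the explicit orthogonality computation. Once the cap is justified, the recursion runs to all orders and produces the asserted expansion with $F_{n,k}\in C^\infty_0([-\tfrac\pi2,\tfrac\pi2])$.
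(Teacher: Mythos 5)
Your proposal does not follow the paper's route, and the comparison is instructive. The paper runs the same order\--by\--order recursion, but it disposes of every solvability question by invoking Appendix~\ref{App2VIII24.1}, which asserts that $L[f]=s$, with $L$ as in \eqref{19X24.1}, admits for \emph{every} source a solution vanishing at $\hang=\pm\pi/2$; consequently the paper never tunes kernel components and never meets an orthogonality condition. Your resonance picture is the correct one, and it in fact exposes an error in that appendix: writing $L=\cos^2\hang\,\mathcal{A}_n$ with $\mathcal{A}_n=\partial^2_{\hang\hang}-2\cos^{-2}\hang+4n^2$, the value $4n^2$ is, exactly as you say, an eigenvalue of the P\"oschl--Teller operator $-\partial^2_{\hang\hang}+2\cos^{-2}\hang$ for every $n\ge1$, so $L$ has the nontrivial kernel $\Phi_n=\cos^2\hang\,q_n(\sin\hang)$ and its range is constrained: pairing $L[f]=s$ against the \emph{bounded} solution of $L^\dagger\psi=0$ --- the polynomial $P$ which Appendix~\ref{App2VIII24.1} itself constructs, and which is nothing but $\Phi_n/\cos^2\hang$ --- and checking that all boundary terms vanish whenever $f$ vanishes at the endpoints, one obtains the necessary condition $\int_{-\pi/2}^{\pi/2}s\,P\,d\hang=0$. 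The appendix misses this because it imposes $\psi(\pm1)=0$ on the adjoint kernel, which is not the condition dual to ``$f$ vanishes at the endpoints''; the correct dual domain contains $P$. Concretely, for $\alpha=4$ and $s=\cos^2\hang$ one has $\int\cos^2\hang\,(1-6\sin^2\hang)\,d\hang=-\pi/4\neq0$, so no admissible solution exists. The same mechanism is what forces the logarithm in the paper's own model computation \eqref{22V24.1}: a log\--free leading\--order ansatz would require solving $L[F]=\cos^4\hang$ with $\alpha=2$, which is obstructed since $\int\cos^4\hang\,d\hang\neq0$.

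Judged as a proof, however, your proposal has precisely the gap you flag. The downward\--in\--$j$ tuning handles every log level except the top one; at $j=m+1$ no freedom remains and you must show $\langle G_{m,m+1},\Phi_{m+1}\rangle=0$ unconditionally, which you only conjecture. Your proposed bypass --- applying \cite[Theorem~7.4.5]{AndChDiss} to $L_{\mathrm{rad}}$ --- does not close it: that theorem governs expansions in a defining function of the conformal boundary for uniformly degenerate operators, whereas here the expansion is in the distance $\hr$ to a single boundary point, a corner\--type problem in which the angular direction degenerates; this is exactly why the paper can extract only the partial regularity of Proposition~\ref{p2VIII24.1} from that theorem and must treat the pole expansion by hand. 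The good news is that the orthogonality can be proved along the very lines you sketch. The top\--log source at order $n$ receives contributions only from the expansion of $1-e^{-2u}$ (the term $\hx^4e^{2u}|\qf|^2$ carries at most $\ln^{n-1}$ at order $\hr^{2n}$), hence is a sum of products of at least two top\--log coefficients $F_{m,m}$, $m<n$, divided by $\cos^2\hang$. An induction shows that each $F_{m,m}$ lies in $\mathrm{span}\{\Phi^{(2j)}:\,j\le m-1\}$, i.e.\ equals $\cos^2\hang$ times an even polynomial of degree at most $2m-2$ in $\sin\hang$; a product of $\ell\ge2$ such factors, divided by $\cos^2\hang$, still retains one factor $\cos^2\hang$ and multiplies an even polynomial of degree at most $2(\ell-2)+(2n-2\ell)=2n-4$, so it lies in $\mathrm{span}\{\Phi^{(2j)}:\,j\le n-2\}$ and is $L^2(d\hang)$\--orthogonal to $\Phi_n=\Phi^{(2n-2)}$. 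With this lemma inserted, your recursion closes at every order and produces the expansion with $k\le n$; completed in this way your argument is sound, while the paper's, resting on the surjectivity claim of Appendix~\ref{App2VIII24.1}, is not.
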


\begproof
The result works for $N = 1$ by \eqref{22V24.2}-\eqref{22V24.1}.
Let us assume the result for some $N\ge1$. We look for
$$v = O(\hr^{2N+2} \ln ^{N+1}\hr)
$$
such that
\begin{equation}
	\begin{split}
		u & = \sum_{n=1}^{N} \sum_{k=0}^{n} \hr^{2n}|\ln \hr|^{k} F_{n,k}(\hang)  \, + \, v \, + o(\hr^{2(N+1)})  \\
		& = \sum_{n=1}^{N} u_{n}  \, + \, v \, + o(\hr^{2(N+1)}) \,.
	\end{split}
\end{equation}

Now, $u$ verifies 
$$ 
\hx^2 (\partial_\hx^2 + \partial_\hy^2) u = 1 - e^{-2u} + \hx^4 e^{2u}|f|^2 
 \,,
$$
 hence 
\begin{eqnarray}
	\Delta_{\tg}\left(\sum_{n=1}^{N} u_n + v\right)
	&= &
	-\sum_{k=1}^{N} \frac{(-2)^k (\sum_{n=1}^{N} u_n + v)^k}{k!}
	+ \frac{\hx^4}{\hr^2} \sum_{k=0}^{N-1}
	\frac{2^k (\sum_{n=1}^{N} u_n
		+  v)^k}{k!}
	\nonumber
	\\
	&&      + O(\hr^{2(N+1)}\ln ^{N+1} \hr) \,.
\end{eqnarray}
Using the equations of order $n$ for $n \le N$ and simplifying $o(\hr^{2(N+1)})$ terms, only $\hr^{2(N+1)} |\ln \hr|^k, \, k\le N+1 $ terms remain.
We obtain
\begin{equation}\label{Scalar N+1}
	\Delta_{\tg}v = 2 v + \hr^{2(N+1)} \sum_{k=0}^{N+1} |\ln \hr|^k S_{k}(\hang) + o(\hr^{2(N+1)}) \,,
\end{equation}
with $S_k \in  C^{\infty}_0([-\frac\pi2,\frac\pi2])$. 
We look for solutions  of the form
\begin{equation}
	v = \hr^{2(N+1)} \sum_{k=0}^{N+1} F_k(\hang) |\ln \hr|^k + o(\hr^{2(N+1)})\,,
\end{equation}
with $F_k \in C^{\infty}_0([-\frac\pi2,\frac\pi2])$.
\\

We consider  first the most singular terms at the right-hand side of 
 \eqref{Scalar N+1}, namely those which are of order $\hr^{2(N+1)} |\ln \hr|^{N+1}$. Developing the Laplacian using 
\begin{eqnarray}\label{LPH_HPr}
		\Delta_{\tg}\big(\hr^\alpha |\ln \hr|^\beta f(\hang)\big)   
 &=& 
 \hr^{ \alpha}  \cos^2\hang \Big[\big(\alpha^2 |\ln \hr|^\beta - 2\alpha\beta|\ln \hr|^{\beta-1} +
 \nonumber
		\\
		&&\beta(\beta-1)|\ln \hr|^{\beta-2}\big) f(\hang) + |\ln \hr|^\beta 
  \partial^2_{\hang\hang}f 
 \Big]
 \,, 
 \quad
\end{eqnarray}
\eqref{Scalar N+1} becomes
\begin{equation}\label{10V24.1}
	\left(L\left[F_{N+1}\right] -  S_{N+1} + o(1)\right) \, \hr^{2(N+1)}|\ln \hr|^{N+1} = 0 \,,
\end{equation}
for a smooth function of $S_{N+1}(\hang)$, where
\begin{equation}
L = \cos^2\hang \left(4(N+1)^2  + \partial^2_{\hang\hang} \right) - 2 \,.
 \label{19X24.1}
\end{equation} 
We are thus led to study solutions of the problem
\begin{equation}\label{10V24.1bc}
	L\left[F_{N+1}\right] =  S_{N+1}
\end{equation}
in $C^{\infty}_0([-\frac\pi2,\frac\pi2])$. We show in Appendix~\ref{App2VIII24.1}
that such a function $F_{N+1}$ exists.

The remaining terms at the right-hand side of  \eqref{Scalar N+1} can be handled by descending induction. For this,   set $ F_{N+2}\equiv0$; recall that $F_{N+1}$ has just been determined. Assume that for   $k\in [\![0;N]\!] $ the coefficients $F_l$ with
  $l>k$ have been determined. Then, at order $\hr^{2(N+1)} |\ln \hr|^{k}$, Equation~\eqref{Scalar N+1} reads
\begin{equation}
	L[F_{k}] = \tilde{S}_{k} \,,
\end{equation}
with 
$$
\tilde{S}_k = S_k + 2(N+1)(k+1)F_{k+1} - (k+1)(k+2)F_{k+2}\,,
$$
the source term taking into account the terms from derivatives of lower order (higher k).
As before, this equation has a solution in $C^{\infty}_0([-\frac\pi2,\frac\pi2])$.
The iteration over $k$ gives us the development we wanted for $v$, and the iteration over $N$ gives Prop. \ref{pole_PHdvp}.
\ptcheck{23V; quick first look}
\eproof
\\

The proof of Prop. \ref{pole_PHdvp} can then be adapted to any pole on $\partial \HP$. Using \eqref{27V24}, the same formal polyhomogeneous development in the variable $\hr = \frac{1-\hx^2 + \hy^2}{\sqrt{(1+\hx^2)^2 + \hy^2}}$ can be done for a pole on $\B$ (just the angular part changes a priori).

Note that $\hr\rightarrow0$ on the pole but nowhere else on the unit circle, this development is very localized. Anywhere else, the previous development for holomorphic $\qf$ should apply, but this remains to be established.

\renewcommand{\blue}[1]{{#1}}

\section{Maximal hypersurfaces}
 \label{app10IX24.1} 
The aim of this section is to verify regularity at the conformal boundary at infinity of maximal surfaces, as needed for the Positivity Theorem~\ref{t8IX24.1}. 
\ptcn{notation from my paper with Greg removed to Greg.tex}

So let  $\scri$ be the conformal boundary at infinity and  $\cut$ be a compact spacelike submanifold of $\scri$ which is a graph 
\begin{equation}\label{10IX24.1}
  t= \bmaxgraph (\varphi)
   \,.
\end{equation}
We can apply an asymptotic symmetry 
 to obtain new  Ba\~nados  coordinates $(\mybar  t,\mybar  r, \mybar  \varphi)$ on a neighborhood of $\cut$ within $\scri$ so that 
$$
 \cut = \{\mybar  t=0\}
  \,.
$$
According to~\cite{ShiMax} (compare~\cite[Theorem~9.3]{ChGallowayMax}), when $\bmaxgraph $ is close to $0$  
(in a norm made precise in these references) there exists a maximal hypersurface spanned on $\cut$ which is the graph of a function $\maxgraph$ over the hypersurface $\{\mybar  t=0\}$. The function $\maxgraph$ is smooth in the interior,   polyhomogeneous at the conformal boundary, and, in general, only $C^2$  up-to-boundary.  More regularity holds in specific situations, which is the case under our hypotheses.
Indeed, one finds that the trace of the extrinsic 
curvature tensor, which we denote by  $\mybar  K $,  
 of the level sets of $\mybar   t$ in coordinates $\mybar  r=1/\red{\mybar  x}$ is given by   
 \begin{align}
  \mathrm{tr}(\mybar  K) &= \red{-}
  \frac{\red{\mybar  x}^3 \left(\mybarmcL_-'(\mybar  t - \mybar  \varphi )
   \left(\red{\mybar  x}^2\mybarmcL_+(\mybar  t + \mybar  \varphi
   )+1\right)+\left(\red{\mybar  x}^2\mybarmcL_-(\mybar  t - \mybar  \varphi
   )+1\right)\mybarmcL_+'(\mybar  t + \mybar  \varphi )\right)}{2
   \left(\left(\red{\mybar  x}^2\mybarmcL_-(\mybar  t - \mybar  \varphi
   )+1\right) \left(\red{\mybar  x}^2\mybarmcL_+(\mybar  t + \mybar  \varphi
   )+1\right)\right)^{3/2}} \\
    &= \red{-}\frac{\mybarmcL_-'(\mybar  t - \mybar  \varphi)+\mybarmcL_+'(\mybar  t + \mybar  \varphi )}{2} \red{\mybar  x}^3
    + O(\red{\mybar  x}^{5})\,.
 \end{align}
It now follows by the arguments in~\cite{ChGallowayMax} that $\maxgraph$ is smooth on the conformally rescaled manifold with  
\begin{equation}
\maxgraph( \red{\mybar  x}, \red{\mybar  \varphi}) =  \maxgraph_4(\red{\mybar  \varphi}) \red{\mybar  x}^4 + \cdots
\end{equation}
where 
\begin{equation}
    \maxgraph_4(\red{\mybar  \varphi} ) =  \frac{1}{8}  
    \left(\mybarmcL_-'(-\red{\mybar  \varphi}
    )+\mybarmcL_+'(\red{\mybar  \varphi} )\right) 
     \,.
\end{equation} 
(We note that $\red{\mybar  x}^2$-terms in $\mybar  K$ would have led to $\red{\mybar  x}^3\log \red{\mybar  x}$-terms in $\maxgraph$, creating problems in an attempt to prove positivity.)
The induced metric on the level sets of $\maxgraph$ reads
\begin{align}
\red{ \mybar {g}} &= (\theta^2 )^2  
+ \frac{\left(\red{\mybar  x}^2\mybarmcL_-(\mybar {t}- \mybar {\varphi}
)+1\right) \left(\red{\mybar  x}^2\mybarmcL_+(\mybar {t}+\mybar {\varphi}
)+1\right)}{\red{\mybar  x}^2} (\theta^1 )^2 
\\
&=(\theta^2 )^2 + (1 +  (\mybarmcL_-(-\red{\mybar  \varphi} )+\mybarmcL_+(\red{\mybar  \varphi} )) \red{\mybar  x}^2 )(\theta^1)^2 + O(\red{\mybar  x}^4)\,,
\end{align} 
where  $\theta^2 = d\red{\mybar  x}/\red{\mybar  x}$ and $\theta^1 = d\mybar \varphi/\red{\mybar  x}$.
We conclude that the mass aspect function of the maximal hypersurface is the same as that of the hypersurface $\{\mybar  t=0\}$, as desired.

\section{An ODE result}
 \label{App2VIII24.1}
  
In this appendix we prove existence of  solutions of the boundary-value problem 
\begin{equation}
	\left[\cos^2\varphi \left(  \alpha^2 + \frac{d^2 }{d\varphi^2}\right) - 2 \right]f(\varphi) = s(\varphi)
 \,,
\label{10V24.2}
\end{equation}
\begin{equation}
	f\left(-\frac{\pi}{2}\right) = f\left(\frac{\pi}{2}\right) = 0
\,,
\end{equation}
on $(-\frac{\pi}{2},\frac{\pi}{2})$, for some $\alpha\in\mathbb{R}$,
 and a source $s$. 
 The operator appearing at the left-hand side of~\eqref{10V24.2} is the operator $L$ of \eqref{19X24.1} with $\hang$ there replaced by $\varphi$ here, 
 and with
$$
 \alpha= 2(N+1) \ge 4
 \,.
$$
We will thus justify existence of solutions of \eqref{10V24.1bc}. 

We use the scalar product 
\begin{equation}
	\langle f,g\rangle := \int_{-\pi/2}^{\pi/2} f(\varphi)g(\varphi) \, d\varphi \,,
\end{equation}
so that the formal adjoint $L^\dagger$ is defined by the equation
\begin{equation}
	\int_{-\pi/2}^{\pi/2} \psi (L\phi) \, d\varphi = \int_{-\pi/2}^{\pi/2} (L^\dagger \psi) \phi \, d\varphi
\,.
\end{equation}
The study of $L^\dagger$ is of  interest since we have:
\begin{equation}
	\ker L^\dagger = \{0\} \quad \quad\Longleftrightarrow\quad \quad \forall\, \xi\  \exists\, \phi\,: \ L\phi = \xi\,.
\end{equation}
It will be convenient to use a new variable $X=\sin\varphi$, so that
\begin{equation}
	\langle f,g\rangle = \int_{-1}^{1} f(X)g(X) \, \frac{dX}{\sqrt{1-X^2}} \,.
\end{equation}
Equation~\eqref{10V24.2} becomes
\begin{equation}
	L=(1-X^2)\Big((1-X^2)\partial^2_X - X\partial_X+ \alpha^2
    \Big) -2 \,.
\end{equation}
Since
\begin{equation}
	d\varphi = \frac{dX}{\sqrt{1-X^2}}
\,,
\end{equation}
we have
\ptcheck{19X24}
$$
	L^\dagger \psi = \sqrt{1-X^2}\Big(
    \partial^2_X ((1-X^2)^{3/2}\psi) + \partial_X (X\sqrt{1-X^2}\psi) 
    \Big)+ \big(\alpha^2(1-X^2) -2) \psi
\,,
$$
which simplifies to
\ptcheck{19X}
\begin{equation}
	L^\dagger = 
 (1-X^2)
 \left[(1-X^2) \partial^2_X - 5X\partial_X + (\alpha^2-4)\right]
 \,.
\end{equation}
Using a power series, one can find the general family of solutions as
\begin{equation}\label{IntSerie}
	\begin{split}
		\psi(X) &=  \sum_{k=0}^{\alpha/2} c_{2k} X^{2k} + X \sum_{k=0}^{\infty} c_{2k+1} X^{2k} \\
		&=: c_0 \, P(X^2) + c_1 \, X Q(X^2)\,,
	\end{split}	
\end{equation}
with $c_0, \, c_1\in\mathbb{R}$ and
$$
\forall k \ge 0, \quad (k+1)(k+2)c_{k+2} = \left[k(k-1) + 5k 
- (\alpha^2 - 4) \right] c_k \,.
$$
Equivalently,
\begin{equation}
	\forall k\ge2, \quad c_k = \frac{k^2 - \alpha^2}{k(k-1)}c_{k-2} \,.
\end{equation}
Since $\alpha = 2(N+1)$ is even, we find
\begin{eqnarray} 
c_{2k} =  (-4)^k
 \frac{N+1-k}{N+1} 
  \binom{N+1+k}{2k} 
 c_0 
  \,,
\end{eqnarray}
for $k\le N$, and $c_{2k}=0$ for $ k >N$, hence the series \eqref{IntSerie}.

We have:

\begin{prop}
$P(1)\neq 0$ and $Q$ has a radius of convergence of 1 and diverges for $X=1$.
\end{prop}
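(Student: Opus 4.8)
The plan is to recognize the second-order ODE hidden behind the recursion $c_k = \frac{k^2-\alpha^2}{k(k-1)}c_{k-2}$, identify $P$ with a classical orthogonal polynomial so as to pin down $P(1)$, and treat $Q$ by a direct asymptotic analysis of its coefficients. First I would note that, away from the endpoints $X=\pm1$, the equation $L^\dagger\psi=0$ reduces (after dividing out the overall factor $1-X^2$) to
\[
(1-X^2)\psi'' - 5X\psi' + (\alpha^2-4)\psi = 0,
\]
which is the Gegenbauer equation $(1-X^2)y''-(2\lambda+1)Xy'+n(n+2\lambda)y=0$ with $\lambda=2$ and $n=\alpha-2=2N$ (recall $\alpha=2(N+1)$). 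Since $X=0$ is an ordinary point, the even and odd Frobenius series are exactly the functions $P(X^2)$ and $XQ(X^2)$ read off from \eqref{IntSerie}. The even series terminates (the factor $(2k)^2-\alpha^2$ vanishes at $2k=\alpha$), so $P$ is a polynomial; being the unique-up-to-scale polynomial solution, it is proportional to the Gegenbauer polynomial $C_{2N}^{(2)}$, with proportionality constant $\kappa=c_0/C_{2N}^{(2)}(0)\neq0$, using that $C_{2N}^{(2)}(0)\neq0$ for even index.

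For $P(1)\neq0$ I would then invoke the standard endpoint value $C_n^{(\lambda)}(1)=\binom{n+2\lambda-1}{n}$, giving $P(1)=\kappa\binom{2N+3}{2N}\neq0$. Equivalently, one may sum the explicit coefficients directly to the closed form $P(1)=(-1)^N(2N+1)(2N+3)/3$, which is manifestly nonzero; this is the self-contained route, at the cost of evaluating the alternating binomial sum $\sum_{k=0}^N(-4)^k\frac{N+1-k}{N+1}\binom{N+1+k}{2k}$. This step is the main obstacle: the coefficients alternate in sign, so nonvanishing is not visible term by term and genuinely requires either the Gegenbauer identification with the known endpoint value, or the closed-form evaluation above.

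For $Q$ I would argue purely from the recursion $c_{2k+1}=\frac{(2k+1)^2-\alpha^2}{(2k+1)(2k)}c_{2k-1}$. The ratio of successive coefficients of $Q(Y)=\sum_k c_{2k+1}Y^k$ tends to $1$, so the ratio test gives radius of convergence exactly $1$. For divergence at $X=1$ (that is, $Y=1$) I would observe that for $2k+1>\alpha$ all factors are positive, so the $c_{2k+1}$ eventually have fixed sign, while the expansion
\[
\frac{c_{2k+1}}{c_{2k-1}} = 1+\frac{2k+1-\alpha^2}{(2k+1)(2k)} = 1+\frac{1}{2k}+O(k^{-2})
\]
forces $c_{2k+1}\sim C\sqrt{k}$. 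In particular the terms do not tend to $0$, so $Q(1)=\sum_k c_{2k+1}$ diverges. Everything about $Q$ is thus routine asymptotics of the recursion, and the content of the proposition is concentrated in $P(1)\neq0$.
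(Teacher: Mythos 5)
Your treatment of $Q$ is essentially the paper's own argument: you expand the ratio of consecutive odd coefficients as $1+\tfrac{1}{2k}+O(k^{-2})$, deduce $c_{2k+1}\sim C\sqrt{k}$, and conclude both that the radius of convergence is $1$ and that the series diverges at $X=1$ because its terms do not tend to zero; your observation that the coefficients have eventually fixed sign (all factors positive once $2k+1>\alpha$) is exactly what the paper builds in by shifting the index to $u_k=c_{2k+1+\alpha}$ before taking logarithms. Your treatment of $P(1)\neq 0$, however, is genuinely different. The paper evaluates the terminating alternating sum $\sum_k(-4)^k\frac{N+1-k}{N+1}\binom{N+1+k}{2k}$ by computer algebra and bounds the result away from zero; you instead recognize $(1-X^2)\psi''-5X\psi'+(\alpha^2-4)\psi=0$ as the Gegenbauer equation with $\lambda=2$, $n=\alpha-2=2N$, identify $P(X^2)$ with $C^{(2)}_{2N}(X)/C^{(2)}_{2N}(0)$ (parity, or equivalently the fact that the odd solution never terminates, makes the polynomial solution unique up to scale, and $C^{(2)}_{2N}(0)=(-1)^N(N+1)\neq 0$), and read off the endpoint value from $C^{(\lambda)}_n(1)=\binom{n+2\lambda-1}{n}$. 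This buys a computer-free, conceptual proof together with the exact value $P(1)=(-1)^N(2N+1)(2N+3)/3$. It is worth noting that your closed form is the correct one: at $N=1$ the even series is $P(Y)=1-6Y$, so $P(1)=-5$, matching $(-1)^1\cdot 3\cdot 5/3=-5$, whereas the paper's printed identity $-\tfrac13(-1)^{N+1}\bigl(-1+3(-1)^{N+1}+4(N+1)^2\bigr)$ evaluates to $-6$ there; the published expression is off by an additive $-1$. This is harmless for the paper's conclusion (its lower bound $\tfrac13\bigl(4(N+1)^2-4\bigr)$ remains valid for the true value), but it is a slip that your independent route detects and corrects.
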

\begin{proof}  
First, using {\sc Mathematica} one checks the identity
\begin{eqnarray}
	P(1)
 & = & \frac{1}{c_0} \sum_{k=0}^{{N+1}} c_{2k} = \sum_{k=0}^{{N+1}} (-4)^k
 \frac{{N+1}-k}{{N+1}} 
  \binom{{N+1}+k}{2k}
\\
 & = &
  -\frac13 (-1)^{N+1} \big( -1 + 3(-1)^{N+1} +4({N+1})^2 \big)
  \,.
\end{eqnarray}
Since $|P(1)| \ge \frac13 (4({N+1})^2 -4) >0$ for $N>0$, we obtain $P(1)\neq0$ as desired.
\\

Next, for all $k\ge0$ let us define
\begin{equation}
	u_k = c_{2k+1 + \alpha}\,,
\end{equation}
so that
\begin{equation}
	\forall k>0, \quad u_k = \frac{(2k+1 + \alpha)^2 - \alpha^2 }{(2k+1+\alpha)(2k+\alpha)}u_{k-1} \,.
\end{equation}
We have
\ptcheck{19X}
\begin{equation}
	\frac{u_k}{u_{k-1}} = 1 + \frac{1}{2k} + \mathcal{O}\left(\frac{1}{k^2}\right)\,,
\end{equation}
hence for all $n\in\N^*$,
\begin{equation}
	\ln(u_n) -\ln(u_0) = \sum_{k=1}^{n} \left[\ln(u_k) - \ln(u_{k-1}) \right] = \sum_{k=1}^{n} \left[\frac{1}{2k} + \mathcal{O}\left(\frac{1}{k^2}\right) \right] = \frac 12 \ln n + C + o(1) \,,
\end{equation}
with $C\in\R$. As a consequence, there exists $\tilde C\in\R$ such that, for large $n$,
\begin{equation}
	u_n \sim \tilde{C} \sqrt n \,,
\end{equation}
so that $\sum u_n$ diverges. 
Therefore, $Q(X)=\sum c_{2p+1} X^{2p}$ has a radius of convergence of 1 and diverges for $X^2=1$.
 \ptcheck{19X}
\hfill$\Box$
\end{proof}

\medskip

With these results, the constraint $\psi(-1)=\psi(1)=0$ leads to $c_0=c_1=0$ as the unique possibility.
As a consequence,
$$
\ker L^\dagger = \{0\}
$$
and $Lf=s$ admits solutions that vanish on the boundary for any source $s$.
%



\section{Imaginary Killing spinors} 
 \label{s24X24.1}
 
We consider the imaginary Killing spinor equation, 
\begin{equation}
D_j \psi = - \frac{i}{2} \gamma_j\psi
 \,,
\end{equation}
which we write as 
\begin{equation}
D_j \psi = \frac{i \, \epsilon}{2} \gamma_j \psi
 \,,
  \label{24X24.3}
\end{equation}
where, for reasons that will become clear shortly, the parameter $\epsilon = \pm 1$ allows one to treat simultaneously 
a representation of the Clifford algebra obtained by multiplying the $\gamma$-matrices  by $-1$.
The Anti-de Sitter metric can be written as
\begin{equation}
    \label{halfspace}
g = \frac{-d \hat{t}^2 + d\hx^2+d\hy^2}{\hx^2}
 \,,
\end{equation}
where $\hx\in \mathbb{R}^+$ and  $\hat{t}, \hy \in \mathbb{R}$.
While  this coordinate system does not cover the whole Anti-de Sitter spacetime (cf., e.g, \cite[Equation~(3.11)]{Banados:1992gq}), 
it is good enough for our purposes as each level set of $\hht$ covers the whole two-dimensional hyperbolic space. 
Using a spin frame associated with the ON frame   
\begin{equation}\label{22X24.45}
  e_{\blue{0}} = \hat x\partial_{\hat t}
  \,,
  \quad
  e_{\blue{1}} = \hat x\partial_{\hat y}
  \,,
  \quad
  e_{\blue{2}} = \hat x\partial_{\hat x}
  \,,
\end{equation}
and  $\gamma$-matrices given by
\begin{equation}
         \gamma^0 = \sigma^3\,,
         \qquad
        \gamma^1 = i \sigma^1\,,
        \qquad 
        \gamma^2 = i \sigma^2\,,
 \label{24X24.1}
\end{equation}
all imaginary   Killing spinors are found to be 
\begin{equation}
\psi \vert_{\epsilon = 1} = \left(\frac{a_2 (-i \hat{t}+i
\hy +\hx +1)-2 i a_1}{2
\sqrt{\hx}}, \frac{2 a_1+ a_2
(\tau - \hy -i \hx +i)}{2 \sqrt{\hx}}\right)\,,
 \label{22X24.42}
\end{equation}
\begin{equation}
    \psi \vert_{\epsilon = - 1} = 
    \left(\frac{b_2 (\hx +1 + i (\hat{t}
    +\hy)) + 2 i b_1}{2
    \sqrt{\hx}}
    , \frac{2 b_1+b_2
    (\hat{t}+ \hy + i \hx - i)}{2
    \sqrt{\hx}}
    \right)
    \,,
     \label{22X24.43}
    \end{equation}
with $a_1$, $a_2$, $b_1$ and $b_2 \in \mathbb C$.  
The associated Killing vectors    $ X := \psi^\dagger  \gamma^\mu \psi \, e_\mu$
 have coordinate components given by  
 \begin{align}\label{23X24.2}
  X^{\hht} 
  &=  2 | a_1 |^2  +  \frac{|a_2|^2}{2} (\hx^2 + 1+ (\hy - \hat{t} )^2)
   - 2 \Re(\bar a_1 a_2) (\hy-\hat{t}) +2 \Im (a_1 \bar a_2)
   \,,
    \\
   X^{\hx}  &=  2 \hx (\Re(\bar a_1 a_2) -
   | a_2 |^2 (\hy - \hat{t}    ))
   \,,
    \\
   X^{\hy}  &=  - 2 | a_1 |^2 
  + \frac{|a_2|^2}{2} (\hx^2 -1 - (\hy - \hat{t} )^2)
  +2 \Re(\bar a_1 a_2) (\hy - \hat{t}) -2 \Im (a_1 \bar a_2)
  \,,
\end{align}  
when $\psi$ is given by \eqref{22X24.42} and   

\begin{align}\label{23X24.1}
  X^{\hht} 
  &= 2| b_1 |^2  +  \frac{|b_2|^2}{2} (1+ \hx^2 + (\hy + \hat{t} )^2)
   + 2 \Re(\bar b_1 b_2) (\hy +\hat{t} )
   -2 \Im (b_1 \bar b_2)
    \, ,
     \\
  X^{\hx} 
  &=  
   2  \hx \big (\Re(\bar b_1 b_2) + 
  | b_2 |^2 (\hy +\hat{t}
   )
    \big )
  \, , 
   \\
  X^{\hy} 
  &=  2 | b_1 |^2 + \frac{|b_2|^2}{2} (1- \hx^2 +(\hy +\hat{t})^2)
   + 2 \Re(\bar b_1 b_2) (\hy +\hat{t} )
  -2 \Im (b_1 \bar b_2)
   \,,
\end{align} 
when $\psi$ is given by \eqref{22X24.43}.

The above imaginary Killing spinors are all obviously well-defined throughout the whole coordinate range, in particular on the whole two-dimensional slice $\hht=0$. 
In particular they extend, as spinor fields (not necessarily imaginary Killing), to any manifold with spin structure inducing  near the conformal boundary at infinity 
the canonical spin structure
as defined in Remark~\ref{R3XI24.1}.

Recall that the hyperbolic cusp is obtained by choosing $\lambda\in \R^+$ and identifying $\hat y$ to a circle of period 
$\lambda$. On this manifold only the imaginary    Killing spinors  with $a_2=0=b_2$ survive the identification, leading to   
Killing vectors colinear with
\begin{equation}\label{24X24.2}
  X =  \partial_\hht  \red{-} \epsilon  \partial_\hy
   \,,
\end{equation}
with $\epsilon $ as in \eqref{24X24.3}. 

These imaginary  Killing spinors extend, as spinor fields (but not Killing in general), both  to manifolds equipped with the canonical spin structure
of Remark~\ref{R3XI24.1}, using the diffeomorphism
\eqref{7XI24.3} between the half-space model and the disc model,
 or the twisted spin structure using the identification 
 $\hy\mapsto \hy+2\pi$, which provides a 
 diffeomorphism between  the periodically identified half-space model and a disc with the origin removed.
 Note that the latter identification leads  to spinor fields which are periodic in the trivial intrinsic spin structure on the bounding circle, 
 while the former to anti-periodic ones. Further comments on this can be found in Remark~\ref{R7XI24.1} below.

It has been pointed-out
in \cite{CoussaertHenneaux} that extreme BTZ black holes provide another example where 
imaginary Killing spinors survive the transition to the quotient, and that this is the only remaining case in the BTZ family. 
This is far from apparent from \eqref{22X24.42}-\eqref{22X24.43} and so, 
to find the imaginary    Killing spinors in the extremal BTZ case, it is useful to consider the   coordinate
system \eqref{difffourg}. For this we start by noting that 
the local coordinate transformation from the 
half-space model to the metric \eqref{1811024} with $M = J \neq 0$ is given by 
 \ptcheck{23X; with the mathematica file}
\begin{align}
    \hht &= \frac{1}{2} \left(\frac{3 M-2 r^2}{\sqrt{2 M} \left(M-2
r^2\right)}-e^{\sqrt{2 M} (\varphi - t )}+ t +\varphi
\right)\,, \\ 
\hx &= \frac{2^{3/4} e^{\frac{\sqrt{M} (\varphi -t
)}{\sqrt{2}}}}{\sqrt{\frac{2 r^2-M}{\sqrt{M}}}}\,, \\
\hy &= \frac{1}{2} \left(\frac{M+2 r^2}{\sqrt{2 M} \left(M-2
r^2\right)}+e^{\sqrt{2 M} (\varphi - t )}+ t +\varphi
\right)\,.
\end{align}
Hence, the identification $(t, r, \varphi) \sim (t, r, \varphi+2 \pi)$
leads to the identification 
\begin{equation}
(\hht + \hy, \hht - \hy- (\sqrt{2 M})^{-1}, \hx) \sim (\hht + \hy + 2 \pi, 
 \big(\hht - \hy - (\sqrt{2 M})^{-1}\big) e^{2 \pi \sqrt{2 M}}, 
\hx e^{\sqrt{2 M} \pi })\,.
 \label{24X24.4}
\end{equation}
Whether or not either of the spinors \eqref{22X24.42}-\eqref{22X24.43} survive \eqref{24X24.4} 
 is not obvious, as the last identification requires an associated adjustment of the spin frame. 
  There are two cases to consider, $M = J$ and $M = -J$.
For $M = J$ the metric reads 
\begin{equation}
    {\fourg }  = -\left(\myr^2 - M\right) dt^2+ \frac{4 \myr^2 d\myr^2}{M^2-4 M \myr^2+4 \myr^4} - M dt d\varphi
    +\myr^2 d\varphi^2\,.
  \end{equation}
This metric has one well-defined imaginary Killing spinor when $\epsilon = 1$.
In a spin frame associated with the tetrad
\begin{equation} 
e_0 = \frac{2 r}{|M-2 r^2|} \partial_t
+ \frac{M}{r |M-2 r^2|} \partial_\varphi
\,, \quad e_1 = \frac{1}{r} \partial_\varphi 
 \,,\quad e_2 = 
 \frac{|M-2 r^2|}{2 r} \partial_r
 \,,
\end{equation}
the imaginary  Killing spinor
has components, for some complex constant $c_1$, 
\begin{equation}
   \psi =c_1  \left(\sqrt{r-\frac{M}{2 r}},- i  \sqrt{r-\frac{M}{2 r}}\right)\,,
    \label{1II25.11}
\end{equation}
where we have assumed that  
$r^2 > M/2$. 
The resulting Killing vector  is 
\begin{equation}
 2 |c_1|^2 \left(\partial_t + \partial_\varphi\right)\,.
\end{equation}
  For $M = - J$, the metric reads
\begin{equation}
    {\fourg }  = -\left(\myr^2 - M\right) dt^2+ \frac{4 \myr^2 d\myr^2}{M^2-4 M \myr^2+4 \myr^4} + M dt d\varphi
    +\myr^2 d\varphi^2\,.
  \end{equation}
It has one well-defined imaginary Killing spinor for $\epsilon = \red{-} 1$. In the spin frame 
defined by the tetrad 
\begin{equation} 
    e_0 = \frac{2 r}{|M-2 r^2|} \partial_t
    - \frac{M}{r |M-2 r^2|} \partial_\varphi
    \,, \quad e_1 = \frac{1}{r} \partial_\varphi 
     \,,\quad e_2 = 
     \frac{|M-2 r^2|}{2 r} \partial_r\,,
    \end{equation} 
it is given by
\begin{equation}
    \psi = c_2 \left(\sqrt{r-\frac{M}{2 r}}, i \sqrt{r-\frac{M}{2 r}}\right)
     \,,
 \end{equation}
 where we have again assumed that we are in the regime 
$r^2 > M/2$. 
The associated Killing vector  reads
\begin{equation}
    2 |c_2|^2 \left(\partial_t \red{-} \partial_\varphi\right)\,.
\end{equation} 

For completeness, and in order to make clear the periodicity of spinor fields near the conformal boundary at infinity,
 we consider imaginary  Killing spinors for Anti-de Sitter in 
the global coordinate system
\begin{equation}
\mathfrak{g} = 
-\frac{
   \left(x^2+y^2+1\right)^2}{\left(x^2+y^2-1\right)^2} dt^2
+4\frac{dx^2+ dy^2}{\left(x^2+y^2-1\right)^2}\,.
\end{equation}
This may be obtained by considering AdS in the coordinate system
\begin{equation}
    \mathfrak{g} = -\left(r^2+1\right) dt^2+\frac{dr^2}{r^2+1}+r^2 d\varphi^2
\end{equation}
and performing the change of coordinates
\begin{equation}
   x =  \frac{\left(\sqrt{r^2+1}+1\right) \cos (\varphi )}{r}\,,
   \qquad 
   y =  \frac{\left(\sqrt{r^2+1}+1\right) \sin (\varphi )}{r}\,.
\end{equation}
We now calculate imaginary  Killing spinors in a spin frame associated with
\begin{align}
  e_0 &= \left(\frac{1-x^2 -y^2}{1+x^2 +y^2}\right) \partial_t
  \,, & e_1 &= \frac{1}{2} 
  \left(1-x^2-y^2\right) \partial_y\,, \\
   e_2 &= \frac{1}{2}
   \left(1-x^2-y^2\right) \partial_x\,,
   & &
\end{align} 
and obtain 
\begin{equation}
  \label{61124spin1}
  \psi \vert_{\epsilon = 1} = 
  \left(-\frac{i e^{-\frac{i t}{2}}
  \left({\tila}_1+ {\tila}_2
  e^{i t} (x+i
  y)\right)}{\sqrt{1-x^2-y^2
  }}, -\frac{e^{-\frac{i t}{2}}
  \left({\tila}_1 (x-i
  y)+ {\tila}_2 e^{i
  t}\right)}{\sqrt{1-x^2-y^2
  }}  \right)
  \,,
  \end{equation}
  \begin{equation}
    \label{61124spin2}
      \psi \vert_{\epsilon = - 1} = 
      \left(
        -\frac{i e^{-\frac{i t}{2}}
   \left({\tilb}_2 (x+i
   y)+{\tilb}_1 e^{i
   t}\right)}{\sqrt{1 -x^2-y^2
   }}, 
   \frac{e^{-\frac{i t}{2}}
   \left({\tilb}_2+{\tilb}_1
   e^{i t} (x-i
   y)\right)}{\sqrt{1 -x^2-y^2
   }}
    \right)
      \,.
      \end{equation}
      with ${\tila}_1$, ${\tila}_2$, ${\tilb}_1$ and ${\tilb}_2 \in \mathbb C$,
       with associated 
      Killing vectors given by
      \begin{align}\label{61124kill1}
        X^{t} 
        &=  |{\tila}_1|^2 + |{\tila}_2|^2  + 
        \frac{4 \Re({\tila}_1 \bar{{\tila}}_2 
        e^{- i t}(x- i y)
        ) }{1+x^2+y^2}
         \,,
          \\
         X^{x}  &=  - |{\tila}_1|^2 y  - |{\tila}_2|^2 y
         +  \Re \big( i \bar{{\tila}}_1 {\tila}_2 e^{i t}
         (-1+ (x+i y)^2)
         \big)
         \,,
          \\
         X^{y}  &= |{\tila}_1|^2 x  + |{\tila}_2|^2 x
         +  \Re \big({\tila}_1 \bar{{\tila}}_2 e^{- i t}
         (1+(x- i y))^2\big)
        \,,
      \end{align}  
      when $\psi$ is given by \eqref{61124spin1} and 
      \begin{align}\label{61124kill2}
        X^{t} 
        &=  |{\tilb}_1|^2 + |{\tilb}_2|^2  +
        \frac{4 \Re({\tilb}_1 \bar{{\tilb}}_2 
        e^{ i t}(x- i y)
        ) }{1+x^2+y^2}
         \,,
          \\
         X^{x}  &= |{\tilb}_1|^2 y  + |{\tilb}_2|^2 y
         +  \Re \big( i {\tilb}_1 \bar{{\tilb}}_2 e^{i t}
         (-1+ (x-i y)^2) \big)
         \,,
          \\
         X^{y}  &= -|{\tilb}_1|^2 x  - |{\tilb}_2|^2 x
         - \Re \big({\tilb}_1 \bar{{\tilb}}_2 e^{ i t}
         (1+(x- i y))^2\big)
        \,,
      \end{align}  
      when $\psi$ is given by \eqref{61124spin2}. 
      
\begin{Remark}
 \label{R7XI24.1}
The imaginary  Killing spinor fields \eqref{61124spin1}-\eqref{61124spin2} are manifestly periodic on any circle $S^1$ in the spin frame underlying these equations.
Under the identification \eqref{trafo6924} of the half-space with  $\R^2$, 
 the orthonormal frame $\{\hat x \partial_{\hat x}, \hat x \partial_{\hat y}\}$ rotates by $2\pi$ with respect to the frame $\{\sqrt{1+r^2}\partial_r, r^{-1} \partial_\varphi\}$ when going around a circle of constant $r$. In the spin frame  associated  to 
   $\{\sqrt{1+r^2}\partial_r, r^{-1} \partial_\varphi\}$  
 the spinor fields change sign after such a trip. This is made explicit by the formulae for the imaginary  Killing spinors  in \cite{CoussaertHenneaux}. 
\qed
\end{Remark}  
%

%
%
%
	\bibliographystyle{amsplain}
	\bibliography{CCQW-minimal}

\end{document}